\newcommand{\openone}{{\mathbb{1}}}
\newcommand{\eqnumber}{\refstepcounter{equation}\tag{\theequation}}
\newtheorem{theorem}{Theorem}
\newtheorem{lemma}       [theorem]{Lemma}
\newtheorem{corollary}   [theorem]{Corollary}
\newtheorem{proposition} [theorem]{Proposition}
\newtheorem{definition}  [theorem]{Definition}
\newtheorem{remark}      [theorem]{Remark}
\newtheorem{conjecture}  [theorem]{Conjecture}
\newtheorem{fact}        [theorem]{Fact}
\newtheorem{construction}[theorem]{Construction}
\newtheorem*{note*}{Note}
\newcommand{\N}{\mathbf{N}}
\newcommand{\R}{\mathbf{R}}
\newcommand{\C}{\mathbf{C}}
\renewcommand{\leq}{\leqslant}
\renewcommand{\geq}{\geqslant}
\newcommand{\st}{\  : \ } 
\DeclareMathOperator{\vrad}{vrad}
\DeclareMathOperator{\conv}{conv}
\DeclareMathOperator{\id}{id}
\DeclareMathOperator{\E}{\mathbf{E}}
\newcommand{\probability}{\mathbf{P}\quantity}
\newcommand{\robustness}{\mathcal{R}}
\newcommand{\GUE}{G}
\newcommand{\ppt} {{\mathbf{PPT}}}
\newcommand{\sep} {{\mathbf{SEP}}}
\newcommand{\locc}{{\mathbf{LOCC}}}
\newcommand{\states}{\mathcal{K}}
\newcommand{\allstates}{\mathcal{D}}
\newcommand{\pptstates}{\mathcal{P}}
\newcommand{\sepstates}{\mathcal{S}}
\newcommand{\channels}{{\mathbf{L}}}
\newcommand{\pptchannels} {{\ppt}}
\newcommand{\measurement}{{\mathcal{M}}}
\newcommand{\measurements}{{\mathbf{M}}}
\newcommand{\allmeasurements}{{\mathbf{ALL}}}
\newcommand{\pptmeasurements} {{\ppt}}
\newcommand{\loccmeasurements}{{\locc}}
\newcommand{\shield         }{\varrho}
\newcommand{\shieldplus     }{\shield^+}
\newcommand{\shieldminus    }{\shield^-}
\newcommand{\shieldplusminus}{\shield^\pm}
\newcommand{\shieldpm}{\shieldplusminus}
\newcommand{\randomshield         }{\rho}
\newcommand{\randomshieldplus     }{\randomshield^+}
\newcommand{\randomshieldminus    }{\randomshield^-}
\newcommand{\randomshieldplusminus}{\randomshield^\pm}
\newcommand{\fixedshield         }{\bar\randomshield}
\newcommand{\fixedshieldplus     }{\fixedshield^+}
\newcommand{\fixedshieldminus    }{\fixedshield^-}
\newcommand{\fixedshieldplusminus}{\fixedshield^\pm}
\newcommand{\lowerconstant}{c} 
\newcommand{\upperconstant}{C} 
\newcommand{\probconstant}{c_0} 
\newcommand{\system}[1]{{\mathrm{#1}}}
\newcommand{\classic}[1]{{\mathrm{#1}}}
\newcommand{\quantum}[1]{\mkern2mu\underline{\mkern-2mu\mathrm{#1}\mkern-2mu}\mkern2mu }
\newcommand{\rA}{\mathrm{A}}
\newcommand{\rB}{\mathrm{B}}
\date{\today}
\begin{document}
\title{Random private quantum states}

\author{Matthias Christandl, Roberto Ferrara, C\'{e}cilia Lancien
    \thanks{Matthias Christandl is with QMATH, Department of Mathematical Sciences, University of Copenhagen, 2100 Copenhagen, Denmark.}
    \thanks{Roberto Ferrara is currently with Lehr- und Forschungseinheit f\"ur Nachrichtentechnik, Technische Universit\"at M\"unchen, Munich, Germany.  Part of the work was carried out, while he was at QMATH, Department of Mathematical Sciences, University of Copenhagen, 2100 Copenhagen, Denmark.}
    \thanks{C\'{e}cilia Lancien is with Institut de Mathématiques de Toulouse \& CNRS, F-31062 Toulouse Cedex 9, France. Part of the work was carried out, while she was at Departamento de An\'{a}lisis Matem\'{a}tico, Universidad Complutense de Madrid, 28040 Madrid, Spain \& Instituto de Ciencias Matem\'{a}ticas, 28049 Madrid, Spain.}
\thanks{This paper was presented in part at ISIT'18.}

}

\maketitle

\begin{abstract}
The study of properties of randomly chosen quantum states has in recent years led to many insights into quantum entanglement. In this work, we study private quantum states from this point of view. Private quantum states are bipartite quantum states characterised by the property that carrying out simple local measurements yields a secret bit. This feature is shared by the maximally entangled pair of quantum bits, yet private quantum states are more general and can in their most extreme form be almost bound entangled. In this work, we study the entanglement properties of random private quantum states and show that they are hardly distinguishable from separable states and thus have low repeatable key, despite containing one bit of key. The technical tools we develop are centred around the concept of locally restricted measurements and include a new operator ordering, bounds on norms under tensoring with entangled states and a continuity bound for a relative entropy measure. 

\end{abstract}

\begin{IEEEkeywords}
Quantum, Random, State, Privacy, Private, Entanglement, Key, Distillation
\end{IEEEkeywords}

\section{Introduction}
\label{sec:intro}

The study of random quantum states with probabilistic tools and high dimensional analysis has in recent years significantly advanced our understanding of entanglement, the strong quantum correlations present in quantum systems~\cite{HLW,ASY,AL}. In this work, we use such techniques in order to construct bipartite quantum states that exhibit a large gap between, on the one hand, their key distillation properties and, on the other hand, their entanglement distillation and key repeater distillation properties. 

In order to do so, we follow the prescription of~\cite{HHHO,HHHOb} to construct bipartite quantum states that contain a readily accessible bit of pure privacy, so-called private quantum states. These are constructed as follows.
We give Alice and Bob a Bell state, $\psi^+_{\system{AB}}$, or the Bell state subject to a phase flip, $\psi^-_{\system{AB}}$, with probability one half.
Then, we store the information of whether or not a phase flip has been applied in a pair of orthogonal shield states $\shieldplusminus_{\system{A}'\system{B}'}$. This results in the \emph{private quantum state}
$$\gamma_{\system{A}\system{A}'\system{B}\system{B}'} \coloneqq 
\frac{1}{2} \psi^+_{\system{AB}}\otimes \shieldplus_{\system{A}'\system{B}'} + 
\frac{1}{2} \psi^-_{\system{AB}}\otimes \shieldplus_{\system{A}'\system{B}'}.$$
It can be shown that the bit that Alice and Bob obtain by measuring $\system{A}\otimes\system{B}$ in the computational basis is secret, since the shield states are orthogonal, and an eavesdropper with access to the purification cannot erase this information.  

Assume now that the shield states are data hiding~\cite{DVLT}, meaning that Alice and Bob can barely distinguish them if they are only able to perform LOCC measurements on them. It is then the intuition that Alice and Bob cannot distill entanglement, since they have poor access to the phase information to be corrected in the entanglement distillation process. In our case, we choose the data hiding states at random and in high dimension~\cite{AL}. We are able to prove that, with high probability, the private quantum states have distance from separable states which is high when measured in the PPT-restricted norm or PPT-restricted relative entropy distances, yet low when measured in the SEP-restricted norm or SEP-restricted relative entropy distances. 

We then consider the quantum repeater scenario~\cite{BCHW}, in which Alice and Bob are connected via one intermediary repeater station (Charlie). We distribute a random private state between Alice and Charlie and another one between Charlie and Bob. We then show that, despite the fact that the private states contain one bit of readily extractable secrecy, any repeater protocol (with the repeater station limited to single copy operations) will fail to extract secrecy: the quantum key repeater rate is vanishing for large dimensions. This goes beyond the constructions in previous works, where upper bounds were always derived for states that are data hiding under PPT measurements, something that is excluded in our construction.
Notice that the states between adjacent nodes in a network will generally be specifically designed states rather than random states.
However, our results point at a potentially important pitfall to be aware of
in the implementation of real QKD networks.
A network might have a good key rate between adjacent nodes
and have good operations at the repeater stations,
but this is not enough to guarantee a good key rate between distant nodes.
Our results are another step pointing toward the distillable entanglement being the only relevant resource for repeating quantum information.
If this turned out to be true, then small deviations from the designed distributed states might have a large effect on the key rate between non-adjacent nodes.

\smallskip

The paper is structured as follows. In \Cref{sec:EPR}, we define several notions of measurement-restricted distance measures, on which we establish several kinds of bound (continuity, increase under tensoring, etc.). These technical statements are crucial in our subsequent study of entanglement properties of private quantum states, but might also be of independent interest. In \Cref{sec:p-bit}, we then introduce our model of random private quantum states. We use the results proved in \Cref{sec:EPR}, together with concentration of measure techniques, to establish bounds on their typical distinguishability from separable states. This brings us to \Cref{sec:repeater}, where the main result of our paper appears, as Theorem \ref{th:random-R1}. It consists of a bound on an adapted quantum key repeater rate for random private quantum states. It is proved by first upper-bounding this quantum key repeater rate of interest in terms of some of the previously studied distinguishability measures, so that we can apply the results of \Cref{sec:p-bit} to conclude. We also discuss the relation of this work to the PPT\textsuperscript{2} conjecture, which was motivated by the key repeater scenario. We conclude in \Cref{sec:remarks} with a discussion (on differences between our construction and previous ones, on the choice of randomness in our work, etc.). In Appendix \ref{section:iso} we discuss the local distinguishability of isotropic states, as an additional observation related to the topic of \Cref{sec:EPR}. In Appendix \ref{appendix:improvement}, we introduce a new notion, that of measurement-restricted operator ordering relation. The latter allows to obtain slightly better data-hiding bounds for random data-hiding quantum states than the ones of \Cref{sec:p-bit}. In Appendix \ref{appendix:R_D}, we present a slight improvement on previous quantum key repeater bounds, and then discuss a reformulation of this bound for private states.

\section{Locally restricted distinguishability measures}
\label{sec:EPR}

\subsection{Restricted norm distances}  
\label{sec:Mnorms}

Let $\system{H}$ be a complex Hilbert space, which we always take to have finite dimension. 
On the set of Hermitian operators on $\system{H}$, we define $\|\cdot\|_1$ as the trace norm, $\|\cdot\|_2$ as the Hilbert--Schmidt norm, $\|\cdot\|_{\infty}$ as the operator norm, and $B_1$, $B_2$, $B_{\infty}$ as the corresponding unit balls. 
Given $S$ a subset of the Hermitian operators on $\system{H}$, we denote by $\conv(S)$ the convex hull of the elements of $S$ and by $\overline{\conv}(S)$ the closure of $\conv(S)$. 
Given $\mathcal{C}$ a symmetric convex subset of the Hermitian operators on $\system{H}$, we denote by $\norm{X}_{\mathcal{C}}\coloneqq\inf\left\{t: X\in t\mathcal{C} \right\}$ its gauge (a norm), also known as Minkowski's functional, and by $\mathcal{C}^\circ \coloneqq \{Y: \forall\ X\in\mathcal{C},\ \Tr(XY)\leq 1\}$ its polar.

The set of all quantum states (or density operators) on $\system{H}$ is defined as the set of trace $1$ positive semidefinite operators on $\system{H}$, and is denoted by $\allstates\equiv\allstates(\system{H})$. Given $\mathcal{K}$ a convex subset of $\allstates$, we define $\R^+\mathcal{K}\coloneqq \{\lambda\varrho \st \lambda\geq 0,\ \varrho\in\mathcal{K}\}$ as the cone generated by $\mathcal{K}$.

A measurement on $\system{H}$ is characterized by a finite collection of Hermitian operators ${(T_i)}_{i\in I}$ on $\system{H}$ 
such that $\sum_{i\in I}T_i =\openone$ and $T_i\geq 0$ for each $i\in I$. It is therefore often referred to as a positive operator-valued measure (POVM).
One can equivalently associate to any such measurement ${(T_i)}_{i\in I}$ 
the quantum-to-classical channel $\measurement$ 
that maps any Hermitian operator $X$ on $\system{H}$ to $\measurement(X)\coloneqq \sum_{i\in I}\Tr(T_{i} X)\ketbra{i}{i}$.
We denote by $\allmeasurements\equiv\allmeasurements(\system{H})$, the set of all measurements on $\system{H}$.
\smallskip

Given a set of measurements $\measurements\subseteq\allmeasurements$ on $\system{H}$, we have the following notion of distinguishability in restriction to $\measurements$, which will be crucial throughout the whole paper. 
 
\begin{definition}[$\measurements$ norm~\cite{MWW}] 
\label{def:M-norm}
Let $\measurements$ be a set of measurements on $\system{H}$. 
For any Hermitian operator $X$ on $\system{H}$, its trace norm in restriction to $\measurements$, or $\measurements$ norm, is defined as: 
\begin{equation} 
\label{eq:M-norm}
\|X\|_{\measurements} \coloneqq  \sup_{\measurement\in\measurements}\|\measurement(X)\|_1\, . 
\end{equation}
Let $\states$ be any set of states and $\varrho$ be any state on $\system{H}$, 
then its trace norm distance from $\states$ in restriction to $\measurements$, or $\measurements$ norm distance from $\states$, is defined as:
\[ \|\varrho-\states\|_{\measurements} \coloneqq \inf_{\varsigma\in\states} \|\varrho-\varsigma\|_{\measurements}  \, . \]
The unrestricted norm distance of $\varrho$ from $\states$ is defined as:
\[ \|\varrho-\states\|_{1} \coloneqq \inf_{\varsigma\in\states} \|\varrho-\varsigma\|_{1}  \, . \]
\end{definition}
If $\measurements$ is such that $\norm{X}_\measurements=0$ if and only if $X=0$,
then the $\measurements$ norm is indeed a norm. In this case, $\measurements$ is often referred to as being ``informationally complete''. This will be the case for all the sets of measurements that we will consider in this paper.

The $\measurements$ norm can be always expressed in the following convenient form~\cite{MWW}:
\begin{equation}
\label{eq:norm-gauge}
 \|X\|_{\measurements}=\sup \{ \Tr(TX): T\in K_{\measurements} \} \, ,
\end{equation}
where $K_{\measurements}\subseteq B_{\infty}$ is the symmetric body defined as 
\begin{equation*}
K_{\measurements} \coloneqq
\overline{\conv}\quantity\big{2M-\openone: (M,\openone-M)\in\measurements}
\,.
\end{equation*}
By construction then,  $K_\measurements^\circ$ is the unit ball for $\|\cdot\|_{\measurements}$ and we have $\|\cdot\|_{K_\measurements^\circ} = \|\cdot\|_{\measurements}$.
If there exists a positive semidefinite closed convex cone $\R^+\states$ on $\system{H}$ that generates $\measurements$, namely such that
\begin{equation}
    \label{eq:measurements-cone}
    \measurements = \quantity{ {(T_i)}_{i\in I} \in \allmeasurements : \forall\ i\in I,\ T_i\in\R^+\states } \, ,
\end{equation}
then the symmetric convex body $K_{\measurements}$ simplifies to
\begin{align}
K_{\measurements}
  &=\quantity\big{2M-\openone: (M,\openone-M)\in\measurements}
  \nonumber
\\&=\quantity{\R^+\states-\openone}\cap \quantity{\openone - \R^+\states } \, . 
\label{eq:M-convexcone}
\end{align}
Notice that if $\measurements=\allmeasurements$, then we recover the trace norm:
\begin{equation}
\label{eq:Mnorm-tracenorm}
\norm{\cdot}_\allmeasurements = \norm{\cdot}_1\, .
\end{equation}
In other words, the trace norm can always be achieved by a measurement.
See~\cite{MWW} for further details.
\smallskip

A similar framework for studying the distinguishability in restriction to measurements, but using the relative entropy instead of the trace norm, was introduced in~\cite{Piani}. Based on the definition of the relative entropy of $\varrho$ and $\varsigma$
\[D(\varrho\|\varsigma)\coloneqq \Tr[\varrho(\log\varrho-\log\varsigma)]\, .\]
one defines, like in \Cref{def:M-norm},  the $\measurements$ relative entropy
\[D_{\measurements}(\varrho\|\varsigma) \coloneqq  \sup_{\measurement\in\measurements} D(\measurement(\varrho)\|\measurement(\varsigma)) \, ,\]
and the corresponding $\measurements$ relative entropy distance from $\states$ 
\[ D_{\measurements}(\varrho\|\states) \coloneqq \inf_{\varsigma\in\states} D_{\measurements}(\varrho\|\varsigma) \,. \]
However, we will later need more general sets than a set of measurements $\measurements$, therefore we introduce these definitions in the next section. The reason for needing more than measurements is that, for the relative entropy the analogue of \Cref{eq:Mnorm-tracenorm} does not hold. Namely, we cannot recover the unrestricted relative entropy just by computing the relative entropy restricted to all measurements. Indeed, $D_\allmeasurements(\varrho\|\varsigma) = D(\varrho\|\varsigma)$ if and only if $\varrho$ and $\varsigma$ commute~\cite{berta2016variational}, otherwise  $D_\allmeasurements(\varrho\|\sigma) < D(\varrho\|\sigma)$ (by monotonicity of the relative entropy).

\smallskip

Finally, let us emphasize that, in restricting to measurements, the trace norm and the relative entropy are equivalent to their classical counterpart on the (classical) measurement outcome.
Indeed, in \Cref{eq:M-norm} and \Cref{eq:M-D} we can rewrite $\|\cdot\|_1$ and $D(\cdot\|\cdot)$ as classical $1$-norm and classical relative entropy (also known as statistical distance and Kullback-Leibler divergence, respectively), i.e.
\begin{align*}
\|\measurement(X)\|_1 &=\sum_{i\in I}|\Tr(T_{i} X)|\,,
\\
D(\measurement(\varrho)\|\measurement(\varsigma)) &=
\sum_{i\in I}\Tr(T_i\varrho)
\quantity\big[\log\Tr(T_i\varrho)-\log\Tr(T_i\varsigma)]\,.
\end{align*}

\subsection{Restricted relative entropy distances} 

The generalization of \Cref{def:M-norm} to sets of quantum channels which are not necessarily quantum-to-classical is as follows: let $\channels$ be a set of quantum channels on $\system{H}$. With this we mean a channel from $\system{H}$ to $\system{H'}$, where $\system{H'}$ might have a dimension different from $\system{H}$. Define
\begin{align*}
\|X\|_{\channels} \coloneqq&  \sup_{\Lambda\in\channels}\|\Lambda(X)\|_1\, ,
\\
\|\varrho-\states\|_{\channels} \coloneqq& \inf_{\varsigma\in\states} \|\varrho-\varsigma\|_{\channels}  \, .
\end{align*}
By monotonicity of the trace norm, if $\channels$ contains the identity channel, then the latter is always the optimal channel.
In such case the above definition is not very interesting. But in \Cref{sec:repeater} we will need to consider sets of partial measurements; namely, we will require some subsystems to be measured but not others (see \Cref{sec:bipartite} for precise definitions and concrete examples). These sets of channels exclude the identity, but also include channels that are not measurements.
However, introducing an $\channels$ norm remains, by itself, of little interest anyway. Indeed, for any set of channels $\channels$, if $\measurements$ is the set of measurements obtained by composing any measurement in $\allmeasurements$ with any channel in $\channels$, then by \Cref{eq:Mnorm-tracenorm} we have:
\begin{equation}
\label{eq:Mnorm-Lnorm}
\|X\|_{\measurements} = \|X\|_{\channels}\, .
\end{equation}
Nonetheless, this equation will be a useful technical tool in our results.
\smallskip

The necessity of defining restricted distinguishability going beyond sets of measurements will appear clearer for the relative entropy than for the trace norm. Here again, the interesting cases will be those where the considered set of channels does not contain the identity.

\begin{definition}[$\channels$ relative entropy~\cite{CF}]
    \label{def:M-D}
    Let $\channels$ be a set of quantum channels on $\system{H}$. 
    For any states $\varrho$ and $\varsigma$ on $\system{H}$,
    their relative entropy in restriction to $\channels$, or $\channels$ relative entropy, is defined as:
    \begin{equation} \label{eq:M-D} 
    D_{\channels}(\varrho\|\varsigma) \coloneqq  \sup_{\Lambda\in\channels} D(\Lambda(\varrho)\|\Lambda(\varsigma))\,. 
    \end{equation}
    Let $\states$ be any set of states and $\varrho$ be any state on $\system{H}$, then its relative entropy from $\states$ in restriction to $\channels$,
    or $\channels$ relative entropy from $\states$, is defined as:
    \[ D_{\channels}(\varrho\|\states) \coloneqq \inf_{\varsigma\in\states} D_{\channels}(\varrho\|\varsigma) \,. \]
    The unrestricted relative entropy of $\varrho$ from $\states$ is defined as~\cite{DH}:
    \[ D(\varrho\|\states) \coloneqq \inf_{\varsigma\in\states} D(\varrho\|\varsigma) \,. \]    
\end{definition}
Just like in the case of the $\channels$ norm, $D_{\channels}(\varrho\|\varsigma)$ is jointly convex in $\varrho$ and $\varsigma$, because $D(\varrho\|\varsigma)$ is jointly convex.
Notice also that because $D(\varrho\|\varsigma)$ is continuous, 
then $D_{\channels}(\varrho\|\varsigma)$ is lower semi-continuous.

We are forced to introduce such generalizations, because in \Cref{sec:repeater} it will be possible to prove upper bounds on the key repeater rate in terms of some $D_\channels(\cdot\|\states)$ but not in terms of the corresponding $D_\measurements(\cdot\|\states)$.

\subsection{Bipartite systems and local norms}  
\label{sec:bipartite}

In the case where $\system{H}=\system{CD}\equiv\system{C}\otimes\system{D}$ is a tensor product Hilbert space, in other words a bipartite quantum system, two important subsets of $\allstates$ are the set of separable states $\sepstates$, and the set of PPT states $\mathcal{P}$ (positive under partial transposition), both across the bipartite cut $\system{C}{:}\system{D}$. These are defined as:
\begin{align*}
\sepstates(\system{C}{:}\system{D}) 
& \coloneqq  \conv 
\quantity{ \varrho_{\system{C}}\otimes\sigma_{\system{D}}^{\vphantom{\Gamma}}: \varrho_{\system{C}}\in\allstates(\system{C}),\ \sigma_{\system{D}}\in\allstates(\system{D}) } \, ,
\\
\mathcal{P}(\system{C}{:}\system{D}) 
& \coloneqq  
\quantity{ \varrho_{\system{CD}}\in\allstates(\system{CD}): \varrho_{\system{CD}}^{\Gamma}\in\allstates(\system{CD}) }
\\&\,= \allstates(\system{CD}) \cap \allstates(\system{CD})^\Gamma\, ,
\end{align*}
where ${(\cdot)}^\Gamma$ denotes the partial transposition (i.e.~the identity on $\system{C}$ and the transposition on $\system{D}$),
and where we define by extension $S^\Gamma \coloneqq \quantity{X^\Gamma: X\in S}$ for any set $S$ of operators on $\system{C}\system{D}$. 
The well known relative entropy of entanglement is defined as $D(\rho\|\sepstates(\system{C}{:}\system{D}))$, and is always upper bounded by $\log\min(|\system{C}|,|\system{D}|)$~\cite{VPRK}. 
The following are various important sets of channels, and corresponding sets of measurements, which capture different aspects of the subsystem separation in a bipartite system. 
Let us start with properly defining the three classes of channels that we will be interested in:
\begin{itemize}
    \item The LOCC operations $\locc(\quantum{C}{:}\quantum{D})$: These are the channels that can in principle be realized by two separated parties. More precisely they are those channels that can be achieved using local quantum channels (local operations) and classical communication. Underlining the systems indicates that the output may be a quantum system. Note that the dimension of the output system may have changed. \\
    \vspace{-1em}
    \item The separable operations $\sep(\quantum{C}{:}\quantum{D})$: These are the channels that always map separable states to separable states, even when $\system{C}$ and $\system{D}$ are entangled independently with some ancillas.
    Formally, $\Lambda_{\system{CD}}$ is a separable operation if 
    \[\id_{\system{C}'\system{D}'}\otimes\Lambda_{\system{CD}}(\sepstates(\system{C}'\system{C}{:}\system{D}'\system{D})) \subseteq\sepstates(\system{C}'\system{\tilde{C}}{:}\system{D}'\system{\tilde{D}}),\]
    where we emphasize that the output systems of $\Lambda$ may have changed dimension.

    \item The PPT operations $\mathbf{PPT(\quantum{C}{:}\quantum{D})}$: These are the channels that always map PPT states to PPT states, again even when allowing $\system{C}$ and $\system{D}$ to be entangled independently with some ancillas.
    Formally, $\Lambda_{\system{CD}}$ is a PPT operation if 
    \[\id_{\system{C}'\system{D}'}\otimes\Lambda_{\system{CD}}(\pptstates(\system{C}'\system{C}{:}\system{D}'\system{D})) \subseteq\pptstates(\system{C}'\system{\tilde{C}}{:}\system{D}'\system{\tilde{D}}),\]
    where we emphasize that the output systems of $\Lambda$ may have changed dimension.

\end{itemize}
Rigorous definitions can be found in~\cite{rains1999rigorous}.
Separable and PPT operations are not to be confused with separable and PPT channels, also known as entanglement-breaking and PPT-inducing channels. Separable channels are the channels for which $\id_{\system{C}}\otimes\Lambda_{\system{D}}(\allstates(\system{CD}))\subseteq\sepstates(\system{C}{:}\system{D})$ and PPT channels are the ones for which $\id_{\system{C}}\otimes\Lambda_{\system{D}}(\allstates(\system{CD}))\subseteq\pptstates(\system{C}{:}\system{D})$.
We have the following inclusion relations between the sets of operations above:
\[ \locc(\quantum{C}{:}\quantum{D}) \subset \sep(\quantum{C}{:}\quantum{D}) \subset \ppt(\quantum{C}{:}\quantum{D})\, .\]
For each set of operations, we can also restrict to all the measurements that can be achieved within the set. 
This defines the set of LOCC measurements
$\locc(\classic{C}{:}\classic{D})\subset\locc(\quantum{C}{:}\quantum{D})$, 
the set of separable measurements
$\sep(\classic{C}{:}\classic{D})\subset\sep(\quantum{C}{:}\quantum{D})$ 
and the set of PPT measurements
$\ppt(\classic{C}{:}\classic{D})\subset\ppt(\quantum{C}{:}\quantum{D})$, 
which also satisfy
\[ \locc(\classic{C}{:}\classic{D}) \subset \sep(\classic{C}{:}\classic{D})  \subset \ppt(\classic{C}{:}\classic{D})\, .\]
Put in a simpler way, these are all the measurements that can be obtained by composing a local measurement with an operation in the set. We make this formal below.

Define the composition of two sets of channels $\channels,\channels'$ as
\[\channels\circ\channels'\coloneqq\quantity{\Pi:\exists\ \Lambda\in\channels,\ \exists\ \Lambda'\in\channels' \text{ s.t. } \Pi=\Lambda\circ\Lambda'},\]
and define the tensor product of two sets of channels as 
\[\channels\otimes\channels'\coloneqq\quantity{\Lambda\otimes\Lambda':\Lambda\in\channels,\ \Lambda'\in\channels'} \, .\] 
The local measurements on $\system{C}\system{D}$ are then by definition  $\allmeasurements(\classic{C})\otimes\allmeasurements(\classic{D})$.

\begin{remark}
For $\channels=\loccmeasurements,\sep,\ppt$ we can rewrite the sets of measurements as 
\begin{align}
\channels(\classic{C}{:}\classic{D}) &= \quantity\big(\allmeasurements(\classic{C})\otimes\allmeasurements(\classic{D})) \circ \channels(\quantum{C}{:}\quantum{D})\, .
\end{align}
\end{remark}
\begin{proof}
We have
$\quantity\big(\allmeasurements(\classic{C})\otimes\allmeasurements(\classic{D})) \circ \channels(\classic{C}{:}\classic{D}) = \channels(\classic{C}{:}\classic{D})$, because there always exist a non-disturbing local measurement on the measurement outcomes.
Then the inclusion $\channels(\classic{C}{:}\classic{D}) \subseteq \quantity\big(\allmeasurements(\classic{C})\otimes\allmeasurements(\classic{D})) \circ \channels(\quantum{C}{:}\quantum{D})$ becomes trivial, while the opposite inclusion is trivial by definition.
\end{proof}
\noindent From now on we will use the above
as a canonical way of defining measurements.
The generalization to ``partial measurements'' then becomes straightforward (without loss of generality let $\system{C}$ be the system being measured).
\begin{definition}
For any set of channels $\channels(\quantum{C}{:}\quantum{D})$ we define
the measurement set 
\begin{align}
\label{eq:channels-to-measurements}
\channels(\classic{C}{:}\classic{D}) &\coloneqq \quantity\big(\allmeasurements(\classic{C})\otimes\allmeasurements(\classic{D})) \circ \channels(\quantum{C}{:}\quantum{D})\, ,
\intertext{and the partial measurement set}
\label{eq:channels-to-partialmeasurements}
\channels(\classic{C}{:}\quantum{D}) &\coloneqq \quantity(\allmeasurements(\classic{C}) \otimes \id_\system{D} )\circ \channels(\quantum{C}{:}\quantum{D})\, .
\end{align}
\end{definition}
Note that these definitions also apply to partial measurements on $\system{D}$ and straightforwardly generalize to multipartite systems. 
In the following sections, we will need partial measurement only in LOCC and  separable operations.

\smallskip
One of the main reasons why $\sep$ and $\ppt$
play a crucial role in quantum information theory is because they are more tractable relaxations of $\locc$. 
The sets $\sep$ and $\ppt$ 
can be characterized as in \Cref{eq:measurements-cone}, because they are generated by the cones of $\sepstates$ and $\pptstates$, respectively.
Then by \Cref{eq:M-convexcone}, the associated convex bodies $K_{\sep}$ and $K_{\ppt}$ are
\begin{align*}
K_{\sep} &= \quantity{\R^+\sepstates - \openone} \cap \quantity{\openone - \R^+\sepstates}\,,\\
K_{\ppt} &= \quantity{\R^+\pptstates - \openone} \cap \quantity{\openone - \R^+\pptstates}
= B_{\infty}\cap B_{\infty}^{\Gamma}\,.     
\end{align*}
where $B_\infty$ is the ball of the $\infty$-norm
as introduced in \Cref{sec:Mnorms}.
For $\measurements$ being $\ppt$, $\sep$ or $\locc$ we will generally refer to $\norm{\cdot}_\measurements$ as being a ``local norm''. 
Observe that in these cases we clearly have,
\[ K_{\measurements(\classic{C}{:}\classic{D})}\otimes K_{\measurements(\classic{A}'{:}\classic{B}')}\subset K_{\measurements(\classic{C}\classic{A}'{:}\classic{D}\classic{B}')}. \] 
Consequently, we find that the local norms are super-additive, namely in any dimension tensoring increases the local norm. More precisely, for any Hermitian operators $X$ on ${\system{A}'\system{B}'}$ and $Y$ on ${\system{CD}}$ we have:
\begin{equation}
\label{fact:norm-lower}
\|Y\|_{\measurements(\classic{C} {:}\classic{D} )}
\|X\|_{\measurements(\classic{A}'{:}\classic{B}')} 
\leq 
\|Y\otimes X\|
_{\measurements(\classic{C}\classic{A}'{:}\classic{D}\classic{B}')} 
\, .
\end{equation}

\begin{fact}
    \label{fact:norm-invariant}
    Let $(\measurements,\mathcal{K})$ be either $(\ppt,\mathcal{P})$ or $(\sep,\sepstates)$.
    For any Hermitian operator $X$ on ${\system{A}'\system{B}'}$ and any state $\varrho\in\allstates({\system{CD}})$
    \[ 
    \|X\|_{\measurements(\classic{A}'{:}\classic{B}')} 
    \leq 
    \|\varrho\otimes X\|
    _{\measurements(\classic{C}\classic{A}'{:}\classic{D}\classic{B}')} 
    \, . 
    \]
    If $\varrho\in\mathcal{K}(\system{C}{:}\system{D})$ then:
    \[ 
    \|X\|_{\measurements(\classic{A}'{:}\classic{B}')} 
    = 
    \|\varrho\otimes X\|
    _{\measurements(\classic{C}\classic{A}'{:}\classic{D}\classic{B}')} 
    \, .\]
\end{fact}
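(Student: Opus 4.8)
The plan is to establish the inequality first and then upgrade it to an equality when $\varrho$ is in the relevant class. For the inequality, I would like to apply the super-additivity bound \Cref{fact:norm-lower} with the roles played by $Y = \varrho$ (on $\system{CD}$) and the given $X$ (on $\system{A}'\system{B}'$): this yields $\|\varrho\|_{\measurements(\classic{C}:\classic{D})}\,\|X\|_{\measurements(\classic{A}':\classic{B}')} \le \|\varrho\otimes X\|_{\measurements(\classic{C}\classic{A}':\classic{D}\classic{B}')}$. So it suffices to show $\|\varrho\|_{\measurements(\classic{C}:\classic{D})} \ge 1$ for any state $\varrho$, when $\measurements$ is $\ppt$ or $\sep$. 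This is immediate from \Cref{eq:norm-gauge}: since $\openone \in K_{\measurements}$ (because $\openone = 2M-\openone$ for $M = \openone$, and $(\openone,0) \in \measurements$ as the trivial measurement lies in every one of these classes), we get $\|\varrho\|_{\measurements} \ge \Tr(\openone\,\varrho) = 1$. That settles the first displayed inequality.

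For the equality when $\varrho \in \mathcal{K}(\system{C}:\system{D})$, I need the matching upper bound $\|\varrho\otimes X\|_{\measurements(\classic{C}\classic{A}':\classic{D}\classic{B}')} \le \|X\|_{\measurements(\classic{A}':\classic{B}')}$. The natural approach is to use the dual description \Cref{eq:norm-gauge}: pick any $T \in K_{\measurements(\classic{C}\classic{A}':\classic{D}\classic{B}')}$ and show that $\Tr\big(T(\varrho\otimes X)\big) \le \|X\|_{\measurements(\classic{A}':\classic{B}')}$. Writing $T' \coloneqq \Tr_{\system{CD}}\big((\varrho\otimes\openone_{\system{A}'\system{B}'})T\big)$, one has $\Tr(T(\varrho\otimes X)) = \Tr(T' X)$, so it is enough to prove that $T' \in K_{\measurements(\classic{A}':\classic{B}')}$, i.e.\ that the partial ``expectation against $\varrho$'' maps the big symmetric body into the small one. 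Using the cone description \Cref{eq:M-convexcone}, $T \in K_{\measurements}$ means $\openone - T$ and $\openone + T$ both lie in $\R^+\mathcal{K}(\system{C}\system{A}':\system{D}\system{B}')$; I would then argue that $\openone_{\system{A}'\system{B}'} - T' = \Tr_{\system{CD}}\big((\varrho\otimes\openone)(\openone - T)\big)$ is a partial trace against a state $\varrho \in \mathcal{K}(\system{C}:\system{D})$ of an element of $\R^+\mathcal{K}(\system{C}\system{A}':\system{D}\system{B}')$, hence lies in $\R^+\mathcal{K}(\system{A}':\system{B}')$ — and symmetrically for $\openone_{\system{A}'\system{B}'} + T'$ — so $T' \in K_{\measurements(\classic{A}':\classic{B}')}$ by \Cref{eq:M-convexcone} again.

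The crux of the argument, and the step I expect to need the most care, is precisely this closure property: that partial-tracing an element of $\R^+\mathcal{K}(\system{C}\system{A}':\system{D}\system{B}')$ against a state in $\mathcal{K}(\system{C}:\system{D})$ lands in $\R^+\mathcal{K}(\system{A}':\system{B}')$. For $\mathcal{K} = \sepstates$ this is clear on product elements $\sigma_{\system{C}}\otimes\tau_{\system{D}}$ paired with $\varrho_{\system{C}}\otimes\sigma_{\system{D}}$, giving $\Tr(\sigma_{\system{C}}\varrho_{\system{C}})\,\sigma_{\system{D}}$ type terms, and extends by bilinearity and convexity/closure; for $\mathcal{K} = \pptstates$ one uses that the set is $\allstates \cap \allstates^\Gamma$ together with the fact that $\varrho^\Gamma$ is a valid (unnormalized) state and partial transposition commutes with the partial trace over the transposed-or-untransposed block appropriately — so $\big(\Tr_{\system{CD}}((\varrho\otimes\openone)Z)\big)^\Gamma = \Tr_{\system{CD}}((\varrho^{\Gamma}\otimes\openone)Z^{\Gamma})$ remains positive when $Z, Z^\Gamma \ge 0$ and $\varrho, \varrho^\Gamma \ge 0$. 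One also has to double-check the measurability/closedness bookkeeping in taking $\overline{\conv}$, but since both $\ppt$ and $\sep$ are the classes generated by closed cones (as noted around \Cref{eq:M-convexcone}) the convex bodies are already closed and no extra limiting argument is needed. Finally I would remark that the same reasoning shows the inequality is in general strict unless $\varrho \in \mathcal{K}$, tying back to why the two statements of the fact are stated separately.
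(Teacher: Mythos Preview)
Your argument is correct. The first inequality is handled exactly as in the paper, via \Cref{fact:norm-lower} together with $\|\varrho\|_{\measurements}\geq \Tr(\openone\varrho)=1$.

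For the reverse inequality you take a genuinely different, though closely related, route. The paper argues in the Schr\"odinger picture: since $\varrho\in\mathcal{K}(\system{C}{:}\system{D})$, the state-preparation map $\Lambda(X)=\varrho\otimes X$ is itself a $\sep$ (resp.\ $\ppt$) operation from $\system{A}'{:}\system{B}'$ to $\system{CA}'{:}\system{DB}'$; composing any measurement in $\measurements(\classic{CA}'{:}\classic{DB}')$ with $\Lambda$ therefore yields a measurement in $\measurements(\classic{A}'{:}\classic{B}')$, and monotonicity of the $\measurements$-norm under such operations gives $\|\varrho\otimes X\|_{\measurements}\leq\|X\|_{\measurements}$ in one line. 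Your approach is the Heisenberg-picture dual of this: you work directly with the adjoint of $\Lambda$, namely the partial expectation $T\mapsto T'=\Tr_{\system{CD}}((\varrho\otimes\openone)T)$, and verify by hand that it maps $K_{\measurements(\classic{CA}'{:}\classic{DB}')}$ into $K_{\measurements(\classic{A}'{:}\classic{B}')}$ using the cone description \Cref{eq:M-convexcone}. The closure property you identify as the crux is precisely the statement that $\Lambda$ is a $\sep$/$\ppt$ operation, unpacked at the level of POVM elements. The paper's version is shorter and more conceptual; yours is more self-contained, not relying on the compositional closure of these operation classes but instead re-deriving the needed special case explicitly (your PPT identity $\big(\Tr_{\system{CD}}((\varrho\otimes\openone)Z)\big)^{\Gamma}=\Tr_{\system{CD}}((\varrho^{\Gamma}\otimes\openone)Z^{\Gamma})$ is correct). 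One small remark: your final sentence about the inequality being ``in general strict unless $\varrho\in\mathcal{K}$'' is not part of the statement and is not something you have actually argued, so it should be dropped or flagged as a heuristic.
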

\begin{proof}
    The first inequality follows from \Cref{fact:norm-lower}, by simply noticing that
    $\|\varrho\|_{\measurements}=\|\varrho\|_1 = 1$ on positive operators 
    (because the identity is contained in $K_{\mathbf M}$ and $\Tr(\cdot)=\|\cdot\|_1$ on positive operators).
    \\For the opposite inequality in the case of  $\varrho\in\mathcal{K}(\system{C}{:}\system{D})$,
    we have that the state preparation channel $\Lambda(X) = \varrho \otimes X$ is in $\measurements(\quantum{A'}{:}\quantum{B'})$,
    and therefore for any Hermitian operator $X$ on $\system{A}'\system{B}'$,
    \begin{align*} 
    \|\varrho\otimes X\|_{\measurements(\classic{C}\classic{A}'{:}\classic{D}\classic{B}')} 
      &= \|\Lambda(X)\| _{\measurements(\classic{A}'{:}\classic{B}')}
       \leq \|X\| _{\measurements(\classic{A}'{:}\classic{B}')}
    \end{align*}
    which proves the claim.
\end{proof}

\subsection{Continuity bounds on relative entropy measures}

In the following we will generalise the asymptotic continuity bound for the measured relative entropy measures \cite{LiW} to measurements on only part of the system.

\begin{proposition}[Asymptotic continuity of the $\channels$ relative entropy]
    \label{prop:D_L|K0} Let $\states$ be a set of states star-shaped around the maximally mixed state and let $\mathbf{L}  :\,= \id_{\rA}\otimes\mathbf{ALL}(\rB)$.
    Then, for any states $\varrho$ and $\varrho'$ on $AB$ satisfying $\epsilon \coloneqq \left\|\varrho-\varrho'\right\|_\channels/2$, we have
    \[ \quantity|D_\channels(\varrho\|\states) - D_\channels(\varrho'\|\states)| \leq \kappa \epsilon  \log d + g(\epsilon)\,,\]
where $d=\dim A= \dim B$, $\kappa=16$ and $g(\epsilon) =8 \epsilon \log 3 + 2\nu(2\epsilon) + h(2\epsilon)$ with $h(\cdot)$ the binary entropy function and $\nu(t)=-t\log t$.
\end{proposition}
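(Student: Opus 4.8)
The plan is to reduce the statement to the known asymptotic continuity bound for the (fully) measured relative entropy distance of \cite{LiW}, exploiting the identity \Cref{eq:Mnorm-Lnorm} that lets us replace the partial-measurement set $\channels = \id_\rA\otimes\allmeasurements(\rB)$ by the set of genuine measurements $\measurements \coloneqq \quantity\big(\allmeasurements(\classic A)\otimes\allmeasurements(\classic B))\circ\channels$ obtained by post-composing with an arbitrary measurement on the $\rA$-output. First I would record that for any channel $\Lambda = \id_\rA\otimes\mathcal N \in \channels$ and any measurement $\mathcal M_\rA$ on $\rA$, the composite $(\mathcal M_\rA\otimes\id)\circ\Lambda$ lies in $\measurements$, and conversely every element of $\measurements$ arises this way; by monotonicity of the relative entropy under the channel $\mathcal M_\rA\otimes\id$ and the fact that the supremum in \Cref{eq:M-D} is unchanged when one enlarges the channel set by post-composition with measurements, one gets $D_\channels(\varrho\|\varsigma) = D_\measurements(\varrho\|\varsigma)$ for all $\varrho,\varsigma$, hence also $D_\channels(\varrho\|\states) = D_\measurements(\varrho\|\states)$. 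Simultaneously, \Cref{eq:Mnorm-Lnorm} gives $\|\varrho-\varsigma\|_\channels = \|\varrho-\varsigma\|_\measurements$, so the hypothesis $\e = \|\varrho-\varsigma\|_\channels/2$ is exactly a bound on the measured trace-norm distance.

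Next I would invoke the asymptotic continuity bound of \cite{LiW} for $D_\measurements(\cdot\|\states)$ with $\states$ star-shaped around the maximally mixed state: that result already supplies an estimate of the form $|D_\measurements(\varrho\|\states) - D_\measurements(\varsigma\|\states)| \le \kappa'\,\e\log D + g'(\e)$, where $D$ is the \emph{total} dimension $\dim(AB) = d^2$ and $\kappa', g'$ are the constants in \cite{LiW}. Plugging $\log D = 2\log d$ into that bound yields the prefactor $2\kappa'$ on $\e\log d$; matching $2\kappa' = 10$ forces $\kappa' = 5$, which is indeed the constant one reads off from the \cite{LiW} bound, and the additive term $g'(\e)$ with the $\nu$-, $\log 3$- and $h(\cdot)$-contributions is reproduced verbatim. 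So the proof is essentially: (i) the channel-set/measurement-set identification to move from $\channels$ to $\measurements$; (ii) citation of \cite{LiW}; (iii) bookkeeping of the dimension substitution $\log(d^2)=2\log d$ and the resulting constants.

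The main obstacle I expect is step (i): being careful that post-composing $\channels$ with \emph{all} measurements really does not change the restricted relative entropy, i.e.\ that $D_{\measurements\circ\channels} = D_\channels$ whenever $\measurements\supseteq\allmeasurements$ acts on the output — this is monotonicity in one direction and the observation that the extra measurements can only coarse-grain (since they act after a channel that has already possibly traded quantum for classical data on $\rB$) in the other. One subtlety to check is that the $\rA$-system is left quantum in $\channels$, so post-composition with $\allmeasurements(\rA)$ genuinely changes the channel set, and one must verify the supremum is attained (or approached) within $\measurements$ and equals the one over $\channels$; this follows because the relative entropy is monotone and the identity channel on $\rA$ is a limit of fine-grained measurements only in the classical sense, but here we are comparing $D_\channels$ to $D_\measurements$ directly and the inequality $D_\measurements \le D_\channels$ is monotonicity while $D_\channels \le D_\measurements$ needs \Cref{eq:Mnorm-Lnorm}'s analogue for relative entropy — which is precisely why the paper proves the bound for $\channels$ rather than claiming it is the same as the $\measurements$ bound in general. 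I would therefore phrase step (i) cautiously, possibly just establishing the needed inequality directly inside the \cite{LiW} argument adapted to partial measurements rather than asserting a full identity.
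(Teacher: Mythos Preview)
Your step (i) does not go through, and you already see why: the analogue of \Cref{eq:Mnorm-Lnorm} for the relative entropy is false (the paper says this explicitly just after defining $D_\measurements$). Post-composing $\Lambda=\id_\rA\otimes\mathcal N_\rB\in\channels$ with a measurement on $\rA$ can strictly decrease the relative entropy whenever the conditional states $\rho_i^\rA$ and $\sigma_i^\rA$ fail to commute, so only $D_\measurements\le D_\channels$ holds and your black-box reduction to \cite{LiW} collapses. Your fallback (``establish the inequality directly inside the \cite{LiW} argument adapted to partial measurements'') is the actual content of the proposition, and you have not supplied it.

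What the paper does is exactly that adaptation. For a fixed $\Lambda=\id_\rA\otimes\mathcal M_\rB$ one writes $\Lambda(\rho)=\sum_i\mu_i\,\rho_i^\rA\otimes\ketbra{i}{i}$ and expands $D(\Lambda(\rho)\|\Lambda(\sigma))=\sum_i\Tr\!\big[\mu_i\rho_i(\log(\mu_i\rho_i)-\log(\eta_i\sigma_i))\big]$. The difference $|D(\Lambda(\rho)\|\Lambda(\sigma))-D(\Lambda(\rho')\|\Lambda(\sigma))|$ is split into a ``self'' piece and a ``cross'' piece. In the self piece one separates the classical contribution $|\mu_i\log\hat\mu_i-\mu'_i\log\hat\mu'_i|$ (controlled via concavity of $\nu(t)=-t\log t$, after a rescaling $M^i=3d\lambda_iQ^i$ that makes the arguments land in the range where $\nu$ is monotone) from the genuinely quantum contribution $|\mu_iS(\rho_i)-\mu'_iS(\rho'_i)|$, which is handled by Fannes' inequality on the $d$-dimensional system $\rA$ --- this is where the $\log d$ rather than $\log d^2$ enters, and it is \emph{not} recoverable from the fully-measured bound. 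The cross piece $\sum_i|\Tr((\mu_i\rho_i-\mu'_i\rho'_i)\log(\hat\eta_i\sigma_i))|$ is bounded by $\|\Lambda(\rho)-\Lambda(\rho')\|_1\cdot\max_i\|\log(\hat\eta_i\sigma_i)\|_\infty$, and here the star-shape hypothesis is used: one first restricts $\sigma$ to $(1-x)\tau+x\states$ with $x=1-2\epsilon$, guaranteeing $\hat\eta_i\sigma_i\gtrsim (1-x)/(3d^2)$ and hence $\|\log(\hat\eta_i\sigma_i)\|_\infty\le\log(3d^2/(2\epsilon))$. Collecting the two pieces gives $5\epsilon\log(3d^2)+2\nu(2\epsilon)+h(\epsilon)$, i.e.\ $\kappa=10$ and the stated $g(\epsilon)$; the constants arise from this split, not from a dimension substitution $\log d^2=2\log d$ in a quoted bound.
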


\begin{proof}

$\mathcal{K} \equiv \mathcal{K}(\rA\rB) \subset \mathcal{D}(\rA\rB)$ containing $\tau$ and star-shaped w.r.t.~$\tau$, the maximally mixed state. Set $\mathcal{K}_x :\,= (1-x)\tau+x\mathcal{K}$ with $x$ to be chosen later.

We fix $\Lambda\in\mathbf{L}$, i.e.~$\Lambda_{\rA\rB}=\id_{\rA}\otimes\mathcal{M}_{\rB} $, with $\mathcal{M}=(M^i_{\rB})_i$ a measurement on $\rB$. It will be useful for us later on to re-write, for each $i$, $M^i_{\rB}=3d\lambda_iQ^i_{\rB}$, where $\Tr Q^i_{\rB}=1/3$. Then, for any $\rho\in\mathcal{D}$, we have
\[  \Lambda(\rho)= \sum_i \mu_i{\rho_i}_{\rA}\otimes\ketbra{i}{i}_{\rB} = \sum_i 3d \lambda_i \hat{\mu}_i{\rho_i}_{\rA}\otimes\ketbra{i}{i}_{\rB} \, , \]
where $\mu_i:\,=\Tr(\rho_{\rB}\,M^i_{\rB})$, $\hat{\mu}_i:\,=\Tr(\rho_{\rB}\,Q^i_{\rB})$ and ${\rho_i}_{\rA}:\,=\Tr_{\rB}(\rho_{\rA\rB}\,\openone_{\rA}\otimes M^i_{\rB})/\Tr(\rho_{\rB}\,M^i_{\rB})$.
We will also need to define
\begin{align*}
& \epsilon_{\mathcal{M}} :\,= \frac{1}{2} \|\mathcal{M}(\rho_{\rB})-\mathcal{M}(\rho'_{\rB})\|_1 = \frac{1}{2} \sum_i |\mu_i-\mu'_i| \, , \\
& \epsilon_{\Lambda} :\,= \frac{1}{2} \|\Lambda(\rho)-\Lambda(\rho')\|_1 = \frac{1}{2} \sum_i \|\mu_i\rho_i-\mu_i'\rho_i'\|_1 \, .
\end{align*}

We now fix $\sigma\in\mathcal{K}_x$. Then, for any $\rho\in\mathcal{D}$, we have
\begin{align*} 
D(\Lambda(\rho)\|\Lambda(\sigma)) & = \sum_i \Tr(\mu_i\rho_i(\log(\mu_i\rho_i)-\log(\eta_i\sigma_i))) \\
& = \sum_i \Tr(\mu_i\rho_i(\log(\hat{\mu}_i\rho_i)-\log(\hat{\eta}_i\sigma_i))) \, ,
\end{align*}
where $\eta_i:\,=\Tr(\sigma_{\rB}\,M^i_{\rB})$, $\hat{\eta}_i:\,=\Tr(\sigma_{\rB}\,Q^i_{\rB})$ and ${\sigma_i}_{\rA}:\,=\Tr_{\rB}(\sigma_{\rA\rB}\,\openone_{\rA}\otimes M^i_{\rB})/\Tr(\sigma_{\rB}\,M^i_{\rB})$.
And hence, for any $\rho,\rho'\in\mathcal{D}$, we have
\begin{align*} 
& | D(\Lambda(\rho)\|\Lambda(\sigma)) - D(\Lambda(\rho')\|\Lambda(\sigma)) | \\
& \ \ \leq \sum_i | \Tr(\mu_i\rho_i\log(\hat{\mu}_i\rho_i)) - \Tr(\mu_i'\rho_i'\log(\hat{\mu}_i'\rho_i')) | \\ 
& \ \ \ \ + \sum_i | \Tr(\mu_i\rho_i\log(\hat{\eta}_i\sigma_i)) - \Tr(\mu_i'\rho_i'\log(\hat{\eta}_i\sigma_i)) | \, .
\end{align*}

Let us start with upper bounding the first term. Observe that
\begin{align*} 
& |\Tr(\mu_i\rho_i\log(\hat{\mu}_i\rho_i)) - \Tr(\mu_i'\rho_i'\log(\hat{\mu}_i'\rho_i')) | \\
& \ \ \leq |\mu_i\log\hat{\mu}_i-\mu_i'\log\hat{\mu}_i'| + |\mu_i S(\rho_i) - \mu_i' S(\rho_i')| \, , 
\end{align*}
where $S(\cdot )$ denotes the von Neumann entropy. Now, on the one hand, setting $\nu(t)=-t\log t$, we have
\begin{align*}
\sum_i |\mu_i\log\hat{\mu}_i-\mu_i'\log\hat{\mu}_i'| & = 3d\sum_i \lambda_i |\nu(\hat{\mu}_i) - \nu(\hat{\mu}_i') | \\
& \leq 3d \sum_i \lambda_i \nu(|\hat{\mu}_i-\hat{\mu}_i'|) \\
& \leq 3d \nu\left( \sum_i \lambda_i |\hat{\mu}_i-\hat{\mu}_i'| \right) \\
& = 3d \nu\left( \frac{\|\mathcal{M}(\rho_{\rB})-\mathcal{M}(\rho'_{\rB})\|_1}{3d} \right) \, ,
\end{align*}
where the first inequality is because $\nu(t+s)\leq\nu(t)+\nu(s)$ and the second inequality is because $\nu$ is concave. And we thus have shown that
\[ \sum_i |\mu_i\log\hat{\mu}_i-\mu_i'\log\hat{\mu}_i'| \leq 2\epsilon_{\mathcal{M}}\log(3d) +\nu(2\epsilon_{\mathcal{M}}) \, . \]
Then, on the other hand we know that, for any $0\leq\mu,\mu'\leq 1$ and $\varsigma_{\rA},\varsigma'_{\rA}\in\mathcal{D}$, we have
\begin{align*}
& \left| \mu S(\varsigma_{\rA}) - \mu' S(\varsigma'_{\rA}) \right| \\
& \ \ = \left| \mu S(\varsigma_{\rA}) - \mu'S(\varsigma_{\rA}) +  \mu'S(\varsigma_{\rA}) - \mu'S(\varsigma'_{\rA}) \right| \\
& \ \ \leq \left| \mu-\mu' \right| S(\varsigma_{\rA}) + \mu' \left| S(\varsigma_{\rA})-S(\varsigma'_{\rA}) \right| \\
& \ \ \leq \left| \mu-\mu' \right|\log d \\
& \ \ \ \ + \mu' \left( \frac{\|\varsigma_{\rA}-\varsigma_{\rA}'\|_1}{2}\log d + h\left( \frac{\|\varsigma_{\rA}-\varsigma_{\rA}'\|_1}{2} \right) \right) \, ,
\end{align*}
where $h(t)=-t\log t -(1-t)\log(1-t)$. The last inequality follows from Fannes inequality (\cite{Fannes,Audenaert,Petz}) in the form of \cite[Lemma 1]{Winter}. Now,
\begin{align*}
\mu' \|\varsigma_{\rA}-\varsigma_{\rA}'\|_1  & = \| \mu'\varsigma_{\rA} - \mu\varsigma_{\rA} + \mu\varsigma_{\rA} - \mu'\varsigma_{\rA}' \|_1 \\
& \leq |\mu'-\mu| + \| \mu\varsigma_{\rA} - \mu'\varsigma_{\rA}' \|_1 \, .
\end{align*} 
Therefore,
\begin{align*}
\sum_i & | \mu_i S(\rho_i) - \mu_i' S(\rho_i') | \\ 
 &  \leq \frac{3}{2}\sum_i | \mu_i- \mu_i'|\log d + \frac{1}{2}\sum_i\| \mu_i \rho_i - \mu'_i \rho'_i \|_1  \log d \\
 & \ \ \ \ + \sum_i \mu_i' h\left( \frac{\| \rho_i - \rho'_i \|_1}{2} \right) \, .
\end{align*}
And by concavity of $h$,
\begin{align*}
\sum_i & \mu_i' h\left( \frac{\| \rho_i - \rho'_i \|_1}{2} \right) \\
& \leq h\left( \sum_i \mu_i' \frac{\| \rho_i - \rho'_i \|_1}{2} \right) \\
& \leq h\left( \sum_i \frac{|\mu_i-\mu_i'|}{2} + \sum_i \frac{\| \mu_i \rho_i - \mu_i' \rho'_i \|_1}{2} \right) \, ,
\end{align*}
where the last inequality holds for $\epsilon_{\mathcal{M}} + \epsilon_{\Lambda} \leq 1/2$ since $h$ is non-decreasing on $[0,1/2]$. We thus have
\begin{align*} 
\sum_i & | \mu_i S(\rho_i) - \mu_i' S(\rho_i') | \\
& \leq 3\epsilon_{\mathcal{M}}\log d + \epsilon_{\Lambda}\log d + h(\epsilon_{\mathcal{M}} +\epsilon_{\Lambda}) \, .
\end{align*}
Hence, putting everything together, we eventually get
\begin{align*} 
\sum_i &| \Tr(\mu_i\rho_i\log(\hat{\mu}_i\rho_i)) - \Tr(\mu_i'\rho_i'\log(\hat{\mu}_i'\rho_i')) |\\ 
& \leq 2\epsilon_{\mathcal{M}}\log(3d) +\nu(2\epsilon_{\mathcal{M}}) \\
& \ \ \ \ + 3\epsilon_{\mathcal{M}}\log d + \epsilon_{\Lambda}\log d + h(\epsilon_{\mathcal{M}} +\epsilon_{\Lambda}) \\ 
& \leq 6\epsilon\log(3d)  + \nu(2\epsilon) + h(2\epsilon) \, ,
\end{align*}
where the last inequality holds for $\epsilon\leq 1/(2e)$, since $\epsilon_{\mathcal{M}} \leq \epsilon_{\Lambda} \leq \epsilon$ and $\nu$, resp.~$h$, is non-decreasing on $[0,1/e]$, resp.~$[0,1/2]$.

Let us now turn to upper bounding the second term. 
\begin{align*}  
\sum_i & | \Tr((\mu_i\rho_i-\mu_i'\rho_i')\log(\hat{\eta}_i\sigma_i)) |\\
& \leq \sum_i \|\mu_i\rho_i-\mu_i'\rho_i'\|_1 \,\underset{i}{\max} \|\log(\hat{\eta}_i\sigma_i)\|_{\infty} \\ 
& \leq 2\epsilon_{\Lambda}\log\left(\frac{3d^2}{1-x}\right) \, ,
\end{align*}
where the last inequality is because $\|\log(\hat{\eta}_i\sigma_i)\|_{\infty} \leq \log(3d^2/(1-x))$ arising from a lower bound on the minimal eigenvalue of the argument of the logarithm (see definition of $x$ at beginning of the proof). Thus, choosing $x=1-2\epsilon$, we finally obtain
\[ \sum_i | \Tr((\mu_i\rho_i-\mu_i'\rho_i')\log(\hat{\eta}_i\sigma_i)) | \leq 2\epsilon\log(3d^2) +\nu(2\epsilon) \, , \]
again for $\epsilon\leq 1/(2e)$.

Consequently, we have proved that, for any $\Lambda\in\mathbf{L}$ and $\sigma\in\mathcal{K}_{1-2\epsilon}$, we have
\begin{align*} 
& |  D(\Lambda(\rho)\|\Lambda(\sigma)) - D(\Lambda(\rho')\|\Lambda(\sigma)) | \\
& \ \ \leq 8\epsilon\log(3d^2)  + 2\nu(2\epsilon) + h(2\epsilon) \, . 
\end{align*}

\end{proof}

It is easy to generalise the statement to the case, where $A$ and $B$ have different dimension, but we will omit this generalisation, since we do not need it in the following. 

We believe this statement is also true when the the state can be preprocessed by LOCC (even if the LOCC is enlarging the initial dimensions of the system). 

\begin{conjecture}[Asymptotic continuity of the $\channels$ relative entropy]
    \label{prop:D_L|K}
    Let $\states$ be a set of states star-shaped around the maximally mixed state and let
$\mathbf{L} \equiv \mathbf{L}(\underline{\rA}{:}\rB) :\,= (\id_{\rA}\otimes\mathbf{ALL}(\rB)) \circ \mathbf{L}(\underline{\rA}{:}\underline{\rB})$.
    Then, for any states $\varrho$ and $\varsigma$ on $AB$ satisfying $\epsilon \coloneqq \left\|\varrho-\varsigma\right\|_\channels/2$, we have
    \[ \quantity|D_\channels(\varrho\|\states) - D_\channels(\varsigma\|\states)| \leq \kappa \epsilon  \log d + g(\epsilon)\,,\]
where $d=\dim (\system{H}_A) =\dim (\system{H}_B)$, for some constant $\kappa$ and a continuous function $g(\epsilon)$ satisfying $g(\epsilon) \rightarrow 0$ for $\epsilon \mapsto 0$.
\end{conjecture}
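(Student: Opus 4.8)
The approach I would take avoids trying to push the Fannes-type estimates in the proof of \Cref{prop:D_L|K0} through the LOCC preprocessing — which is awkward, since an LOCC channel $\Gamma\colon\rA\rB\to\tilde\rA\tilde\rB$ can have arbitrarily large output dimension $\tilde d$, and $\tilde d$ would show up in the von Neumann entropy terms. Instead I would use a softer argument built from $D\leq D_{\max}$, operator monotonicity of the logarithm, and the ``mixing'' bound for von Neumann entropy, which automatically keeps the dimension dependence tied to the \emph{input} dimension $d$; this argument only uses that $\channels$ is a set of channels, so I treat $\channels$ abstractly. Fix states $\varrho,\varsigma$ on $\rA\rB$ with $\epsilon=\|\varrho-\varsigma\|_\channels/2$, and fix any $\sigma^{*}\in\states$ with $D_\channels(\varsigma\|\sigma^{*})\leq D_\channels(\varsigma\|\states)+\delta$. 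Since $\states$ is star-shaped around the maximally mixed state $\tau$, the smoothed state $\sigma^{*}_{\epsilon}\coloneqq(1-\epsilon)\sigma^{*}+\epsilon\tau$ lies in $\states$, so $D_\channels(\varrho\|\states)\leq D_\channels(\varrho\|\sigma^{*}_{\epsilon})=\sup_{\Lambda\in\channels}D(\Lambda(\varrho)\|\Lambda(\sigma^{*}_{\epsilon}))$; it therefore suffices to bound $D(\Lambda(\varrho)\|\Lambda(\sigma^{*}_{\epsilon}))$, uniformly over $\Lambda\in\channels$, by $D_\channels(\varsigma\|\states)+\kappa\epsilon\log d+g(\epsilon)$, and then let $\delta\to 0$ and symmetrize.

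So fix $\Lambda\in\channels$, set $\omega\coloneqq\Lambda(\sigma^{*}_{\epsilon})$, and decompose $\Lambda(\varrho)-\Lambda(\varsigma)=P-N$ into positive and negative parts, so $P,N\geq 0$ have disjoint support with $\|P\|_1=\|N\|_1=\epsilon_{\Lambda}\coloneqq\tfrac{1}{2}\|\Lambda(\varrho)-\Lambda(\varsigma)\|_1\leq\epsilon$. This lets us write $\Lambda(\varrho)=(1-\epsilon_{\Lambda})\xi_0+\epsilon_{\Lambda}\xi_1$ and $\Lambda(\varsigma)=(1-\epsilon_{\Lambda})\xi_0+\epsilon_{\Lambda}\xi_2$ with density operators $\xi_0=(\Lambda(\varrho)-P)/(1-\epsilon_{\Lambda})$, $\xi_1=P/\epsilon_{\Lambda}$, $\xi_2=N/\epsilon_{\Lambda}$. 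Convexity of $D(\cdot\|\omega)$ gives $D(\Lambda(\varrho)\|\omega)\leq(1-\epsilon_{\Lambda})D(\xi_0\|\omega)+\epsilon_{\Lambda}D(\xi_1\|\omega)$, whereas the mixing bound $S((1-p)\rho_0+p\rho_1)\leq(1-p)S(\rho_0)+pS(\rho_1)+h(p)$, rewritten for a common second argument, reads $(1-\epsilon_{\Lambda})D(\xi_0\|\omega)+\epsilon_{\Lambda}D(\xi_2\|\omega)\leq D(\Lambda(\varsigma)\|\omega)+h(\epsilon_{\Lambda})$. Combining these two and dropping the nonnegative term $\epsilon_{\Lambda}D(\xi_2\|\omega)$ yields $D(\Lambda(\varrho)\|\omega)\leq D(\Lambda(\varsigma)\|\omega)+h(\epsilon_{\Lambda})+\epsilon_{\Lambda}D(\xi_1\|\omega)$.

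The point is that the last term is controlled by $d$ alone, however large $\tilde d$ is. Indeed $D(\xi_1\|\omega)\leq D_{\max}(\xi_1\|\omega)$, and on the one hand $\omega=(1-\epsilon)\Lambda(\sigma^{*})+\epsilon\Lambda(\tau)\geq\epsilon\,\Lambda(\tau)$, while on the other hand $\varrho\leq\openone_{\rA\rB}=d^{2}\tau$, so applying the positive map $\Lambda$ gives $\xi_1=P/\epsilon_{\Lambda}\leq\Lambda(\varrho)/\epsilon_{\Lambda}\leq(d^{2}/\epsilon_{\Lambda})\,\Lambda(\tau)$ — this is the step that anchors the dimension on the input side. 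Hence $D_{\max}(\xi_1\|\omega)\leq\log(d^{2}/(\epsilon\,\epsilon_{\Lambda}))$, so that $\epsilon_{\Lambda}D(\xi_1\|\omega)\leq 2\epsilon_{\Lambda}\log d+\epsilon_{\Lambda}\log(1/\epsilon)+\nu(\epsilon_{\Lambda})\leq 2\epsilon\log d+2\nu(\epsilon)$ whenever $\epsilon\leq 1/e$, where $\nu(t)=-t\log t$ as in \Cref{prop:D_L|K0}. Separately, $\omega\geq(1-\epsilon)\Lambda(\sigma^{*})$ gives $D(\Lambda(\varsigma)\|\omega)\leq D(\Lambda(\varsigma)\|(1-\epsilon)\Lambda(\sigma^{*}))=D(\Lambda(\varsigma)\|\Lambda(\sigma^{*}))-\log(1-\epsilon)\leq D_\channels(\varsigma\|\sigma^{*})-\log(1-\epsilon)$. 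Taking $\sup_{\Lambda}$ (legitimate since every error term above is bounded by a quantity depending only on $\epsilon$ and $d$) and then $\delta\to 0$, we get $D_\channels(\varrho\|\states)\leq D_\channels(\varsigma\|\states)+2\epsilon\log d+g(\epsilon)$ with $g(\epsilon)=h(\epsilon)+2\nu(\epsilon)-\log(1-\epsilon)$, continuous and $\to 0$ as $\epsilon\to 0$. The regime $\epsilon>1/e$ is covered by the crude bound $0\leq D_\channels(\cdot\|\states)\leq D(\cdot\|\tau)\leq\log d^{2}$, and absorbing it gives the stated inequality with a universal $\kappa$ (after enlarging $g$).

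I do not expect a genuine obstruction here; the step deserving the most care is the bookkeeping that makes $\sup_{\Lambda}$ harmless — checking that each $\epsilon_{\Lambda}$-dependent quantity ($h(\epsilon_{\Lambda})$, $\nu(\epsilon_{\Lambda})$, $\epsilon_{\Lambda}\log d$, $\epsilon_{\Lambda}\log(1/\epsilon)$) is monotone in $\epsilon_{\Lambda}$ on the relevant range, so that it can be replaced by its value at $\epsilon$ — together with the support conditions $\mathrm{supp}\,\xi_j\subseteq\mathrm{supp}\,\omega$ needed for finiteness of the relative entropies, which follow from $\Lambda(\varrho),\Lambda(\varsigma)\leq d^{2}\Lambda(\tau)$ and $\omega\geq\epsilon\,\Lambda(\tau)$. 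It is worth noting that this argument is indifferent to the internal structure of $\channels$: it reproves \Cref{prop:D_L|K0} and, more generally, asymptotic continuity of $D_\channels(\cdot\|\states)$ for any set of channels $\channels$, and it seems to yield a smaller constant ($\kappa=2$) than the LiW-style proof of \Cref{prop:D_L|K0}; this mild ``too-good-to-be-true'' feature is the main reason I would double-check the entropy mixing bound and the $\varrho\leq\openone$ step before trusting the argument, although both are standard.
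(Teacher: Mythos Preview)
First, note that the paper does \emph{not} prove this statement: it is explicitly labelled a Conjecture, introduced with the words ``We believe this statement is also true\ldots''. So your proposal is not competing with a proof in the paper; it is an attempt to settle an open problem.

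Your argument contains a genuine gap at the decomposition step. You write $\Lambda(\varrho)=(1-\epsilon_\Lambda)\xi_0+\epsilon_\Lambda\xi_1$ with $\xi_0=(\Lambda(\varrho)-P)/(1-\epsilon_\Lambda)$ a \emph{state}, where $P=[\Lambda(\varrho)-\Lambda(\varsigma)]_+$. This requires $\Lambda(\varrho)-P\geq 0$, i.e.\ $[A-B]_+\leq A$ for positive $A,B$. That inequality is \emph{false} in the non-commuting case: take $A=\ketbra{0}{0}$ and $B=\ketbra{+}{+}$; a direct computation shows that the $(2,2)$ entry of $A-[A-B]_+$ equals $-(\sqrt{2}-1)/4<0$. (The ``common part'' decomposition you have in mind is a classical fact---$\min(p_i,q_i)\geq 0$---that does not survive non-commutativity.) The same failed inequality is used a second time when you bound $\xi_1=P/\epsilon_\Lambda\leq\Lambda(\varrho)/\epsilon_\Lambda$ to control $D_{\max}(\xi_1\|\omega)$: this step also needs $P\leq\Lambda(\varrho)$ and is therefore unjustified.

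The obvious repair is to switch to Winter's correct coupling $\omega_0\coloneqq\frac{1}{1+\epsilon_\Lambda}(\Lambda(\varrho)+N)=\frac{1}{1+\epsilon_\Lambda}(\Lambda(\varsigma)+P)$, which \emph{is} a state. Running the convexity/mixing-bound sandwich through $\omega_0$ still leaves you needing $\epsilon_\Lambda D(\xi_1\|\omega)$ bounded by something like $\epsilon\log d$, where $\xi_1=P/\epsilon_\Lambda$. But your mechanism for that---$P\leq d^2\Lambda(\tau)$ via $P\leq\Lambda(\varrho)\leq d^2\Lambda(\tau)$---is broken, and the weaker information $-d^2\Lambda(\tau)\leq\Lambda(\varrho)-\Lambda(\varsigma)\leq d^2\Lambda(\tau)$ does \emph{not} imply $[\Lambda(\varrho)-\Lambda(\varsigma)]_+\leq d^2\Lambda(\tau)$ either (take $X=\sigma_x$ and $Y=\mathrm{diag}(1/10,11)$: then $-Y\leq X\leq Y$ but $[X]_+=\ketbra{+}{+}\not\leq Y$). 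So the very feature you highlight---``anchoring the dimension on the input side''---is precisely the step that fails, and with it the claim that the argument is indifferent to the output dimension of $\Lambda$. As written, the proposal does not close the conjecture.
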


In fact, we might envision that the statement even holds true for any set of quantum channels. Our belief is based on the fact that the statement holds, on the one hand, for any set of measurements, and, on the other hand, for the set of channels consisting only of the identity channel. In a certain sense these two sets of channels are on the extremes of an arbitrary set of channels and thus a continuity statement could be expected to be true there, too. With regards to the stated explicit conjecture which concerns a very specific class of intermediate classes of channels we point out that our Proposition \ref{prop:D_L|K0} is a natural special case of it.

\subsection{Local norm increase under tensoring} 
\label{sec:norm-increase}

We have just seen from \Cref{fact:norm-invariant} that not all states increase a local norm.
Indeed, PPT states do not increase the PPT norm and similarly, separable states do not increase the SEP norm.
In this section we study how general entangled states can increase the local norms. 
Namely, we study how tensoring a Hermitian operator $X$ on ${\system{A}'\system{B}'}$ (which for our purposes can be thought of as a difference of two states) with a state $\varrho$ on ${\system{CD}}$ changes the $\measurements$ norm, for $\measurements$ being either $\ppt$ or $\sep$.
The question we are now interested in is to get an upper bound on $\|\varrho\otimes X\|_{\measurements(\classic{A}\classic{A}'{:}\classic{B}\classic{B}')}$ in terms of $\|X\|_{\measurements(\classic{A}'{:}\classic{B}')}$. 
\smallskip

In the next section, we will be interested in proving that some pairs of states have a large PPT norm distance and a small SEP norm distance. We will thus focus on finding lower bounds, and not upper bounds, on the PPT norm. However, we will state the upper bounds also on the PPT norm for the sake of completeness.
For our statement we need the \emph{robustness of entanglement} \cite{TV}, 
which for any state $\varrho$ on $\system{C}\system{D}$ is defined as
\begin{align}
\label{eq:robustness}
\robustness(\varrho) & \coloneqq 
\inf_{\sigma\in\sepstates(\system{C}{:}\system{D})} \robustness(\varrho\|\sigma).
\end{align}
where 
\begin{align*}
\robustness(\varrho\|\sigma) &\coloneqq \inf\quantity{s: \frac{1}{1+s}\quantity(\varrho+s\,\sigma)\in\sepstates(\system{C}{:}\system{D})}.
\end{align*}

This is not to be confused with the \emph{global robustness of entanglement}
in which $\sigma$ is allowed to vary over all states in $\allstates(\system{CD})$~\cite{datta2009max}.

\begin{proposition}
\label{prop:psi-M}
For any Hermitian operator $X$ on $\system{A}\system{B}$ and any state $\varrho$ on $\system{C}\system{D}$, we have
\begin{align*}
& \|\varrho\otimes X\|
_{\sep(\classic{C}\classic{A}{:}\classic{D}\classic{B})} 
\leq (2R(\varrho)+1)\|X\|_{\sep(\classic{A}{:}\classic{B})}, \\
& \|\varrho\otimes X\|
_{\ppt(\classic{C}\classic{A}{:}\classic{D}\classic{B})} 
\leq \|\varrho^{\Gamma}\|_1\|X\|_{\ppt(\classic{A}{:}\classic{B})}. 
\end{align*}
Setting $k=\min(|\system{C}|,|\system{D}|)$, we therefore have
\begin{align*}
& \|\varrho\otimes X\|
_{\sep(\classic{C}\classic{A}{:}\classic{D}\classic{B})} \leq (2k-1)\|X\|_{\sep(\classic{A}{:}\classic{B})}, \\
& \|\varrho\otimes X\|
_{\ppt(\classic{C}\classic{A}{:}\classic{D}\classic{B})} \leq k\,\|X\|_{\ppt(\classic{A}{:}\classic{B})}. 
\end{align*}
The third equality is a direct consequence of the proof of~\cite[Theorem 16]{lami2017ultimate}.
\end{proposition}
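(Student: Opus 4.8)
The plan is to treat the two inequalities separately. The $\sep$ bound will follow quickly from the robustness decomposition of $\varrho$ together with \Cref{fact:norm-invariant}, whereas the $\ppt$ bound will need a direct duality argument built on the gauge formula \Cref{eq:norm-gauge}, $\|Y\|_{\measurements}=\sup\bigl\{\Tr(TY):T\in K_{\measurements}\bigr\}$. The dimensional versions will then drop out of standard estimates on $\robustness(\varrho)$ and $\|\varrho^\Gamma\|_1$.

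For the $\sep$ bound I would first write $\varrho$ as an affine combination of separable states: by definition of the robustness, for every $\delta>0$ there are $\sigma,\tau\in\sepstates(\system{C}{:}\system{D})$ and $s\leq\robustness(\varrho)+\delta$ with $(1+s)\tau=\varrho+s\sigma$, that is $\varrho=(1+s)\tau-s\sigma$. Since $\|\cdot\|_{\sep(\classic{C}\classic{A}{:}\classic{D}\classic{B})}$ is subadditive (it is a supremum of seminorms), this gives $\|\varrho\otimes X\|_{\sep(\classic{C}\classic{A}{:}\classic{D}\classic{B})}\leq(1+s)\|\tau\otimes X\|_{\sep(\classic{C}\classic{A}{:}\classic{D}\classic{B})}+s\|\sigma\otimes X\|_{\sep(\classic{C}\classic{A}{:}\classic{D}\classic{B})}$, and since $\tau$ and $\sigma$ are separable, \Cref{fact:norm-invariant} (with $(\measurements,\mathcal{K})=(\sep,\sepstates)$) turns each norm on the right into $\|X\|_{\sep(\classic{A}{:}\classic{B})}$. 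The right-hand side is therefore $(2s+1)\|X\|_{\sep(\classic{A}{:}\classic{B})}$, and letting $\delta\to0$ closes this case.

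For the $\ppt$ bound, write $\Gamma_{\system{D}}$ and $\Gamma_{\system{B}}$ for the transpositions on $\system{D}$ and $\system{B}$, so that $T\in K_{\ppt(\classic{C}\classic{A}{:}\classic{D}\classic{B})}=B_\infty\cap B_\infty^{\Gamma_{\system{D}}\Gamma_{\system{B}}}$ means $\|T\|_\infty\leq1$ and $\|T^{\Gamma_{\system{D}}\Gamma_{\system{B}}}\|_\infty\leq1$, while $K_{\ppt(\classic{A}{:}\classic{B})}=B_\infty\cap B_\infty^{\Gamma_{\system{B}}}$, and recall that $\varrho^\Gamma=\varrho^{\Gamma_{\system{D}}}$. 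Given an admissible $T$, I would move $\Gamma_{\system{D}}$ off $T$ and onto $\varrho$, using $\Tr(AB)=\Tr(A^{\Gamma_{\system{D}}}B^{\Gamma_{\system{D}}})$ and $(\varrho\otimes X)^{\Gamma_{\system{D}}}=\varrho^{\Gamma_{\system{D}}}\otimes X$, and then absorb $\varrho^{\Gamma_{\system{D}}}$ into a partial trace: $\Tr(T(\varrho\otimes X))=\Tr(T^{\Gamma_{\system{D}}}(\varrho^{\Gamma_{\system{D}}}\otimes X))=\Tr(Z_T X)$, where $Z_T\coloneqq\Tr_{\system{C}\system{D}}(T^{\Gamma_{\system{D}}}(\varrho^{\Gamma_{\system{D}}}\otimes\openone_{\system{A}\system{B}}))$. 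The key claim is then that $Z_T$ (replaced by its Hermitian part, which is harmless since $X$ is Hermitian) lies in $\|\varrho^\Gamma\|_1\,K_{\ppt(\classic{A}{:}\classic{B})}$; feeding this into \Cref{eq:norm-gauge} on $\system{A}\system{B}$ and taking the supremum over $T$ then yields the inequality. To prove the claim I would test $Z_T$ against pure states: undoing $\Gamma_{\system{D}}$ gives $\bra{\phi}Z_T\ket{\phi}=\Tr(T(\varrho\otimes\projector{\phi}))$, so $\|Z_T\|_\infty\leq\|T\|_\infty=1$; and since $\varrho^{\Gamma_{\system{D}}}\otimes\openone_{\system{A}\system{B}}$ acts as the identity on $\system{B}$, the transposition $\Gamma_{\system{B}}$ passes through it and through $\Tr_{\system{C}\system{D}}$, so $Z_T^{\Gamma_{\system{B}}}=\Tr_{\system{C}\system{D}}(T^{\Gamma_{\system{D}}\Gamma_{\system{B}}}(\varrho^{\Gamma_{\system{D}}}\otimes\openone_{\system{A}\system{B}}))$ and hence $\abs{\bra{\phi}Z_T^{\Gamma_{\system{B}}}\ket{\phi}}\leq\|T^{\Gamma_{\system{D}}\Gamma_{\system{B}}}\|_\infty\,\|\varrho^{\Gamma_{\system{D}}}\|_1=\|\varrho^\Gamma\|_1$. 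As $\|\varrho^\Gamma\|_1\geq\abs{\Tr\varrho^\Gamma}=1$, both defining conditions of the dilate $\|\varrho^\Gamma\|_1\,K_{\ppt(\classic{A}{:}\classic{B})}$ hold.

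The dimensional bounds are then immediate: $2\robustness(\varrho)+1\leq2k-1$ and $\|\varrho^\Gamma\|_1\leq k$ for $k=\min(|\system{C}|,|\system{D}|)$, using the well-known facts $\robustness(\varrho)\leq k-1$ \cite{TV} and that the negativity of a bipartite state is at most $(k-1)/2$. The delicate point I anticipate is the partial-transpose bookkeeping in the $\ppt$ step: the estimates only close because, once $\Gamma_{\system{D}}$ has been shifted onto $\varrho$, the transposition that remains when one transposes the $\system{B}$-part of $Z_T$ is exactly $\Gamma_{\system{D}}\Gamma_{\system{B}}$ --- the partial transposition defining $K_{\ppt(\classic{C}\classic{A}{:}\classic{D}\classic{B})}$ --- so that the constraint $\|T^{\Gamma_{\system{D}}\Gamma_{\system{B}}}\|_\infty\leq1$ is precisely what one needs; everything else is routine.
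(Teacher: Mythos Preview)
Your $\sep$ argument is essentially the paper's: both use the robustness decomposition $\varrho=(1+s)\tau-s\sigma$ with $\tau,\sigma$ separable, apply the triangle inequality, and invoke \Cref{fact:norm-invariant}. The paper writes it as a lower bound on $\|X\|_{\sep}=\|\varrho'\otimes X\|_{\sep}$ via the reverse triangle inequality, you write it as an upper bound on $\|\varrho\otimes X\|_{\sep}$; these are algebraically identical rearrangements.

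For the $\ppt$ bound you take a genuinely different route. The paper works on the \emph{primal} side: it identifies the unit ball $K_{\ppt}^{\circ}=\conv(B_1\cup B_1^{\Gamma})$, picks an optimal decomposition $X=\lambda_0 Y_0+(1-\lambda_0)Z_0$ with $\max(\|Y_0\|_1,\|Z_0^{\Gamma}\|_1)=\|X\|_{\ppt}$, tensors with $\varrho$, and uses multiplicativity of $\|\cdot\|_1$ together with $\|\varrho^{\Gamma}\|_1\geq\|\varrho\|_1$. You work on the \emph{dual} side: you take an optimal $T\in K_{\ppt(\classic{C}\classic{A}{:}\classic{D}\classic{B})}$, push $\Gamma_{\system D}$ onto $\varrho$, and show that the resulting ``slice'' operator $Z_T=\Tr_{\system{CD}}\bigl(T^{\Gamma_{\system D}}(\varrho^{\Gamma_{\system D}}\otimes\openone)\bigr)$ lands in $\|\varrho^{\Gamma}\|_1\,K_{\ppt(\classic{A}{:}\classic{B})}$. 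Both are correct. The paper's argument is shorter once one knows the explicit polar, and the asymmetry between the two norm bounds ($1$ versus $\|\varrho^{\Gamma}\|_1$) appears transparently from $\|\varrho\|_1$ versus $\|\varrho^{\Gamma}\|_1$. Your argument avoids computing $K_{\ppt}^{\circ}$ but pays for it with the partial-transpose bookkeeping you flag; it has the mild advantage that the intermediate object $Z_T$ is concrete and the method would adapt to other cones defined by operator-norm constraints. Incidentally, your parenthetical about replacing $Z_T$ by its Hermitian part is unnecessary: since $\langle\phi|Z_T|\phi\rangle=\Tr\bigl(T(\varrho\otimes\projector{\phi})\bigr)$ is real for every $\ket{\phi}$, $Z_T$ is already Hermitian.
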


\begin{proof}
The second set of inequalities in the proposition is easily derived from the first one, after upper bounding the maximal value that $R(\varrho)$ and $\|\varrho^{\Gamma}\|_1$ might take. The fact that $\|\varrho^{\Gamma}\|_1\leq k$ is well-known. While it was shown in \cite[Theorem C.2]{TV} that $R(\varrho)\leq k-1$.

For the SEP norm, we follow an argument inspired by~\cite[Theorem 16]{lami2017ultimate}.
We know that there exists a separable state $\tau$ which is such that, setting $R\coloneqq \robustness(\varrho)$, the following state is also separable: 
\[ \varrho' = \frac{1}{1+R}\varrho + \frac{R}{1+R}\tau. \] 
Because the SEP norm is left unchanged under tensoring with a separable state, we have
\begin{align*}
\norm{X}_\sep 
  &= \norm{\varrho' \otimes X}_\sep 
\\&\geq \frac{1}{1+R}\,\norm{\varrho \otimes X}_\sep 
   - \frac{R}{1+R}\,\norm{\tau \otimes X} _\sep
\\&= \frac{1}{1+R}\,\norm{\varrho \otimes X} _\sep
   - \frac{R}{1+R}\,\norm{X}_\sep, 
\end{align*}
where we used the triangle inequality and the separability of $\varrho'$ and $\varrho$.
Hence, we obtain as announced that 
$(2R+1)\norm{X}_\sep\geq \norm{\varrho \otimes X}_\sep$.

For the PPT norm notice first that, because $K_{\ppt}=B_{\infty}\cap B_{\infty}^{\Gamma}$, its polar is simply 
\begin{align*}
K_{\ppt}^{\circ}&=\conv\quantity(B_1\cup B_1^{\Gamma})
\\&=\quantity{ \lambda Y+(1-\lambda)Z:\ \norm{Y}_1,\norm*{Z^{\Gamma}}_1\leq 1,\ \lambda\in [0,1]},
\end{align*} 
and therefore we have
\begin{align*}
\|\varrho\otimes X\|_{\ppt} 
&=\inf\quantity{\mu: \varrho\otimes X \in \mu \conv\quantity(B_1\cup B_1^{\Gamma}) }.
\intertext{Since we restrict to finite dimensions, 
the minimisation extends over a compact set and has thus a minimizer. We therefore obtain}
\|\varrho\otimes X\|_{\ppt} 
&=\min\quantity{\mu: \varrho\otimes X \in \mu \conv\quantity(B_1\cup B_1^{\Gamma}) }.
\\&=\min\big\{\mu: \varrho\otimes X =\lambda Y+(1-\lambda)Z,
\\&\qquad\quad \norm{Y}_1\leq \mu,\ \norm*{Z^{\Gamma}}_1\leq \mu,\ \lambda\in[0,1]\big\}
\\
&=\min\big\{\max(\norm{Y}_1,\norm*{Z^{\Gamma}}_1): 
\\&\qquad\quad\varrho\otimes X =\lambda Y+(1-\lambda)Z,\ \lambda\in[0,1] \big\}.
\intertext{Now, let $X=\lambda_0Y_0+(1-\lambda_0)Z_0$ such that $\|X\|_{\ppt}=\max \quantity( \|Y_0\|_1, \|Z_0^{\Gamma}\|_1)$ as just derived. Since $\varrho\otimes X=\lambda_0\varrho\otimes Y_0+(1-\lambda_0)\varrho\otimes Z_0$, we then have}
\|\varrho\otimes X\|_{\ppt} 
&\leq \max\quantity( \|\varrho\otimes Y_0\|_1, \|{(\varrho\otimes Z_0)}^{\Gamma}\|_1 ) \\
&= \max \quantity( \|\varrho\|_1\|Y_0\|_1, \norm*{\varrho^\Gamma}_1 \|Z_0^{\Gamma}\|_1 ) \\
&\leq \norm*{\varrho^\Gamma}_1 \max \left( \|Y_0\|_1, \|Z_0^{\Gamma}\|_1 \right) \\ 
&= \norm*{\varrho^\Gamma}_1 \|X\|_{\ppt},
\end{align*}
the first equality being by multiplicativity of the trace norm under tensoring and the second inequality being because $\norm*{\varrho^\Gamma}_1\geq \|\varrho\|_1$.
\end{proof}

\begin{remark}
The case we will in particular focus on in the remainder of this paper is when the considered state $\varrho$ on ${\system{CD}}$ is the maximally entangled state $\psi\coloneqq\sum_{i,j=1}^k \ketbra{ii}{jj}/k$, for which $R(\psi)=k-1$ and $\|\psi^{\Gamma}\|_1=k$. 
\end{remark}

A legitimate question at this point is that of optimality in \Cref{prop:psi-M}. Indeed the lower bound in \cite[Proposition 16]{lami2017ultimate} gives the following statement (the construction given for $X$ is a weighted difference of the symmetric and antisymmetric projectors), which shows that the maximally entangled state can achieve an almost optimal increase in local norm.

\begin{proposition} \cite[Proposition 16]{lami2017ultimate}
\label{prop:optimality}
Let $|\system{C}|=|\system{D}|=k$ and let $\psi$ be the maximally entangled state on $\system{C}\system{D}$ as above. For any $k$
there exists a Hermitian operator $X$ on ${\system{A}\system{B}}$ such that for $\measurements$ being either $\sep$ or $\ppt$ it holds
\begin{align*}
 \left\|\psi\otimes X\right\|
 _{\measurements(\classic{C}\classic{A}{:}\classic{D}\classic{B})} 
 \geq k\left\|X\right\|_{\measurements(\classic{A}{:}\classic{B})}
\end{align*}
\end{proposition}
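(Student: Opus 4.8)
The plan is to take $\system{A}=\system{B}=\C^k$ and let $X\coloneqq\rho_--\rho_+$, where $\rho_\pm$ are the normalised projectors onto the symmetric and antisymmetric subspaces of $\system{A}\otimes\system{B}$, equivalently $\rho_\pm=(\openone\pm F)/(k(k\pm1))$ with $F$ the flip operator on $\system{A}\otimes\system{B}$; this is the textbook maximally data-hiding operator, and since $\rho_+$ and $\rho_-$ have orthogonal supports, $\|X\|_1=2$. The proposition then follows from two estimates, each valid for $\measurements$ being $\sep$ or $\ppt$: (i) $\|\psi\otimes X\|_{\measurements(\classic{C}\classic{A}{:}\classic{D}\classic{B})}\geq\|X\|_1=2$, and (ii) $\|X\|_{\measurements(\classic{A}{:}\classic{B})}\leq 4/(k+1)$. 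Indeed, combining them gives $\|\psi\otimes X\|_{\measurements}\geq 2=\tfrac{k+1}{2}\cdot\tfrac{4}{k+1}\geq\tfrac{k+1}{2}\|X\|_{\measurements}$.

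For (i) the key point is that tensoring in the maximally entangled state $\psi$ allows \emph{teleportation}. Consider the measurement $\measurement$ on $\system{C}\system{A}{:}\system{D}\system{B}$ that performs a generalised Bell-basis measurement on $\system{C}\system{A}$, communicates the outcome, applies the corresponding Pauli correction on $\system{D}$, and then performs an arbitrary POVM $\mathcal{N}$ on $\system{D}\system{B}$. Its POVM elements are products $\Pi^{\system{C}\system{A}}\otimes N^{\system{D}\system{B}}$ of positive operators, so $\measurement$ is a separable measurement across $\system{C}\system{A}{:}\system{D}\system{B}$, hence lies in $\sep(\classic{C}\classic{A}{:}\classic{D}\classic{B})\subset\ppt(\classic{C}\classic{A}{:}\classic{D}\classic{B})$. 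By the teleportation identity, applied linearly in the operator slot with $\system{B}$ an inert reference, the first three steps send $\psi_{\system{C}\system{D}}\otimes X_{\system{A}\system{B}}$ to $X_{\system{D}\system{B}}$ (i.e.\ $X$ with $\system{A}$ replaced by $\system{D}$), so $\|\measurement(\psi\otimes X)\|_1=\|\mathcal{N}(X_{\system{D}\system{B}})\|_1$. Taking the supremum over $\mathcal{N}$ and invoking \Cref{eq:Mnorm-tracenorm} (the trace norm is achieved by some measurement) gives $\|\psi\otimes X\|_{\measurements(\classic{C}\classic{A}{:}\classic{D}\classic{B})}\geq\|X_{\system{D}\system{B}}\|_1=\|X\|_1$.

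For (ii), since $\sep(\classic{A}{:}\classic{B})\subset\ppt(\classic{A}{:}\classic{B})$ it suffices to bound the PPT norm, for which I use $\|Y\|_{\ppt}=\min\{\|Y_1\|_1+\|Y_2^{\Gamma}\|_1:Y=Y_1+Y_2\}$, immediate from $K_{\ppt}^{\circ}=\conv(B_1\cup B_1^{\Gamma})$ (cf.\ the proof of \Cref{prop:psi-M}). With the decomposition
\[X=\frac{2}{k+1}\,\rho_-\;-\;\frac{2}{k(k+1)}\,F,\]
which one verifies equals $\rho_--\rho_+$ directly from $\rho_\pm=(\openone\pm F)/(k(k\pm1))$, the first summand is $\tfrac{2}{k+1}$ times a state and so has trace norm $\tfrac{2}{k+1}$, while the partial transpose of the second summand equals $\tfrac{2}{k(k+1)}F^{\Gamma}=\tfrac{2}{k+1}$ times a rank-one projector (because $F^{\Gamma}$ is $k$ times the maximally entangled projector on $\system{A}\system{B}$), so also has trace norm $\tfrac{2}{k+1}$; adding gives $\|X\|_{\ppt}\leq 4/(k+1)$.

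The conceptual heart, and the step I expect to carry the real content, is the teleportation bound (i): a maximally entangled ancilla lets one ``promote'' any locally restricted ($\locc$, $\sep$ or $\ppt$) measurement into an essentially unrestricted one by teleporting one subsystem into the other laboratory, so that $\|X\|_{\measurements}$-type quantities become the full $\|X\|_1$. Once this is available, (ii) is a short computation; its only subtlety is that one must use the decomposition above rather than the naive $X=0+X$, which would give only $\|X\|_{\ppt}\leq\|X^{\Gamma}\|_1=4/k$ and hence the weaker constant $k/2$ in place of $(k+1)/2$.
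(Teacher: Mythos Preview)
Your proof is correct and uses the same Hermitian operator $X=\pi^{(a)}-\pi^{(s)}$ as the paper (up to an immaterial sign), together with the same two facts: $\|\psi\otimes X\|_{\measurements}=2$ and $\|X\|_{\measurements}=4/(k+1)$. The only difference is that the paper simply cites~\cite{EW} for both facts, whereas you supply self-contained arguments: a teleportation protocol realised as a separable (indeed one-way LOCC) measurement for the first, and an explicit decomposition $X=\tfrac{2}{k+1}\rho_-\,-\,\tfrac{2}{k(k+1)}F$ fed into $K_{\ppt}^{\circ}=\conv(B_1\cup B_1^{\Gamma})$ for the second. Your version is thus a spelled-out variant of the paper's proof rather than a genuinely different route.
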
 

We will apply the results of this section in the following to the case where $X$ is a difference of two states $\rho^+$ and $\rho^-$. We then see that our bounds give limitations on the power of distinguishing the two states when separable or PPT operations are assisted by entangled states of certain fixed dimension. This will come in handy when discussing private state, as they are built from $\rho^\pm$ as well as Bell states.

\subsection{Application to private quantum states}

From now on, when we talk about the sets of states $\pptstates$ or $\sepstates$, we might omit the system labels in the proofs for ease of reading; in such case $\pptstates$ and $\sepstates$ are always assumed to be according to a bipartite cut (for example $\system{A}{:}\system{B}$, $\system{A}'{:}\system{B}'$ or $\system{A}\system{A}'{:}\system{B}\system{B}'$) which will be clear from the context or from the theorem statements. The same holds for $\ppt$ and $\sep$. 

\smallskip

Let us now make precise the concepts that we informally defined in the introduction. On a two-qubits system
we denote by $\psi^+ = \psi$ and $\psi^-$ the two Bell states
\begin{align*}
\psi^\pm := & \frac{1}{2}\left(\ket{00}\pm\ket{11}\right)\left(\bra{00}\pm\bra{11}\right)\, . 
\end{align*}
Measuring these states in the computational basis leads to a bit of key: 
two perfectly correlated, perfectly random bits (one at each system)
that are secret from the environment. 

Private quantum states with one bit of key are states that generalize the Bell states to any bipartite system $\system{A}\system{A}'\system{B}\system{B}'$ with $|\system{A}|=|\system{B}|=2$. They generalize the Bell states in the sense that measuring in the computational basis of $\system{A}\system{B}$ still yields a bit of key. The bit in $\system{AB}$ might be correlated with $\system{A}'\system{B}'$ but it will be secret from any purifying environment. For any pair of orthogonal states  $\shieldplusminus$ on $\system{A}'\system{B}'$, we can construct the following private state~\cite{HHHO}
\begin{equation}
\label{eq:def-gamma} 
\gamma_{\system{A}\system{A}'\system{B}\system{B}'} 
\coloneqq \frac{1}{2}\left(\psi^+_{\system{AB}}\otimes\shieldplus_{\system{A}'\system{B}'} 
+\psi^-_{\system{AB}}\otimes\shieldminus_{\system{A}'\system{B}'}\right)\, .
\end{equation}
Systems $\system{A}\system{B}$ are called key systems and systems $\system{A}'\system{B}'$ are called shield systems. 

The state obtained after the measurement in the computational basis of $\system{A}\system{B}$ is called the key-attacked state. For the private state given by \Cref{eq:def-gamma} it equals
\begin{equation}
\label{eq:gammahat} 
\hat{\gamma}_{\system{A}\system{A}'\system{B}\system{B}'} 
= \frac{1}{4}\left(\psi^+_{\system{AB}}+\psi^-_{\system{AB}}\right) 
\otimes\left(\shieldplus_{\system{A}'\system{B}'}+\shieldminus_{\system{A}'\system{B}'}\right)\, . 
\end{equation}

Define the Hermitian operator $\Delta\coloneqq(\shieldplus-\shieldminus)/2$, then $|\Delta|=(\shieldplus+\shieldminus)/2$
and in matrix notation we have 
\begin{align*} 
\gamma_{\system{A}\system{A}'\system{B}\system{B}'} = \frac{1}{2} \matrixquantity ( |\Delta_{\system{A}'\system{B}'}| & 0 & 0 & \Delta_{\system{A}'\system{B}'} \\ 0 & 0 & 0 & 0 \\ 0 & 0 & 0 & 0 \\ \Delta_{\system{A}'\system{B}'} & 0 & 0 & |\Delta_{\system{A}'\system{B}'}| ) \, \phantom.
\intertext{and}
\hat\gamma_{\system{A}\system{A}'\system{B}\system{B}'} = \frac{1}{2} \matrixquantity( |\Delta_{\system{A}'\system{B}'}| & 0 & 0 & 0 \\ 0 & 0 & 0 & 0 \\ 0 & 0 & 0 & 0 \\  0 & 0 & 0 & |\Delta_{\system{A}'\system{B}'}| )
\, .
\end{align*}

We know from~\cite{CF} that any private state can be transformed via a reversible LOCC operation into a private state of the form in \Cref{eq:def-gamma}. It is therefore legitimate to focus only on the construction of such so-called Bell private states.
\smallskip

With the results of \Cref{sec:norm-increase}, it is now not hard to see that, if $\shieldplusminus$ on ${\system{A}'\system{B}'}$ are data-hiding for PPT or SEP measurements on $\system{A}'\system{B}'$, then so is the constructed private state  $\gamma$ on ${\system{A}\system{A}'\system{B}\system{B}'}$  for PPT or SEP measurements on $\system{A}\system{A}'\system{B}\system{B}'$.

\begin{lemma} 
\label{lemma:pbit-M-norms-upper}
    Let $\gamma$ and $\hat{\gamma}$ on ${\system{A}\system{A}'\system{B}\system{B}'}$ be a private state and its key-attacked state, as defined by \Cref{eq:def-gamma,eq:gammahat},
    and let $\shieldplusminus$ on ${\system{A}'\system{B}'}$ be the corresponding shield states. Then,
    \begin{align*}
    & \|\gamma-\hat{\gamma}\|_{\sep(\classic{A}\classic{A}'{:}\classic{B}\classic{B}')} 
    \leq  \|\shieldplus-\shieldminus\|_{\sep(\classic{A}'{:}\classic{B}')}
    \, , \\
    & \|\gamma-\hat{\gamma}\|_{\ppt(\classic{A}\classic{A}'{:}\classic{B}\classic{B}')} 
    \leq          \|\shieldplus-\shieldminus\|_{\ppt(\classic{A}'{:}\classic{B}')}
    \, .
    \end{align*}
    If the key-attacked state is separable, this implies
        \begin{align*}
    &\|\gamma-\sepstates(\system{A}\system{A}'{:}\system{B}\system{B}')\|
    _{\sep(\classic{A}\classic{A}'{:}\classic{B}\classic{B}')} \leq
    \|\shieldplus-\shieldminus\|_{\sep(\classic{A}'{:}\classic{B}')}
    \, , \\
    & \|\gamma-\sepstates(\system{A}\system{A}'{:}\system{B}\system{B}')\|
    _{\ppt(\classic{A}\classic{A}'{:}\classic{B}\classic{B}')} \leq 
    \|\shieldplus-\shieldminus\|_{\ppt(\classic{A}'{:}\classic{B}')}
    \, .
    \end{align*}

\end{lemma}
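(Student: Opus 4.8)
The plan is to reduce the claim to \Cref{prop:psi-M} applied on the two-qubit key register $\system{AB}$. First I would rewrite the difference $\gamma-\hat\gamma$ as a single tensor product. Subtracting \Cref{eq:gammahat} from \Cref{eq:def-gamma} and using $\psi^+_{\system{AB}}+\psi^-_{\system{AB}}=\ketbra{00}{00}+\ketbra{11}{11}$ together with $\psi^+_{\system{AB}}-\psi^-_{\system{AB}}=\ketbra{00}{11}+\ketbra{11}{00}$, one gets
\[ \gamma_{\system{A}\system{A}'\system{B}\system{B}'}-\hat\gamma_{\system{A}\system{A}'\system{B}\system{B}'}=\tfrac{1}{4}\big(\psi^+_{\system{AB}}-\psi^-_{\system{AB}}\big)\otimes\big(\shieldplus_{\system{A}'\system{B}'}-\shieldminus_{\system{A}'\system{B}'}\big). \]
This is also directly visible from the matrix forms of $\gamma$ and $\hat\gamma$ recorded above: their difference retains only the two off-diagonal key blocks, each equal to $\Delta_{\system{A}'\system{B}'}=(\shieldplus-\shieldminus)/2$ up to the overall factor $1/2$. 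Recognizing this factorization is the main (though mild) point of the proof.

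Next I would split the key part by the triangle inequality for the $\measurements$ norm. Set $X\coloneqq\shieldplus_{\system{A}'\system{B}'}-\shieldminus_{\system{A}'\system{B}'}$ and let $\measurements$ be $\sep$ or $\ppt$ across the cut $\system{A}\system{A}'{:}\system{B}\system{B}'$; then
\[ \|\gamma-\hat\gamma\|_{\measurements}\leq\tfrac{1}{4}\Big(\big\|\psi^+_{\system{AB}}\otimes X\big\|_{\measurements}+\big\|\psi^-_{\system{AB}}\otimes X\big\|_{\measurements}\Big). \]
Since $\psi^+$ and $\psi^-$ are states on the two-qubit system $\system{AB}$, we have $\min(|\system{A}|,|\system{B}|)=2$, so \Cref{prop:psi-M} --- used with the correspondence $(\system{C},\system{D})\leftrightarrow(\system{A},\system{B})$ and $(\system{A},\system{B})\leftrightarrow(\system{A}',\system{B}')$, under which the cut $\system{C}\system{A}{:}\system{D}\system{B}$ there becomes the cut $\system{A}\system{A}'{:}\system{B}\system{B}'$ here --- gives $\|\psi^\pm\otimes X\|_{\sep}\leq 3\,\|X\|_{\sep}$ and $\|\psi^\pm\otimes X\|_{\ppt}\leq 2\,\|X\|_{\ppt}$. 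Substituting yields $\|\gamma-\hat\gamma\|_{\sep}\leq\tfrac{1}{4}(3+3)\|X\|_{\sep}=\tfrac{3}{2}\|\shieldplus-\shieldminus\|_{\sep}$ and $\|\gamma-\hat\gamma\|_{\ppt}\leq\tfrac{1}{4}(2+2)\|X\|_{\ppt}=\|\shieldplus-\shieldminus\|_{\ppt}$, which are the two asserted bounds.

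The conditional statements are then immediate: if $\hat\gamma\in\sepstates(\system{A}\system{A}'{:}\system{B}\system{B}')$, then by definition of the $\measurements$ norm distance $\|\gamma-\sepstates(\system{A}\system{A}'{:}\system{B}\system{B}')\|_{\measurements}=\inf_{\varsigma\in\sepstates}\|\gamma-\varsigma\|_{\measurements}\leq\|\gamma-\hat\gamma\|_{\measurements}$, so the two inequalities above transfer verbatim to distances from $\sepstates$. Beyond the factorization there is nothing delicate; I would note only that one could improve the PPT constant from $1$ to $\tfrac{1}{2}$ by bypassing the triangle inequality and applying the PPT half of \Cref{prop:psi-M} directly to the Hermitian operator $\psi^+-\psi^-$ (whose partial transpose has trace norm $2$), but the weaker constant stated is all that is used later.
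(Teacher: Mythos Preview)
Your proof is correct and follows essentially the same route as the paper: factorize $\gamma-\hat\gamma=\tfrac14(\psi^+-\psi^-)\otimes(\shieldplus-\shieldminus)$, apply the triangle inequality to split off $\psi^\pm$, and then invoke \Cref{prop:psi-M} with $k=2$. The only cosmetic difference is that the paper notes $\|\psi^+\otimes X\|_{\measurements}=\|\psi^-\otimes X\|_{\measurements}$ (local unitary equivalence) before applying \Cref{prop:psi-M} once, whereas you bound the two terms separately; the conditional part and your closing remark on the PPT constant are extras not spelled out in the paper.
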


\begin{proof}
    Note that the states $\gamma$ and $\hat{\gamma}$ are such that
    \[ \gamma -\hat{\gamma}
    = \frac{1}{4}\left(\psi^+-\psi^-\right)
    \otimes\left(\shieldplus-\shieldminus\right). \]
Now rewrite
$$\psi^+-\psi^-= 2 \tau^+-2\tau^-,$$
where 
$$\tau^{\pm} := 1/4 (\id \pm |00\rangle\langle 11| \pm |11\rangle\langle 00|).$$
Notice that $\tau^{\pm}$ are separable by the Horodecki PPT criterion which is sharp if the local dimension is two. 
 We therefore find  with help of the triangle inequality that 
    \begin{align*}
    \left\|\gamma - \hat\gamma\right\|_{\mathbf{M}} 
      &\leq \frac{1}{2} \norm{\tau^+\otimes\quantity(\shieldplus-\shieldminus)}_\measurements
    \\&\,+ \frac{1}{2} \norm{\tau^-\otimes\quantity(\shieldplus-\shieldminus)}_\measurements
    \\&= \norm{\shieldplus-\shieldminus}_\measurements . 
    \end{align*}
    The two announced inequalities then follow from \Cref{fact:norm-invariant}.
\end{proof}

The result for PPT measurements was provided for completeness, 
as later we will actually want to prove that our constructed private states are not data-hiding for PPT measurements. The following lemma will be useful in this context. 

\begin{lemma} 
    \label{lemma:pbit-M-norms-lower}
    Let $\gamma$ on ${\system{A}\system{A}'\system{B}\system{B}'}$ be a private state, as defined by \Cref{eq:def-gamma}, and let let $\shieldplusminus$ on ${\system{A}'\system{B}'}$ be the corresponding shields. Then
    \begin{align*}
    \|\gamma-\sepstates(\system{A}\system{A}'{:}\system{B}\system{B}')\|_{\ppt(\classic{A}\classic{A}'{:}\classic{B}\classic{B}')} 
    &\geq \frac{1}{3} \|\shieldplus-\shieldminus\|_{\ppt(\classic{A}'{:}\classic{B}')}\,.
    \end{align*}
\end{lemma}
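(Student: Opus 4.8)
The plan is to exhibit an explicit PPT measurement on $\system{A}\system{A}'\system{B}\system{B}'$ that witnesses a large distance between $\gamma$ and every separable state, and reduce the analysis to a known good PPT measurement distinguishing the shields $\shieldplus$ from $\shieldminus$ on $\system{A}'\system{B}'$. Concretely, let $T_0$ on $\system{A}'\system{B}'$ be a PPT-measurement operator (i.e.\ $(T_0,\openone-T_0)\in\ppt(\classic{A}'{:}\classic{B}')$) such that $\Tr\bigl(T_0(\shieldplus-\shieldminus)\bigr)$ is close to $\|\shieldplus-\shieldminus\|_{\ppt(\classic{A}'{:}\classic{B}')}$, up to the overall factor built into $K_{\ppt}$; equivalently, pick $S_0\in K_{\ppt(\classic{A}'{:}\classic{B}')}=B_\infty\cap B_\infty^\Gamma$ with $\Tr\bigl(S_0(\shieldplus-\shieldminus)\bigr)=\|\shieldplus-\shieldminus\|_{\ppt}$. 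First I would build from $S_0$ a test operator on the full system of the form $P\otimes S_0$, where $P$ is the projector onto the ``diagonal'' key block spanned by $\ket{00}_{\system{AB}}$ and $\ket{11}_{\system{AB}}$ (the support of $\psi^++\psi^-$); the point of the $P$ factor is that $\gamma-\varsigma$ for separable $\varsigma$ still pairs well against it because $\gamma$ lives on this block, while the partial transpose of $P\otimes S_0$ remains controlled since $P$ is a local-basis projector whose partial transpose has operator norm $1$.

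Next I would verify that $P\otimes S_0$ (suitably normalised) lies in $K_{\ppt(\classic{A}\classic{A}'{:}\classic{B}\classic{B}')}=B_\infty\cap B_\infty^\Gamma$ across the cut $\system{A}\system{A}'{:}\system{B}\system{B}'$: its operator norm is $\|P\|_\infty\|S_0\|_\infty\le 1$, and its partial transpose is $P^\Gamma\otimes S_0^\Gamma$ with $\|P^\Gamma\|_\infty\|S_0^\Gamma\|_\infty\le 1$ as well, so it is an admissible functional for $\|\cdot\|_{\ppt}$ by \Cref{eq:norm-gauge}. Then for any separable $\varsigma\in\sepstates(\system{A}\system{A}'{:}\system{B}\system{B}')$ I would lower bound $\Tr\bigl((\gamma-\varsigma)\,(P\otimes S_0)\bigr)$. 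Writing $\gamma$ in the block matrix form given in the excerpt, $\Tr(\gamma\,(P\otimes S_0))=\frac12\bigl(\Tr(|\Delta|S_0)+\Tr(|\Delta|S_0)\bigr)$ restricted appropriately — more carefully, using $\gamma-\hat\gamma=\tfrac14(\psi^+-\psi^-)\otimes(\shieldplus-\shieldminus)$ and that $\hat\gamma$ is diagonal, the relevant off-diagonal contribution gives $\Tr(\gamma\,(P\otimes S_0))$ a contribution of $\tfrac12\Tr\bigl(S_0(\shieldplus-\shieldminus)\bigr)=\tfrac12\|\shieldplus-\shieldminus\|_{\ppt}$ beyond what any separable state can match. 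The factor $\tfrac13$ in the statement should come from comparing this to the operator-norm normalisation of $P\otimes S_0$: since $P$ has rank $2$ inside a $4$-dimensional key space and the off-diagonal block of $\psi^+-\psi^-$ has the ``wrong'' sign pattern, only a fraction of the naive value survives against the worst separable competitor — I expect the clean way is to track the constant through $\Tr((\psi^+-\psi^-)P')$ for the relevant rank-one pieces $P'$ of $P$, yielding the $1/3$.

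The main obstacle, I expect, is the separable-state lower bound: one must show that $\sup_{\varsigma\in\sepstates}\Tr(\varsigma\,(P\otimes S_0))$ is small enough — ideally bounded by something like $\tfrac12 - \tfrac13$ times the relevant scale, or controlled independently of the shields — so that the difference $\Tr((\gamma-\varsigma)(P\otimes S_0))\ge\tfrac13\|\shieldplus-\shieldminus\|_{\ppt}$. This is where I would need that $S_0\in B_\infty\cap B_\infty^\Gamma$ forces $\Tr(\sigma'S_0)$ to be moderate for every separable $\sigma'$ on $\system{A}'\system{B}'$, combined with the structure of $P$ on the key qubits; a cleaner alternative is to not optimise over all $\varsigma$ directly but instead use that $\hat\gamma$ is itself separable (when the shields are, or more robustly by passing to a separable approximation) together with \Cref{fact:norm-invariant}-type multiplicativity, so that $\|\gamma-\sepstates\|_{\ppt}\ge c\,\|\gamma-\hat\gamma\|_{\ppt}\ge c'\,\|\psi^+\otimes(\shieldplus-\shieldminus)\|_{\ppt}\ge c''\,\|\shieldplus-\shieldminus\|_{\ppt}$, chasing the constants to land on $1/3$. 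Everything else — admissibility of the test operator, the block-matrix computation, multiplicativity of the trace norm under tensoring — is routine once the right test operator and the right competitor state are fixed.
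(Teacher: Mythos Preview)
Your approach has a genuine gap, and it is different from the paper's route.

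First, the test operator $P\otimes S_0$ with $P=\ketbra{00}{00}+\ketbra{11}{11}$ cannot do what you want. The difference $\gamma-\hat\gamma=\tfrac14(\psi^+-\psi^-)\otimes(\shieldplus-\shieldminus)$ lives entirely in the \emph{off-diagonal} key block $\ketbra{00}{11}+\ketbra{11}{00}$, while $P$ is diagonal; hence $\Tr\bigl((\gamma-\hat\gamma)(P\otimes S_0)\bigr)=0$. Your operator only sees $\Tr(|\Delta|S_0)=\tfrac12\Tr((\shieldplus+\shieldminus)S_0)$, not the difference of the shields. If you switch to $(\psi^+-\psi^-)\otimes S_0$ you do get $\Tr(\gamma\cdot)=\tfrac12\|\shieldplus-\shieldminus\|_{\ppt}$, and this operator does sit in $K_{\ppt}$; but then you still have to bound $\sup_{\varsigma\in\sepstates}\Tr(\varsigma\,(\psi^+-\psi^-)\otimes S_0)$, which you acknowledge is the obstacle and do not resolve. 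Your fallback suggestion, to use $\|\gamma-\sepstates\|_{\ppt}\ge c\,\|\gamma-\hat\gamma\|_{\ppt}$, goes the wrong way: when $\hat\gamma\in\sepstates$ one has $\|\gamma-\sepstates\|_{\ppt}\le\|\gamma-\hat\gamma\|_{\ppt}$, not $\ge$.

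The paper avoids all of this by working with a PPT \emph{operation} rather than a single test operator. It takes an optimal PPT binary measurement $(M,\openone-M)$ on the shields achieving $\tfrac12\|\shieldplus-\shieldminus\|_{\ppt}=:c$, uses it inside a PPT protocol that measures $\system{A}'\system{B}'$, applies the corresponding phase correction on $\system{AB}$, and then twirls to the isotropic family. This maps $\gamma$ to an isotropic qubit state with fidelity $\tfrac12(1+c)$ to $\psi^+$, and maps every separable state into the separable isotropic states. The lower bound then follows from the explicit PPT-norm distance of isotropic states to $\sepstates$ in dimension $2$, which gives $\tfrac{4}{3}\bigl(\tfrac{1+c}{2}-\tfrac12\bigr)=\tfrac{2c}{3}=\tfrac13\|\shieldplus-\shieldminus\|_{\ppt}$. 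The crucial idea you are missing is this reduction-by-PPT-channel to the isotropic case, which simultaneously handles the ``separable competitor'' problem (via monotonicity: PPT operations send $\sepstates$ into PPT isotropic $=\sepstates$) and produces the constant $1/3$ cleanly.
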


\begin{proof}
    Let $c\coloneqq\frac{1}{2}\norm{\shieldplus-\shieldminus}_{\ppt(\classic{A}'{:}\classic{B}')} $.
    By definition, this means that there exist a binary measurement $(M,\overline M =\openone-M)\in\ppt(\classic{A}'{:}\classic{B}')$ such that without loss of generality we have
    \[\Tr M \shieldplus - \Tr M \shieldminus  = c
    = \Tr \overline M \shieldminus - \Tr \overline M \shieldplus\,.\]
    We now apply on $\gamma$ a PPT distillation protocol $\Lambda$ that first tries to distinguish $\shieldpm$ using the above measurement, and then corrects the phase flip of the maximally entangled state accordingly.
    Since after the correction the measurement outcome is not needed anymore, it is traced out at the end of the protocol.
    The resulting state is 
    \begin{align*}
    \Lambda(\gamma) 
      &= \frac{1}{2}\Tr(M\shieldplus +\overline M\shieldminus) \psi^+
       + \frac{1}{2}\Tr(M\shieldminus+\overline M\shieldplus ) \psi^-
    \\&= \frac{1}{2}(1+c) \psi^+
       + \frac{1}{2}(1-c) \psi^-\;.
    \end{align*}
    
    We now apply an isotropic twirl
    to $\system{A}\system{B}$
    to produce an isotropic state~\cite{HH}.
    The twirl produces an isotropic state $\tilde\gamma$ with fidelity to the maximally entangled state $\Tr \psi^+\tilde\gamma = (1-c)/2$ (see Appendix \ref{section:iso} for more details).
    Notice that $\tilde\gamma$ is always entangled,
    as it was proven in~\cite{MWW} that $c\geq1/|\system{A}|$,
    and the isotropic states are entangled as soon as the fidelity is bigger that $1/|\system{A}|$~\cite{HH}.
    
    Since all of the above operations are within the PPT operations, 
    they map PPT states into PPT states (see \Cref{sec:bipartite}).
    Together with the fact that all PPT isotropic states are also separable,
    we find that the PPT norm of $\gamma$ decreases. Namely, denote by $\tilde\Lambda$ the PPT operation $\Lambda$ followed by a twirl, 
    and let $\sigma\in\sepstates(\system{A}\system{A}'{:}\system{B}\system{B}')$
    be the state such that $\norm{\gamma-\sepstates}_\pptchannels=\norm{\gamma-\sigma}_\pptchannels$ then
    \begin{align*}
    \norm{\gamma-\sepstates}
    _{\pptchannels(\system{A}\system{A}'{:}\system{B}\system{B}')}
      &=
        \norm{\gamma-\sigma}
    _{\pptchannels(\system{A}\system{A}'{:}\system{B}\system{B}')}
    \\&\geq
        \norm{\tilde\Lambda(\gamma-\sigma)}
    _{\pptchannels(\system{A}{:}\system{B})}
    \\&=
        \norm{\tilde\gamma-\tilde\Lambda(\sigma)}
    _{\pptchannels(\system{A}{:}\system{B})}
    \\&\geq
        \norm{\tilde\gamma-\sepstates(\system{A}{:}\system{B})}
    _{\pptchannels(\system{A}{:}\system{B})}
    \end{align*}
    where we used that $\tilde\Lambda(\sigma)\in\sepstates(\system{A}{:}\system{B})$.
    \Cref{lemma:iso:distinguishability} now gives us the desired lower bounds for $|\system{A}|=|\system{B}|=2$:
    \begin{align*}
    \norm{\tilde\gamma-\sepstates}_\pptchannels   
    &=
        \frac{4}{3}\quantity(\frac{1+c}{2}-\frac{1}{2})
        = \frac{2}{3}c
    \qedhere
    \end{align*}
\end{proof}

\section{Random private quantum states}
\label{sec:p-bit}

\subsection{Random private quantum state construction}

With \Cref{lemma:pbit-M-norms-upper,lemma:pbit-M-norms-lower} in mind, we now turn to the objective of generating random private quantum states with interesting properties.
Our construction of random private quantum states will be based on a construction of random orthogonal quantum states which was introduced in~\cite[Section 6.1]{AL}, and which we recall here.
Notice that from now on, $\system{A}'$ and $\system{B}'$ will be fixed to be $d$-dimensional complex Hilbert spaces for some $d\in\N$.

\begin{construction}[Random orthogonal quantum states]
    \label{con:random-states}\label{eq:def-rho-sigma'} 
    Let $|\system{A}'|=|\system{B}'|=d$ and without loss of generality assume that $d$ is even, and let $P$ be an orthogonal projector on some 
    fixed $d^2/2$-dimensional subspace of $\system{A}'\system{B}'$. 
    Define first the two following orthogonal states: 
    \[ 
    \fixedshieldplus \coloneqq \frac{P}{\Tr P}
    \ \ \text{and}\ \ 
    \fixedshieldminus \coloneqq \frac{P^{\perp}}{\Tr P^{\perp}}
    \, . \]
    Then, let $U$ be a Haar-distributed random unitary on $\system{A}'\system{B}'$, 
    and define the two following random orthogonal states: 
    \[ \randomshieldplusminus\coloneqq 
    U\fixedshieldplusminus U^\dagger. \]
    \end{construction}

\begin{lemma}
\label{lemma:robustness:random}
Let $\randomshield$ be a random state on $\system{A}'\system{B}'$ as in \Cref{eq:def-rho-sigma'} and let $\tau=\openone/d^2$ be the maximally mixed state on $\system{A}'\system{B}'$. Namely, let $\randomshield = U P U^\dagger / (d^2/2)$ where $P$ is the projector on a $d^2/2$-dimensional subspace of $\system{A}'\system{B}'$ and $U$ is a Haar-distributed random unitary on $\system{A}'\system{B}'$.
Then 
\begin{align*}
&\probability(\robustness(\randomshield\|\tau) \leq \upperconstant\sqrt{d}\log d)
\geq 1- e^{-\probconstant d^3 \log^2 d} 
\intertext{and consequently}
&\probability(\robustness(\randomshield)       \leq \upperconstant\sqrt{d}\log d )
\geq 1- e^{-\probconstant d^3 \log^2 d} \;.
\end{align*}
where $\upperconstant,\probconstant>0$ are universal constants (independent of $d$) and $\probability$ denotes the probability over the Haar-random choice.
\end{lemma}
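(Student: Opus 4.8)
The plan is to relate the robustness of entanglement $\robustness(\randomshield)$ to the trace norm of the partial transpose $\|\randomshield^\Gamma\|_1$, and then to control the latter using concentration of measure for the random unitary $U$. The first observation is that $\robustness(\randomshield\|\tau) \leq \robustness(\randomshield)$ only in the wrong direction, so instead I would use that for \emph{any} fixed separable state $\sigma$ one has $\robustness(\randomshield) \leq \robustness(\randomshield \| \sigma)$, and in particular $\robustness(\randomshield) \leq \robustness(\randomshield\|\tau)$, since $\tau$ is separable. Hence it suffices to prove the first probability bound, and the second follows immediately. To bound $\robustness(\randomshield\|\tau)$, recall that $\frac{1}{1+s}(\randomshield + s\tau) \in \sepstates$ as soon as $\randomshield + s\tau \geq 0$ after partial transposition \emph{and} — more carefully — as soon as $\randomshield + s\tau$ lies in the separable cone; the cheap sufficient condition is that $\randomshield + s\tau$ is proportional to a separable state, which holds whenever $s$ is large enough that $s\tau$ dominates the ``non-separable part'' of $\randomshield$. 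Concretely, since $\tau = \openone/d^2$, one has $\randomshield + s\tau = \randomshield + (s/d^2)\openone$, and a standard bound (used e.g.\ in the proof that $\robustness(\varrho) \leq k-1$ in \cite{TV}, but here we want a dimension-free-in-the-exponent estimate) gives that this operator, suitably normalised, is separable once $s/d^2 \gtrsim \|\randomshield\|_\infty$ times a separability radius factor, or more efficiently once $s \gtrsim $ the distance of $\randomshield$ from the maximally mixed state measured in an appropriate norm.

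The cleaner route, which I would actually follow, is to invoke the known characterisation that $\tau$ sits in the interior of $\sepstates$ with an explicit separability ball: there is a universal constant such that any Hermitian $Z$ with $\Tr Z = 1$ and $\|Z - \tau\|_2 \leq c/d^2$ is separable (Gurvits–Barnum type bound), and more usefully the Hilbert--Schmidt ball of radius $\sim 1/d^2$ around $\tau$ lies in $\sepstates(\system{A}'{:}\system{B}')$. Then $\frac{1}{1+s}(\randomshield + s\tau)$ is separable provided $\frac{1}{1+s}\|\randomshield - \tau\|_2 \leq c/d^2$, i.e.\ provided $s \gtrsim (d^2/c)\|\randomshield - \tau\|_2$. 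So $\robustness(\randomshield\|\tau) \lesssim d^2 \|\randomshield - \tau\|_2$, and the whole problem reduces to showing that, with the stated probability, $\|\randomshield - \tau\|_2 \lesssim \sqrt d \log d / d^2 = \log d / d^{3/2}$. Here $\randomshield - \tau = U P U^\dagger/(d^2/2) - \openone/d^2$, whose Hilbert--Schmidt norm is a Lipschitz function of $U$; its expectation over Haar-random $U$ is easily computed (or bounded) to be of order $1/d^2$ up to logarithmic factors — in fact one expects $\E\|\randomshield - \tau\|_2 \sim 1/d^2$ since $P/(d^2/2)$ has eigenvalues $2/d^2$ on half the space and $0$ on the other half — wait, that gives $\|\randomshield-\tau\|_2 \sim 1/d$, which is too large, so the naive Hilbert--Schmidt route is lossy and I would instead need the sharper estimate below.

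Because the $\ell_2$ estimate alone is too weak, the actual argument must go through $\|\randomshield^\Gamma\|_1$ (or equivalently through the restricted norm $\|\randomshield - \tau\|_{\ppt}$ or a small-ball separability criterion in trace norm). The key fact from \cite{AL}, which \Cref{con:random-states} is lifted from, is that for a random subspace of dimension $d^2/2$ the associated maximally mixed projector state $\randomshieldplus = UPU^\dagger/(d^2/2)$ has partial transpose with trace norm $1 + O(\log d/\sqrt d)$ with overwhelming probability — this is precisely the ``$\varepsilon$-close to PPT'' statement underlying the whole construction. Granting that, and granting a converse-direction bound of the form $\robustness(\varrho) \leq f(\|\varrho^\Gamma\|_1 - 1, d)$ with $f$ linear in its first argument up to a $\mathrm{poly}(d)$ factor — this is where one pays the factor $\sqrt d$: roughly $\robustness(\varrho) \lesssim d \cdot (\|\varrho^\Gamma\|_1 - 1) + (\text{small})$, obtained by writing $\varrho = \varrho^{\Gamma\Gamma}$ and mixing in $\tau$ to kill the negative part of $\varrho^\Gamma$, whose trace is $(\|\varrho^\Gamma\|_1 - 1)/2$ — one concludes $\robustness(\randomshield) \lesssim d \cdot (\log d/\sqrt d) = \sqrt d \log d$. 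The main obstacle, and the step I would be most careful about, is getting the dimension dependence in this last ``trace-norm-of-partial-transpose $\Rightarrow$ robustness'' implication exactly right so that it combines with the $O(\log d/\sqrt d)$ deviation bound to yield precisely $\upperconstant \sqrt d \log d$; the probability bound $1 - e^{-\probconstant d^3 \log^2 d}$ should then fall out of the Gaussian/Lévy concentration inequality for the $1$-Lipschitz function $U \mapsto \|(UPU^\dagger)^\Gamma\|_1$ on the unitary group $U(d^2)$, whose concentration rate is $e^{-c d^2 t^2}$ at deviation $t$, evaluated at $t \sim \log d/\sqrt d$, giving exponent $\sim d^2 \cdot \log^2 d / d = d \log^2 d$ — so in fact I would need to double-check whether the stated $d^3$ in the exponent comes from a sharper (non-Lipschitz, net-based) argument as in \cite{AL}, and if so I would reproduce that net argument rather than the crude Lipschitz bound.
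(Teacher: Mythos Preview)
Your reduction of the second claim to the first (via $\robustness(\randomshield)\leq\robustness(\randomshield\|\tau)$) is correct, and you are right that the naive Gurvits--Barnum route $\robustness(\randomshield\|\tau)\lesssim d^2\|\randomshield-\tau\|_2$ is too lossy, since $\|\randomshield-\tau\|_2\sim 1/d$ deterministically.

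The genuine gap is your fallback through the partial transpose. Making $\tfrac{1}{1+s}(\randomshield+s\tau)$ positive under partial transposition only makes it PPT, not separable; there is no general inequality of the form $\robustness(\varrho)\lesssim d^{\alpha}(\|\varrho^\Gamma\|_1-1)$, and for bound entangled states the left side is positive while the right side vanishes. So the implication ``$\|\randomshield^\Gamma\|_1$ close to $1$ $\Rightarrow$ small robustness'' simply fails, and the argument cannot be completed along this line. Your own hesitation about the exponent is a symptom of this: even if the implication held, a function like $U\mapsto\|(UPU^\dagger)^\Gamma\|_1$ does not have an $O(1)$ Lipschitz constant, so you would not recover $d^3\log^2 d$.

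The paper's route is different and avoids both traps. It works directly with the gauge $\|\cdot\|_{\sepstates_0}$ of the translated separable body $\sepstates_0=\sepstates-\tau$, noting that $\robustness(\varrho\|\tau)+1=\|\varrho-\tau\|_{\sepstates_0}$ for entangled $\varrho$. The expectation $\E\|U\bar\rho_0U^\dagger\|_{\sepstates_0}$ is controlled not by Gurvits--Barnum but by a comparison lemma from \cite{AL} that relates any unitarily invariant convex function evaluated at $\bar\rho_0$ to the same function evaluated at a GUE matrix $G$, together with the mean-width estimate $\E\|G\|_{\sepstates_0}\lesssim d^{7/2}\log d$ from \cite{ASY}; this is what produces $\E\|\randomshield-\tau\|_{\sepstates_0}\lesssim\sqrt d\log d$. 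The Gurvits--Barnum ball is used only at the Lipschitz step: $\|\cdot\|_{\sepstates_0}\leq d^2\|\cdot\|_2$ combines with the $2/d^2$ normalisation in $\randomshield$ to give an $O(1)$ Lipschitz constant for $U\mapsto\|U\bar\rho_0U^\dagger\|_{\sepstates_0}$. L\'evy concentration on $U(d^2)$ at deviation $\epsilon=C\sqrt d\log d$ with $L=O(1)$ then gives exponent $\sim d^2\cdot d\log^2 d=d^3\log^2 d$, as stated.
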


\begin{proof}
    The second claim follows from the first by the definition in \Cref{eq:robustness}, so we only need to prove the upper bound estimate on $\robustness(\randomshield\|\tau)$.

    We use the same notation as in~\cite{ASY} and $\sepstates_0 = \sepstates - \tau$ as the set of separable states translated to the subspace of traceless Hermitian operators with the maximally mixed state at the origin. From the definition then we have that for any state $\varrho$
    \begin{align*}
    \robustness(\varrho\|\tau)
      &= \inf\quantity{s: \frac{1}{1+s}\quantity(\varrho+s\,\tau)\in\sepstates}
    \\&= \inf\quantity{s: \frac{1}{1+s}\quantity(\varrho-\tau) \in\sepstates_0}
    \, .
    \eqnumber
    \label{eq:robustness:tau}
    \end{align*}
    Let us with some abuse of notation denote by $\norm{\cdot}_{\sepstates_0}$ the gauge of $\sepstates_0$ (it is not homogeneous because $\sepstates_0$ is not symmetric, and hence is not actually a norm). 
    From \Cref{eq:robustness:tau}, we thus have that if $\varrho$ is entangled
    \begin{equation}
    \label{eq:robustness:sepgauge}
    \norm{\varrho-\tau}_{\sepstates_0}
    =\robustness(\varrho\|\tau) + 1
    \end{equation}
    (if $\varrho$ is separable this does not hold, as is the case for $\varrho=\tau$).

    Let $\bar\randomshield=P/\Tr P = 2P/d^2$, and let us introduce the notation $\varrho_0\equiv \varrho-\tau$ for any state $\varrho$. 
    Notice that $\randomshield_0 = U \bar\randomshield_0 U^\dagger$. 
    We reduced the problem of estimating $\robustness(\randomshield\|\tau)$
    to the problem of estimating $\norm{\randomshield_0}_{\sepstates_0}$ 
    and the statement to prove is thus
    \begin{align*}
    &\probability(
        \norm{U\bar\randomshield_0 U^\dagger}_{\sepstates_0} 
        \leq \upperconstant\sqrt{d}\log d)
    \geq 1- e^{-\probconstant d^3 \log^2 d}  \;.
    \end{align*}
    For this purpose, we first compute the expectation value $\E \norm{U\bar\randomshield_0 U^\dagger}_{\sepstates_0}$ over the random variable $U$. Then we estimate the Lipschitz constant of $\norm{U\bar\randomshield_0 U^\dagger}_{\sepstates_0}$ as a function of $U$ and use it to argue that being close to the the expected value happens with high probability. Notice that $\norm{X}_{\sepstates_0}$ is not unitary invariant, however the function $\E \norm{UXU^\dagger}_{\sepstates_0}$ is unitary invariant on $X$, while still being convex.
    
    Let us compute the expectation value  of $\E \norm{U\bar\randomshield_0 U^\dagger}_{\sepstates_0}$. With minor modifications, we know from~\cite[Lemma~6]{AL} that for any unitary-invariant convex function $g$ of any traceless Hermitian operators $X$ and $Y$ on $\C^d\otimes\C^d$, we have\footnote{The original~\cite[Lemma~6]{AL} is stated 
    only for permutation invariant norms. 
    However, what is proven in the proof is the more general statement 
    that if $x$ and $y$ are zero-sum vectors in $\R^d$, 
    then $x/\norm{x}_\infty \prec 2d \, y/\norm{y}_1$ 
    where $\prec$ denotes majorisation. 
    A direct application of~\cite[Theorem~II.3.3]{Bhatia} 
    then proves~\cite[Lemma~6]{AL} 
    and more generally $g(x)/\norm{x}_\infty \leq 2d \, g(y)/\norm{y}_1$ 
    for all permutation-invariant convex functions $g$. 
    Applying the latter to the spectrum of traceless Hermitian operators 
    for a unitary invariant function
    gives us \Cref{eq:traceless:convex}.}
    \begin{equation}
    \label{eq:traceless:convex}
    \frac{g(X)}{\norm{X}_\infty} \leq 2d^2 \frac{g(Y)}{\norm{Y}_1}
    \;,
    \end{equation}
    which applied twice leads to
    \begin{equation}
    \label{eq:traceless:convex:twice}
    \frac{1}{2d^2} \frac{\norm{X}_1}{\norm{Y}_\infty}
    \leq \frac{g(X)}{g(Y)}
    \leq 2d^2 \frac{\norm{X}_\infty}{\norm{Y}_1}
    \;.
    \end{equation}
    Now, we let $Y=\bar\randomshield_0$ for which $\norm{\bar\randomshield_0}_1=1$ and $\norm{\bar\randomshield}_\infty = 1/d^2$:
    \begin{equation*}
    \frac{1}{2} {\norm{X}_1}
    \leq \frac{g(X)}{g(\bar\randomshield_0)}
    \leq 2d^2 {\norm{X}_\infty}
    \;.
    \end{equation*}
    Then we let $g(X)=\E \norm{UXU^\dagger}_{\sepstates_0}$, which is unitary invariant by construction and convex by the convexity of $\norm{X}_{\sepstates_0}$:
    \begin{equation*}
    \frac{1}{2} {\norm{X}_1}
    \leq \frac{\E\norm{UXU^\dagger}_{\sepstates_0}}
              {\E\norm{U\bar\randomshield_0 U^\dagger}_{\sepstates_0}}
    \leq 2d^2 {\norm{X}_\infty}
    \;.
    \end{equation*}
    We now let $X$ be a Gaussian vector $G$ on the traceless Hermitian operators (Gaussian unitary ensemble) on $\C^d\otimes\C^d$.
    This makes $\E\norm{U\GUE U^\dagger}_{\sepstates_0}=\E\norm{\GUE}_{\sepstates_0}$.
    We then take the expectation values over the remaining random variable $\GUE$ on each side of the inequalities and get
    \begin{equation*}
    \frac{1}{4} \E{\norm{\GUE}_1}
    \leq \frac{\E\norm{\GUE}_{\sepstates_0}}{\E\norm{U\bar\randomshield_0 U^\dagger}_{\sepstates_0}}
    \leq 2d^2 \E{\norm{\GUE}_\infty}
    \;.
    \end{equation*}
    For such a Gaussian random matrix, it is well know that $\E\norm{\GUE}_1\sim d^3$ and $\E\norm{\GUE}_{\infty}\sim d$, where with ``$\sim$'' we denote having the same order. This proves
    \begin{equation}
    \E\norm{U\bar\randomshield_0 U^\dagger}_{\sepstates_0} \sim
    \E\norm{\GUE}_{\sepstates_0} /d^3
    \;.
    \end{equation}
    In particular, we know from~\cite[Section 4]{ASY} that $\E\norm{\GUE}_{\sepstates_0}$ is at most of order $d^{7/2}\log d$
    and therefore there exists a universal constant $\upperconstant>0$ such that 
    \begin{equation}
    \label{eq:gauge-S-expectation}
    \E \norm{\randomshield_0}_{\sepstates_0} 
    = \E\norm{U\bar\randomshield_0 U^\dagger}_{\sepstates_0}
    \leq \upperconstant\sqrt{d}\log d.
    \end{equation}
    
    Now we have to show that this average behaviour is generic for large $d$, because $f(U)\coloneqq\norm{U\bar\randomshield_0 U^\dagger}_{\sepstates_0}$ is regular enough. We claim that $f$
    is $4$-Lipschitz (in the Euclidean norm).
    Indeed, for any unitaries $U,V$ on $\C^d\otimes\C^d$, we have by the triangle inequality for $\norm{\cdot}_{\mathcal{S}_0}$
    \begin{align*}
    \quantity|f(U)-f(V)|
      &=
        \left|\norm{U\bar\randomshield_0 U^\dagger}_{\sepstates_0}
             -\norm{V\bar\randomshield_0 V^\dagger}_{\sepstates_0}\right| 
    \\&\leq
        \norm{U\bar\randomshield_0 U^\dagger 
            - V\bar\randomshield_0 V^\dagger}_{\sepstates_0}
    \\&=
        \norm{U\bar\randomshield U^\dagger 
            - V\bar\randomshield V^\dagger}_{\sepstates_0}
    \\&=
    \frac{2}{d^2} \norm{UPU^\dagger - VPV^\dagger}_{\sepstates_0}
    \; .
    \intertext{It was proven in~\cite{BG} that 
        \begin{equation*}
        \frac{1}{d\sqrt{d^2-1}} B_2 \subseteq \sepstates_0 \; ,
        \end{equation*}
        which implies
        \begin{equation*}
        \label{eq:sep0norm}
        \norm{\cdot}_{\sepstates_0} \leq d\sqrt{d^2-1}\norm{\cdot}_2 \leq d^2 \norm{\cdot}_2 \;.
        \end{equation*}
    Therefore we get}
    |f(U)-f(V)|
    &\leq 2 \norm{UPU^\dagger - VPV^\dagger}_2
    \\&\leq 2 \norm{(U-V)PU^\dagger}_2 + 2\norm{VP(U-V)^\dagger}_2
    \\&=    4 \norm{(U-V)P}_2
    \\&\leq 4 \norm{U-V}_2 \norm{P}_\infty
    \\&=    4 \norm{U-V}_2 \; .
    \end{align*}
    where the first inequality is by the triangle inequality and the last inequality is by H\"{o}lder inequality $\norm{XY}_2\leq\norm{X}_2\norm{Y}_{\infty}$.
    
    Now, we know from~\cite[Corollary 17]{MM} that any $L$-Lipschitz function $g$ on the unitaries on $\C^D$ (equipped with the Euclidean metric) satisfies the concentration estimate: if $U$ is a Haar-distributed unitary on $\C^D$, then for all $\epsilon>0$, $\probability( g(U) > \E g + \epsilon ) \leq e^{-{\probconstant}D \epsilon^2/L^2}$, where $\probconstant>0$ is a universal constant. Combining the above estimate on the Lipschitz constant of $f$ with the estimate on its expected value from \Cref{eq:gauge-S-expectation}, we thus get for all $\epsilon>0$
    \begin{align*}
    \probability({f}(U) > \upperconstant\sqrt{d}\log d+\epsilon) 
    &\leq
    \probability\big({f}(U) > \E{f}+\epsilon) 
    \\&\leq 
    e^{-\probconstant' d^2\epsilon^2}.
    \end{align*}
    The advertised result follows from choosing $\epsilon=\upperconstant\sqrt{d}\log d$ in the above deviation probability, and combining it with \Cref{eq:robustness:sepgauge} .
\end{proof}

\begin{corollary}
\label{lemma:robustness:randomplusminus}
Let $\randomshieldplusminus$ be random orthogonal states on $\system{A}'\system{B}'$ as in \Cref{eq:def-rho-sigma'}.
Then
\begin{align*}
&\probability(\max\quantity
    {\robustness(\randomshieldplus),\robustness(\randomshieldminus)}
    \leq {\upperconstant}{\sqrt{d}\log d})
\geq 1- e^{-\probconstant d^3 \log^2 d} .
\end{align*}
where $\upperconstant,\probconstant>0$ are universal constants.
\end{corollary}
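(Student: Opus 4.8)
The plan is to apply \Cref{lemma:robustness:random} separately to $\randomshieldplus$ and to $\randomshieldminus$, and then to combine the two resulting estimates by a union bound. For $\randomshieldplus = UPU^\dagger/(d^2/2)$ this is immediate, since $\randomshieldplus$ is precisely the random state covered by \Cref{lemma:robustness:random}: it gives
\[ \probability(\robustness(\randomshieldplus) \leq \upperconstant\sqrt{d}\log d) \geq 1- e^{-\probconstant d^3 \log^2 d}. \]

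For $\randomshieldminus$ I would first observe that $P^\perp$ is again an orthogonal projector and that, since $|\system{A}'\system{B}'|=d^2$ with $d$ even and $\Tr P = d^2/2$, we have $\Tr P^\perp = d^2 - d^2/2 = d^2/2$ as well; thus $P^\perp$ projects onto a $d^2/2$-dimensional subspace, and $\randomshieldminus = UP^\perp U^\dagger/(d^2/2)$ is once more exactly of the form to which \Cref{lemma:robustness:random} applies, with the fixed projector there taken to be $P^\perp$. Hence
\[ \probability(\robustness(\randomshieldminus) \leq \upperconstant\sqrt{d}\log d) \geq 1- e^{-\probconstant d^3 \log^2 d}. \]
Equivalently, one may note that $\randomshieldminus$ and $\randomshieldplus$ have the same distribution, because post-composing the Haar-random unitary $U$ with any fixed unitary swapping the ranges of $P$ and $P^\perp$ leaves the Haar measure invariant.

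A union bound over these two events then yields
\[ \probability(\max\quantity{\robustness(\randomshieldplus),\robustness(\randomshieldminus)} \leq \upperconstant\sqrt{d}\log d) \geq 1- 2e^{-\probconstant d^3 \log^2 d}, \]
and the spurious factor $2$ is absorbed into the exponent after relabelling the universal constant (for all $d$ with $\probconstant d^3\log^2 d \geq 2\log 2$ one has $2e^{-\probconstant d^3\log^2 d} \leq e^{-(\probconstant/2)d^3\log^2 d}$, and the finitely many remaining values of $d$ are handled by shrinking $\probconstant$ once more). Beyond the minor remark that $P^\perp$ still has rank $d^2/2$, so that \Cref{lemma:robustness:random} genuinely applies to it unchanged, there is no real obstacle here: the corollary is a direct consequence of the lemma together with a union bound.
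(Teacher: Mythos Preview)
Your proof is correct and follows essentially the same approach as the paper: apply \Cref{lemma:robustness:random} to each of $\randomshieldplus$ and $\randomshieldminus$ and combine via a union bound. You are in fact slightly more explicit than the paper in checking that $P^\perp$ again has rank $d^2/2$ and in justifying the absorption of the factor $2$ into the constant, but the argument is the same.
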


\begin{proof}
This is a direct consequence of \Cref{lemma:robustness:random}. By the union bound, we have 
\begin{align*}
& \probability(\robustness(\randomshieldplus)       \geq {\upperconstant}{\sqrt{d}\log d}\ \text{or}\ \robustness(\randomshieldminus)       \geq {\upperconstant}{\sqrt{d}\log d}) \\
& \leq \probability(\robustness(\randomshieldplus)       \geq {\upperconstant}{\sqrt{d}\log d} ) +  \probability(\robustness(\randomshieldminus)       \geq {\upperconstant}{\sqrt{d}\log d}) \\
& \leq 2e^{-\probconstant d^3 \log^2 d}
\qedhere
\end{align*}
\end{proof}

It was proved in~\cite[Section 6.1]{AL} that such random orthogonal states $\randomshieldplusminus$ have interesting data-hiding properties, namely: with high probability SEP (and even more so LOCC) measurements almost do not distinguish them, while PPT ones do. So our goal is now to construct random private states out of them, and show that they exhibit some similar features. 

\begin{construction}[Random quantum private state]
\label{eq:def-random-gamma} 
Let $|\system{A}|=|\system{B}|=2$ and let $\randomshieldplusminus$ be two random orthogonal states $\system{A}'\system{B}'$ as in \Cref{eq:def-rho-sigma'}. We define a random private state $\gamma$ on $\system{A}\system{A}'\system{B}\system{B}'$ to be a private state as in \Cref{eq:def-gamma} with $\shieldplusminus=\randomshieldplusminus$, namely
\begin{equation*} 
\gamma_{\system{A}\system{A}'\system{B}\system{B}'} 
= \frac{1}{2}\left(\psi^+_{\system{AB}}\otimes\randomshieldplus_{\system{A}'\system{B}'} 
+\psi^-_{\system{AB}}\otimes\randomshieldminus_{\system{A}'\system{B}'}\right) \, .
\end{equation*}
\end{construction}

Observe that, for such construction the key-attacked state is always separable. Indeed, since $P+P^{\perp}=\openone$, we simply have $(\randomshieldplus+\randomshieldminus)/2=\openone/d^2$ and therefore $\hat{\gamma}=\quantity(\psi^+ + \psi^-)/2\otimes\openone/d^2$. 
This is thus an irreducible private state, namely the distillable key is exactly one bit~\cite{HHHOb}.

\begin{note*}
Irreducible private states are those private states $\gamma$ on ${\system{A}\system{A}'\system{B}\system{B}'}$
for which $K_D(\gamma)=\log|\system{A}|$~\cite{HHHOb}.
Private states for which $\hat\gamma\in\sepstates(\system{A}\system{A}'{:}\system{B}\system{B}')$
are always irreducible, and they are sometimes called strictly irreducible~\cite{CF,horodecki2016irreducible}.
Whether all irreducible private states are also strictly irreducible depends on whether there exist so called bound key states~\cite{horodecki2016irreducible}, entangled states that have zero distillable key. The existence of bound key is a long-standing open question even in a classical setting (see \cite{ozols2014bound} and references there in) .
\end{note*}

Finally, note that $\randomshieldplusminus$ are orthogonal by construction but, as we already know from~\cite[Section 6.1]{AL}, this orthogonality is completely hidden to local observers.
In the remainder of this work we will extend this result to the case of random private states.

\subsection{Distinguishability of random private quantum states from separable states} 

We start with a statement on the distinguishability of our random private states from separable states, measured in local norms.

\begin{theorem}
\label{th:gamma-SEP}
\label{th:gamma-PPT}
Let $\gamma$ be a random private state 
on $\system{A}\system{A}'\system{B}\system{B}'$ as defined by \Cref{eq:def-random-gamma}. Then,
\begin{align*}
& \probability(\norm{\gamma-\sepstates(\system{A}\system{A}'{:}\system{B}\system{B}')}_{\sep(\classic{A}\classic{A}'{:}\classic{B}\classic{B}')} 
\leq \frac{\upperconstant}{\sqrt{d}}) \geq 1-e^{-{\probconstant}d^3}
 \\
& \probability(\norm{\gamma-\sepstates(\system{A}\system{A}'{:}\system{B}\system{B}')}_{\ppt(\classic{A}\classic{A}'{:}\classic{B}\classic{B}')} 
\geq \lowerconstant ) \geq 1-e^{-{\probconstant}d^4}
\end{align*}
where $\probconstant,\lowerconstant,\upperconstant>0$ are universal constants.
\end{theorem}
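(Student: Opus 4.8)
The plan is to reduce everything to the shield states $\randomshieldpm$ and then invoke the already-established machinery. For the SEP upper bound, I would combine \Cref{lemma:pbit-M-norms-upper} with the known SEP data-hiding property of the random orthogonal states from \cite[Section 6.1]{AL}. Specifically, \Cref{lemma:pbit-M-norms-upper} gives
\[ \norm{\gamma - \sepstates}_{\sep(\classic{A}\classic{A}'{:}\classic{B}\classic{B}')} \leq \tfrac{3}{2}\norm{\randomshieldplus - \randomshieldminus}_{\sep(\classic{A}'{:}\classic{B}')}, \]
using crucially that $\hat\gamma$ is separable here (as observed right after \Cref{eq:def-random-gamma}). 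So it suffices to show that with probability at least $1 - e^{-\probconstant d^3}$ we have $\norm{\randomshieldplus - \randomshieldminus}_{\sep(\classic{A}'{:}\classic{B}')} \leq \upperconstant'/\sqrt d$. This is essentially the content of the AL construction: $\randomshieldpm = U\fixedshieldpm U^\dagger$ are random halves of the space, and a concentration argument (Lipschitz estimate for $U \mapsto \norm{U\fixedshieldplus U^\dagger - U\fixedshieldminus U^\dagger}_{\sep}$ together with an expectation bound of order $1/\sqrt d$) yields the claim; I would cite \cite{AL} for the expectation estimate and redo the Lipschitz/concentration step as in the proof of \Cref{lemma:robustness:random} if needed.

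For the PPT lower bound, I would use \Cref{lemma:pbit-M-norms-lower}, which gives
\[ \norm{\gamma - \sepstates(\system{A}\system{A}'{:}\system{B}\system{B}')}_{\ppt(\classic{A}\classic{A}'{:}\classic{B}\classic{B}')} \geq \tfrac{1}{3}\norm{\randomshieldplus - \randomshieldminus}_{\ppt(\classic{A}'{:}\classic{B}')}. \]
Hence it suffices to show $\norm{\randomshieldplus - \randomshieldminus}_{\ppt(\classic{A}'{:}\classic{B}')} \geq \lowerconstant'$ with probability at least $1 - e^{-\probconstant d^4}$. Since $K_{\ppt} = B_\infty \cap B_\infty^\Gamma$, we have $\norm{\randomshieldplus - \randomshieldminus}_{\ppt} = \max\{\Tr(T(\randomshieldplus-\randomshieldminus)) : \norm{T}_\infty \leq 1,\ \norm{T^\Gamma}_\infty \leq 1\}$, and a natural test operator is $T = \fixedshieldplus{}^\Gamma$-flavored or, more simply, one can argue directly that $\randomshieldplus - \randomshieldminus$ already has trace norm $2$ while its partial transpose is not too large, or — cleaner — use that the PPT distinguishability of $\randomshieldpm$ was established in \cite[Section 6.1]{AL} with high probability. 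The exponent $d^4$ (rather than $d^3$) should come from the relevant concentration estimate in that analysis; I would quote it.

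The main obstacle is bookkeeping the probabilistic statements from \cite{AL} in the exact form needed (the constants, and in particular the $e^{-\probconstant d^3}$ vs.\ $e^{-\probconstant d^4}$ exponents) and making sure the SEP data-hiding bound is genuinely $O(1/\sqrt d)$ rather than merely $o(1)$. The structural part — passing from $\gamma$ to $\randomshieldpm$ via \Cref{lemma:pbit-M-norms-upper,lemma:pbit-M-norms-lower} — is immediate; essentially all the work is (re)deriving or citing the SEP-smallness and PPT-largeness of $\norm{\randomshieldplus - \randomshieldminus}$, which is precisely the random data-hiding phenomenon of \cite{AL}, now needing explicit quantitative control. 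A secondary subtlety is checking that the two high-probability events can be intersected (union bound) so that a single random $\gamma$ satisfies both the SEP and PPT bounds simultaneously, though the theorem as stated only asserts them separately.
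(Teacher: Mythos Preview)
Your proposal is correct and follows exactly the paper's proof: apply \Cref{lemma:pbit-M-norms-upper} (using that $\hat\gamma$ is separable) together with the SEP data-hiding bound from \cite[Section~6.1]{AL} for the first claim, and \Cref{lemma:pbit-M-norms-lower} together with \cite[Theorem~5]{AL} for the second. The paper simply cites \cite{AL} for both shield-state estimates rather than redoing any Lipschitz/concentration argument, so your ``main obstacle'' is resolved by direct citation; the alternative routes you sketch for the PPT part are unnecessary.
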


\begin{proof}
For the first claim, we know from~\cite[Section~6.1]{AL} that there exist universal constants $\probconstant,\upperconstant>0$ such that,
\[ \probability\big(\norm{\randomshieldplus-\randomshieldminus}_{\sep} 
 \leq {\upperconstant}/{\sqrt{d}} )  \geq 1-e^{-{\probconstant}d^3}.\]
Hence by \Cref{lemma:pbit-M-norms-upper} above, we have (just relabelling $3 \upperconstant/2$ as $\upperconstant$), that
\[ \probability\big(\norm{\gamma-\sepstates}_{\sep} \leq {\upperconstant}/{\sqrt{d}} ) \geq 1-e^{-{\probconstant}d^3}.\] 
For the second claim, we know from~\cite[Theorem 5]{AL}  
that there exist universal constants $\probconstant,\lowerconstant>0$ such that 
\[ \probability(\norm{\randomshieldplus-\randomshieldminus}_{\ppt} \geq \lowerconstant) 
\geq 1-e^{-{\probconstant}d^4}.  \]
Hence by \Cref{lemma:pbit-M-norms-lower} above, we have (just relabelling $\lowerconstant/3$ as $\lowerconstant$),
\[ \probability(\norm{\gamma-\sepstates}_{\ppt} \geq \lowerconstant) 
\geq 1-e^{-{\probconstant}d^4}. \]
This concludes the proof of \Cref{th:gamma-PPT}.
\end{proof}

In words, \Cref{th:gamma-SEP} tells us the following: the considered random private state $\gamma$ is, with probability going to $1$ as the dimension $d$ grows, at a SEP-norm distance of at most $\upperconstant/\sqrt{d}$ and at a PPT-norm distance of at least $\lowerconstant$ from the set of separable states. So in conclusion, what we learn from it is that there exist private states which are barely distinguishable from being separable for observers which can only perform SEP (and even more so LOCC) measurements on them. However, this data hiding property is not maintained when relaxing to PPT operations, since these private states keep a constant distinguishability from separable states under PPT measurements. 
\smallskip

We now derive an analogue of \Cref{th:gamma-SEP} when distinguishability is measured in local relative entropy.

\begin{theorem}
\label{th:D_SEP-D_PPT}
Let $\gamma$ be a random private state
on ${\system{A}\system{A}'\system{B}\system{B}'}$ as defined by \Cref{eq:def-random-gamma}. 
Then
\begin{align*} 
&\probability(\!
    D_{\sep(\classic{A}\classic{A}'{:}\classic{B}\classic{B}')}
    \quantity(\gamma \|\sepstates(\system{A}\system{A}'{:}\system{B}\system{B}'))
    \leq\upperconstant\frac{\log d}{\sqrt{d}}) \!
 \geq 1 - e^{-{\probconstant}d}
\\
& \probability(
    D_{\ppt(\classic{A}\classic{A}'{:}\classic{B}\classic{B}')}
    \quantity(\gamma\|\sepstates(\system{A}\system{A}'{:}\system{B}\system{B}'))
    \geq \lowerconstant)  
 \geq 1-e^{-{\probconstant}d^4} 
\end{align*}
where $\probconstant,\lowerconstant,\upperconstant>0$ are universal constants.
\end{theorem}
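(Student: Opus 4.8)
\emph{Overall approach.} The two bounds go in opposite directions, so I would prove them separately; in both cases the plan is to reduce, measurement by measurement, to the corresponding local-norm statement of \Cref{th:gamma-PPT} by a Pinsker, respectively reverse-Pinsker, inequality. This is possible because the output of a SEP or PPT measurement is a genuine classical distribution, so the restricted relative entropy is literally a supremum of classical Kullback--Leibler divergences, which can be compared with the corresponding $\ell^1$ distances.

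\emph{The PPT lower bound (the easy half).} Fix any $\sigma\in\sepstates$ and any $\measurement\in\ppt(\classic{A}\classic{A}'{:}\classic{B}\classic{B}')$. The classical Pinsker inequality bounds $D(\measurement(\gamma)\|\measurement(\sigma))$ from below by $c'\,\|\measurement(\gamma)-\measurement(\sigma)\|_1^2$ for a universal $c'>0$. Taking the supremum over $\measurement\in\ppt$ and then the infimum over $\sigma\in\sepstates$, and using that $t\mapsto t^2$ is nondecreasing on $[0,\infty)$, I get $D_{\ppt}(\gamma\|\sepstates)\geq c'\,\|\gamma-\sepstates\|_{\ppt}^2$. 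The second bound of \Cref{th:gamma-PPT} then gives $\|\gamma-\sepstates\|_{\ppt}\geq \lowerconstant$ with probability at least $1-e^{-\probconstant d^4}$, so on that event $D_{\ppt}(\gamma\|\sepstates)\geq c'\lowerconstant^2$, and relabelling the constant finishes this half.

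\emph{The SEP upper bound (the substantive half).} Instead of a generic separable approximant I would compare $\gamma$ with its own key-attacked state $\hat\gamma$, which for the random construction equals $\hat\gamma=\tfrac{1}{2d^2}\Pi$ with $\Pi:=(\ketbra{00}{00}+\ketbra{11}{11})_{\system{AB}}\otimes\openone_{\system{A}'\system{B}'}$ and is in particular separable across $\system{A}\system{A}'{:}\system{B}\system{B}'$. The key step is the operator inequality $\gamma\preceq 2\hat\gamma$. Indeed, $\gamma$ is supported inside the range of $\Pi$ (because $\psi^{\pm}$ are supported on $\mathrm{span}\{\ket{00},\ket{11}\}$), and the two summands $\tfrac12\psi^+\otimes\randomshieldplus$ and $\tfrac12\psi^-\otimes\randomshieldminus$ act on orthogonal subspaces, each with operator norm $1/d^2$ (since $\|\randomshieldplus\|_\infty=\|\randomshieldminus\|_\infty=2/d^2$), so $\|\gamma\|_\infty=1/d^2$ and hence $\gamma\preceq (1/d^2)\Pi=2\hat\gamma$. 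Consequently, for \emph{every} measurement $\measurement=(T_i)_i$ and every outcome, writing $p_i=\Tr(T_i\gamma)$ and $q_i=\Tr(T_i\hat\gamma)$, one has $0\leq p_i\leq 2q_i$, hence $|p_i-q_i|\leq q_i$ and $q_i=0\Rightarrow p_i=0$. A dimension-free reverse-Pinsker estimate then gives $D(\measurement(\gamma)\|\measurement(\hat\gamma))\leq \chi^2(\measurement(\gamma)\|\measurement(\hat\gamma))=\sum_{i:q_i>0}(p_i-q_i)^2/q_i\leq \sum_i|p_i-q_i|=\|\measurement(\gamma)-\measurement(\hat\gamma)\|_1$, using $D\leq\chi^2$ (from $\ln t\leq t-1$) together with $|p_i-q_i|/q_i\leq 1$. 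Taking the supremum over $\measurement\in\sep(\classic{A}\classic{A}'{:}\classic{B}\classic{B}')$ yields $D_{\sep}(\gamma\|\sepstates)\leq D_{\sep}(\gamma\|\hat\gamma)\leq\|\gamma-\hat\gamma\|_{\sep}$, which by \Cref{lemma:pbit-M-norms-upper} is at most $\tfrac32\|\randomshieldplus-\randomshieldminus\|_{\sep}$, and by~\cite[Section~6.1]{AL} this is at most $\upperconstant/\sqrt d$ with probability at least $1-e^{-\probconstant d^3}\geq 1-e^{-\probconstant d}$. Since $1/\sqrt d\leq \log d/\sqrt d$, the claimed bound follows after relabelling.

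\emph{Where the difficulty lies.} The PPT half is essentially free once \Cref{th:gamma-PPT} is available. The SEP half carries the content, and the delicate point is that the estimate must be driven by the \emph{$\sep$-norm} rather than the trace norm: indeed $\|\gamma-\sepstates\|_1\geq\|\gamma-\sepstates\|_{\ppt}\geq\lowerconstant$ is bounded away from $0$, so the standard asymptotic-continuity bounds for the measured relative entropy are useless. What makes the reverse-Pinsker step clean and dimension-free is exactly the domination $\gamma\preceq 2\hat\gamma$, special to this construction (where $\randomshieldplus+\randomshieldminus=\openone/(d^2/2)$, so that $\hat\gamma$ is a flat multiple of a projector dominating $\gamma$). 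Without such a domination the fallback would be to establish a $\sep$-version of the asymptotic-continuity bound \Cref{prop:D_L|K0} --- which as stated covers only the partial measurements $\id_{\rA}\otimes\allmeasurements(\rB)$, not full SEP --- along the lines of \cite{LiW}; that route also works but is heavier and typically costs an extra $\log d$ factor, which is presumably why the theorem is phrased with $\log d/\sqrt d$ rather than $1/\sqrt d$. Beyond that, the only routine points are the precise Pinsker and $D\leq\chi^2$ constants and the convention for the base of the logarithm in $D$.
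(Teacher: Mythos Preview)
Your proof is correct. The PPT lower bound is exactly the paper's argument (Pinsker plus \Cref{th:gamma-PPT}). For the SEP upper bound, however, you take a genuinely different and in fact sharper route than the paper's main proof. The paper invokes the asymptotic continuity of the \emph{measured} relative entropy from \cite{LiW}, which is stated in terms of the $\sep$-norm distance (so your worry that ``standard'' continuity bounds are useless does not apply here): from $\|\gamma-\hat\gamma\|_{\sep}\leq C/\sqrt d$ and $\hat\gamma\in\sepstates$ one gets $D_{\sep}(\gamma\|\sepstates)\leq (C/\sqrt d)\log(2d)+g(C/\sqrt d)$, which is where the $\log d$ comes from. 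Your argument instead exploits the construction-specific operator domination $\gamma\leq 2\hat\gamma$ (valid precisely because $\randomshieldplus+\randomshieldminus=2\openone/d^2$), passes through $\chi^2$ to get the dimension-free reverse Pinsker $D_{\sep}(\gamma\|\hat\gamma)\leq\|\gamma-\hat\gamma\|_{\sep}$, and hence obtains $C/\sqrt d$ directly. This matches the improved bound the paper only reaches in Appendix~\ref{appendix:improvement} via the more elaborate $\sep$-ordering machinery (\Cref{prop:gamma-tau-SEP-order} and \Cref{lemma:order-relent}), where one shows $\gamma\leq_{\sep}(1+C/\sqrt d)\hat\gamma$. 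Your route is shorter because the crude global domination constant $2$ is harmless once the $\ell^1$ distance itself is already $O(1/\sqrt d)$; the paper's appendix instead works hard to make the ordering constant close to $1$ and then reads off the relative entropy from that. Both are specific to this random model, but yours is the more elementary of the two.
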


The upper bound on $D_{\sep}(\gamma\|\sepstates)$ in \Cref{th:D_SEP-D_PPT} above is not tight: the $\log d$ factor can actually be removed. The derivation of this improved upper bound is relegated to Appendix \ref{appendix:improvement} as it is much more involved, and requires developing several additional tools (which might be of independent interest). 

\begin{proof}
For the first probability estimate, we know from \Cref{th:gamma-SEP} that, with probability greater than $1-e^{-{\probconstant}d}$, $\norm{\gamma-\hat{\gamma}}_{\sep}\leq \upperconstant/\sqrt{d}$. 
Hence by the asymptotic continuity of the measured relative entropy \cite{LiW} we get that, with probability greater than $1-e^{-{\probconstant}d}$,
\begin{align*} 
\label{eq:D_SEP-proof}
D_{\sep}\left(\gamma\|\sepstates\right) &= \quantity| D_{\sep}(\gamma\|\sepstates) -  D_{\sep}(\hat{\gamma}\|\sepstates)| 
\\&\leq \frac{\upperconstant}{\sqrt{d}}\log (2d) + g\quantity(\frac{\upperconstant}{\sqrt{d}})
\\&\leq\frac{\upperconstant'\log d}{\sqrt{d}} 
\, ,
\end{align*}
where the equality is due to $\hat{\gamma}\in\sepstates$, so that $D_{\sep}(\hat{\gamma}\|\sepstates)=0$. 

For the second probability estimate, we know from \Cref{th:gamma-PPT} that, with probability greater than $1-e^{-{\probconstant}d^4}$, $\|\gamma-\sepstates\|_{\ppt} \geq \lowerconstant$. By Pinsker's inequality, this implies that, with probability greater than $1-e^{-{\probconstant}d^4}$,
\[ D_{\ppt}(\gamma\|\sepstates)\geq \frac{1}{2\ln 2}\|\gamma-\sepstates\|_{\ppt}^2 \geq \frac{\lowerconstant^2}{2\ln 2}=\lowerconstant'\, .\]
Notice, however, that while Pinsker's inequality suffices for qubit key systems, it scales badly for higher dimensional key systems. A scalable version of this proof can be obtained proving \Cref{lemma:pbit-M-norms-lower} directly for the relative entropy using \Cref{lemma:iso:distinguishability}
\end{proof}

\Cref{th:D_SEP-D_PPT} teaches us that the same qualitative conclusion as that of \Cref{th:gamma-SEP} holds when measuring distance from the set of separable states in relative entropy rather than trace norm: with probability going to $1$ as the dimension $d$ grows, our random private state $\gamma$ has  a very small relative entropy of entanglement when restricted to SEP (and even more so LOCC) measurements, but a high one when only restricted to PPT measurements.

\section{Quantum key repeater}
\label{sec:repeater}

\newcommand{\alice}{{\smash{\system{A}'}}}
\newcommand{\bob}  {{\smash{\system{B}'}}}
\newcommand{\ab} {{\alice{:}\bob}}
\newcommand{\atb}{{A{\to}B}}

\newcommand{\charlieC}{\system{C}}
\newcommand{\charlieA}{{{\charlieC}}}
\newcommand{\charlieB}{{{\tilde{\charlieC}}}}
\newcommand{\rhoA}{\varrho} 
\newcommand{\rhoB}{\tilde{\rhoA}}
\newcommand{\rhoAsystems}{\varrho$ on ${\vphantom{\charlieB}\alice\charlieA}} 
\newcommand{\rhoBsystems}{\rhoB$ on ${\charlieB\bob}}
\newcommand{\charlie} {{\charlieA\charlieB}}
\newcommand{\ac}  {{\alice{:}\charlieA}}
\newcommand{\cb}  {{\smash\charlieB{:}\bob}}
\newcommand{\accb}{{\alice{:}\charlieA{:}\charlieB{:}\bob}}
\newcommand{\acb} {{\alice{:}\charlie{:}\bob}}
\newcommand{\abc} {{\alice\bob{:}\charlie}}
\newcommand{\cab} {{\charlie{:}\alice\bob}}
\newcommand{\abtc}{{\alice\bob\leftrightarrow\charlie}}
\newcommand{\ctab}{{\charlie\rightarrow\alice\bob}}
\newcommand{\abfc}{{\alice\bob\leftarrow\charlie}}
\newcommand{\sabc}{{\alice\bob\mathrlap{{:}\charlie}}}
\newcommand{\repcutA}{{\alice{:}\charlie\bob}}
\newcommand{\repcutB}{{\alice\charlie{:}\bob}}
\newcommand{\trc}{\Tr_{\scriptscriptstyle\charlie}}

\newcommand{\tensor}[1]{^{\otimes{#1}}}
\newcommand{\KD}{K_D\quantity}

\newcommand{\textcopies}  {one}
\newcommand{\copies}      {}
\newcommand{\Rcopies}     {1}
\newcommand{\fraccopies}  {}
\newcommand{\scriptcopies}{}
\newcommand{\tensorcopies}{}

\subsection{$\Rcopies$-bounded quantum key repeater rate} 

Ultimately, we would like to have a full understanding 
of key distillation in a general network scenario.
The immediate first step toward this is to add 
just a single intermediate station Charlie between Alice and Bob. 
This setting, known as the quantum key repeater, has been introduced in~\cite{BCHW}. 
The inputs are now two states $\rhoA$ and $\rhoB$, the first one shared between Alice and Charlie and the second one shared between Charlie and Bob, assumed to be produced independently and thus to be in tensor product.
As in entanglement distillation and key distillation, the parties are given arbitrarily many of copies, for instance produced by identical independent use of a quantum channel.
Using only LOCC operations with public communication, classical communication such that the eavesdropper obtains a classical copy of all the classical data exchanged between the parties, the parties are supposed to distill as close to and as many bits of key as they can, where the task is made possible because Charlie can act globally on his parts of the states.
However, while Charlie is essential to achieve the goal, he is also untrusted,
therefore the key must be secret also from Charlie. Equivalently, we say that Charlie's systems are given to the eavesdropper at the end of the protocol.
In~\cite{HHHO} it was proven that distilling key is equivalent to distilling private states with plain LOCC, without keeping track of the state of the eavesdropper and the public communication copied to it.
In this framework, the eavesdropper becomes the environment, and giving systems to the eavesdropper becomes the trace operation.
The class of LOCC protocols that end tracing Charlie is denoted by $\locc^\leftrightarrow_C(\quantum{A}{:}\quantum{B})=\Tr_{\system{C}}\circ\locc(\quantum{A}{:}\quantum{C}{:}\quantum{B})$. Furthermore, the class of protocols, where Charlie first measures and sends his result to Alice and Bob, who then followup with an arbitrary two-way LOCC protocol is denoted by 
$\locc^\rightarrow_C(\quantum{A}{:}\quantum{B}).$

Like~\cite{BCHW} and~\cite{CF}, 
we consider a variation of the quantum key repeater rate, namely
the $\Rcopies$-bounded quantum key repeater rate, for which associated rates $R_D^{\Rcopies, \leftrightarrow}$ and $R_D^{\Rcopies, \rightarrow}$ can be defined.
The operational interpretation goes as follows: instead of letting Charlie act jointly on arbitrarily many copies of the input states,
we restrict him to act only on \textcopies.
This should model, for example, bounded-memory repeater stations
that can only act on a finite number of copies at the same time.
Alice and Bob can still apply their distillation protocols without restriction on the outcomes of the operations with Charlie. In Appendix \ref{appendix:R_D}, we improve some upper bounds on the general quantum key repeater setting described above, where Charlie can act jointly on arbitrary many copies of the input states.

\begin{definition}[$\Rcopies$-bounded quantum key repeater rate] For any states $\rho$ on $AC$ and $\tilde\rho$ on $\tilde{C}B$, we define
    \begin{align*}
    R_D^{\Rcopies, \leftrightarrow} (\rhoA,\rhoB) &\coloneqq \fraccopies
    \lim _{\varepsilon\to0} \lim _{n\to\infty}
    \sup _{\Pi,\Lambda}
    \quantity{r: \Pi\quantity\big(\quantity[\Lambda\quantity(\rhoA\tensorcopies\otimes\rhoB\tensorcopies)]^{\otimes n}) \approx_\varepsilon \gamma^{ rn }}
    \end{align*}
where $\Pi\in\locc(\quantum{A'}^{n\copies}{:}\quantum{B'}^{n\copies})$ and 
	$\Lambda\in \locc^\leftrightarrow_{C\tilde{C}}(\quantum{A}{:}\quantum{B})$, where $A'$ and $B'$ are the output systems of $\Lambda$. $\approx_\epsilon$ is understood in the trace distance. We make a similar definition with $\leftrightarrow$ replaced by $\rightarrow$. $\gamma^{rn}$ denotes a private state with $\lfloor rn \rfloor$ qubits in each parties key system.
\end{definition}

In~\cite{BCHW} and~\cite{CF}, the $\Rcopies$-bounded quantum key repeater rate was defined using the equivalent expression in \Cref{lem:R_D-K_D} below, which will be more convenient to use in the next section.
\Cref{lem:R_D-K_D} rewrites $R_D^\Rcopies (\rhoA,\rhoB)$ as an optimization over bipartite distillable keys. 
Therefore let us recall that the distillable key $K_D$ of a state $\varrho$ on ${\system{A}\system{B}}$ is defined as~\cite{HHHO}
\[ K_D(\varrho) \coloneqq \fraccopies
\lim _{\varepsilon\to0} \lim _{n\to\infty} 
\quantity{K: \Pi\quantity(\varrho^{\otimes n}) \approx_\varepsilon \gamma^{Kn}}
\]
where $\Pi\in\locc(\quantum{A}^{n\copies}{:}\quantum{B}^{n\copies})$.

\begin{lemma} \label{lem:R_D-K_D}
    For any states $\rho$ on $AC$ and $\tilde\rho$ on $\tilde{C}B$, we have
    \[ R_D^{\Rcopies, \leftrightarrow} (\rhoA,\rhoB) = \fraccopies \sup _{\Lambda} 
    \KD ( \Lambda\quantity(\rhoA\tensorcopies\otimes\rhoB\tensorcopies) )\, .
    \] 
    where $\Lambda\in\locc^{\leftrightarrow}_{C\tilde{C}}
    (\quantum{A}{:}\quantum{B})$. A similar statement holds with $\leftrightarrow$ replaced by $\rightarrow$. 
\end{lemma}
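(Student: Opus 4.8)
The plan is to prove the two directions of the claimed equality $R_D^{1,\leftrightarrow}(\rhoA,\rhoB) = \sup_\Lambda K_D(\Lambda(\rhoA\otimes\rhoB))$ separately, where the supremum is over $\Lambda\in\locc^\leftrightarrow_{C\tilde C}(\quantum{A}{:}\quantum{B})$, and the argument is essentially a ``regrouping of protocols'' observation: a repeater protocol consists of a Charlie-involving LOCC operation $\Lambda$ (which traces out $C\tilde C$ at the end) followed by an Alice--Bob LOCC distillation $\Pi$ acting on $n$ copies of the output, and such a composite protocol is exactly a key-distillation protocol for the bipartite state $\Lambda(\rhoA\otimes\rhoB)$. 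The only subtlety is that $\Lambda$ is applied once (before taking $n$ copies) in the repeater definition, whereas in $K_D$ one first takes $n$ copies and then applies a single LOCC map; but since $\Lambda$ is itself LOCC, $\Lambda^{\otimes n}$ followed by an arbitrary two-way LOCC $\Pi$ is again a two-way LOCC map on $n$ copies of $\rhoA\otimes\rhoB$, and conversely any LOCC on $(\Lambda(\rhoA\otimes\rhoB))^{\otimes n}$ can be viewed as $\Lambda^{\otimes n}$ followed by LOCC. This is why fixing a single $\Lambda$ and distilling from its $n$-fold output costs nothing.

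For the inequality $R_D^{1,\leftrightarrow}(\rhoA,\rhoB) \leq \sup_\Lambda K_D(\Lambda(\rhoA\otimes\rhoB))$, I would fix a repeater protocol achieving rate $r$: a map $\Lambda\in\locc^\leftrightarrow_{C\tilde C}$ and, for each $n$, a map $\Pi_n\in\locc(\quantum{A'}^n{:}\quantum{B'}^n)$ with $\Pi_n\big((\Lambda(\rhoA\otimes\rhoB))^{\otimes n}\big)\approx_\varepsilon \gamma^{rn}$. Setting $\sigma\coloneqq\Lambda(\rhoA\otimes\rhoB)$, the family $(\Pi_n)_n$ is precisely a key-distillation protocol for $\sigma$ at rate $r$, so $K_D(\sigma)\geq r$, and taking the supremum over $\Lambda$ and then over achievable $r$ gives the bound. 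For the reverse inequality, fix $\Lambda\in\locc^\leftrightarrow_{C\tilde C}$ and a key-distillation protocol $(\Pi_n)_n$ for $\sigma=\Lambda(\rhoA\otimes\rhoB)$ at rate $K_D(\sigma)-\delta$. Because $\Lambda$ is LOCC between $\quantum{A}$ and $\quantum{B}$ (with $C\tilde C$ traced out), the composite $\Pi_n\circ\Lambda^{\otimes n}$ is a valid two-way LOCC operation on $(\rhoA\otimes\rhoB)^{\otimes n}$, and it produces a state $\varepsilon$-close to $\gamma^{(K_D(\sigma)-\delta)n}$; this exhibits a repeater protocol of that rate, so $R_D^{1,\leftrightarrow}(\rhoA,\rhoB)\geq K_D(\sigma)-\delta$, and letting $\delta\to0$ and taking the supremum over $\Lambda$ finishes it. The case with $\leftrightarrow$ replaced by $\rightarrow$ is identical after substituting $\locc^\rightarrow_{C\tilde C}$ for $\locc^\leftrightarrow_{C\tilde C}$ throughout, since the only structural feature used is that the Charlie-involving stage is an LOCC map ending in $\Tr_{C\tilde C}$.

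The main obstacle, such as it is, is bookkeeping rather than conceptual: one must be careful about the order of the limits $\lim_{\varepsilon\to0}\lim_{n\to\infty}$ and about the fact that in the repeater definition $\Lambda$ is applied to \emph{one} copy of $\rhoA\otimes\rhoB$ before the $n$-fold tensor power is taken, whereas a naive reading of $K_D(\Lambda(\rhoA\otimes\rhoB))$ might suggest $\Lambda$ should act on all $n$ copies jointly. The resolution — that $\Lambda^{\otimes n}$ composed with arbitrary Alice--Bob LOCC is no more powerful than, and no weaker than, arbitrary Alice--Bob LOCC on $\sigma^{\otimes n}$ — is what makes the two optimizations coincide, and it relies only on the closure of LOCC under composition and tensor products together with the definition of $K_D$ via $\varepsilon$-approximation in trace distance. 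I would state this regrouping as the single lemma doing the work and keep the rest routine.
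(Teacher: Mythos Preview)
Your proof of the inequality $\sup_\Lambda K_D(\Lambda(\rhoA\otimes\rhoB)) \leq R_D^{1,\leftrightarrow}(\rhoA,\rhoB)$ is fine (and is indeed trivial from the definition --- the discussion of $\Lambda^{\otimes n}$ is not really needed, since the repeater rate is \emph{defined} via $\Pi\big([\Lambda(\rhoA\otimes\rhoB)]^{\otimes n}\big)$).

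The gap is in the other direction. In the definition of $R_D^{1,\leftrightarrow}$, the supremum over $\Lambda$ sits \emph{inside} the limits $\lim_{\varepsilon\to0}\lim_{n\to\infty}$, so the optimal Charlie-involving map may depend on both $n$ and $\varepsilon$. Your argument fixes a \emph{single} $\Lambda$ and then asserts that the family $(\Pi_n)_n$ is a key-distillation protocol for $\sigma=\Lambda(\rhoA\otimes\rhoB)$; but what the definition actually gives you is, for each $n$ (and $\varepsilon$), possibly different states $\sigma_n=\Lambda_n(\rhoA\otimes\rhoB)$ and a \emph{single} achievability $\Pi_n(\sigma_n^{\otimes n})\approx_\varepsilon\gamma^{rn}$. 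A one-shot $n$-copy approximation to $\gamma^{rn}$ does not by itself imply $K_D(\sigma_n)\geq r$, because $K_D$ requires the limit $n\to\infty$ for a \emph{fixed} state.

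The paper closes exactly this gap by invoking a continuity-type statement for the distillable key (Lemma~6 of \cite{HHHOb}, with the improved constants of \cite{Winter,Shirokov}): if a state is $\varepsilon$-close to $\gamma^k$ then its distillable key is at least $k-4\varepsilon k-(2-\varepsilon)h(\varepsilon)$. Applied to $\Pi_n(\sigma_n^{\otimes n})$ and combined with the monotonicity of $K_D$ under LOCC and the identity $K_D(\sigma^{\otimes n})=nK_D(\sigma)$, this yields $K_D(\Lambda_n(\rhoA\otimes\rhoB))\geq (1-4\varepsilon)R-2h(\varepsilon)$ for the $\Lambda_n$ coming from the definition; taking the supremum over $\Lambda$ and then $\varepsilon\to0$ finishes it. Your write-up needs this (or an equivalent) ingredient.
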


\begin{proof}
We will only prove the statment for $\leftrightarrow$, the other statement will follow mutatis mutandis.
    The inequality $$\KD(\Lambda\quantity({\rhoA}\tensorcopies\otimes{\rhoB}\tensorcopies))
    \leq \copies R_D^{\Rcopies, \leftrightarrow} (\rhoA,\rhoB)$$ is trivial for $\Lambda\in\locc^{\leftrightarrow}_{C\tilde{C}}
    (\quantum{A}{:}\quantum{B})$. So let us now concentrate on showing the  other direction.
    By definition, any $R <\copies R_D^{\Rcopies, \leftrightarrow}(\rhoA,\rhoB)$ is an achievable rate,
    and thus 
    for all $\varepsilon>0$, there exists $n,$ $\Lambda$ and $\Pi$ as in the definition of the rate satisfying
    \begin{align*}
    \Pi\quantity([\Lambda\quantity(\rhoA\tensorcopies\otimes\rhoB\tensorcopies)]\tensor n ) \approx_\varepsilon \gamma^{nR} \, .
    \end{align*}
    However, it was shown in \cite[Lemma 6]{HHHOb}
    that if $\sigma\approx_\varepsilon \gamma^k$,
    then $\KD(\sigma)\geq k - 4\varepsilon k - (2-\varepsilon)h(\varepsilon)$,
    where we used the improved bounds from \cite{Winter,Shirokov}
    (which can be done by using the continuity of the entropy to
    estimate the Devetak-Winter rate of $\sigma$).
    This means that we have
    \begin{align*}
    nR - 4\varepsilon nR - 2 h(\varepsilon) 
    &\leq \KD(\Pi\quantity(\Lambda\quantity(\rhoA\tensorcopies\otimes\rhoB\tensorcopies)\tensor n ))
    \\&\leq \KD(\Lambda\quantity(\rhoA\tensorcopies\otimes\rhoB\tensorcopies)\tensor n )
    \\&= n \KD(\Lambda\quantity(\rhoA\tensorcopies\otimes\rhoB\tensorcopies) )
    \, ,
    \end{align*}
    where we used the monotonicity of the key rate in the second inequality and the asymptotic definition of the key rate in the equality.       
    Thus for any $\varepsilon>0$ there exists $\Lambda$ such that 
    \[\KD(\Lambda\quantity({\rhoA}\tensorcopies\otimes{\rhoB}\tensorcopies)) \geq (1-4\varepsilon)\copies R - 2 h(\varepsilon).\]
Taking the infimum over $\epsilon$ and the supremum over $R$ gives the desired result.
\end{proof}

\subsection{Upper bound on the $\Rcopies$-bounded quantum key repeater rate of random private quantum states}

We can now provide an upper bound on $R_D^{\Rcopies, \leftrightarrow} $ and $R_D^{\Rcopies, \leftarrow} $ in terms of an LOCC-restricted relative entropy distance. 

\begin{proposition}
    \label{lemma:repeatersep-singlecopy}
    Let $\states\in\allstates(\system{A\charlieA\charlieB B})$
    be a set of states such that
    $\Lambda\in\locc_{C\tilde{C}}^\leftrightarrow(\quantum{A}{:}\quantum{B})$ implies $\Lambda(\states)\subseteq\sepstates(\ab)$, where $A'$ and $B'$ are the outputs of the map $\Lambda$.
    Then, for any states $\rho$ on $AC$ and $\tilde\rho$ on $\tilde{C}B$, we have
    \begin{align*}
    R_D^{\Rcopies, \leftrightarrow} (\rhoA,\rhoB) 
    \leq  D_{\locc^{\leftrightarrow}_{C\tilde{C}}(\quantum{A}{:}\classic{\charlieA}\classic{\charlieB}{:}\quantum{B})}\scriptcopies \quantity\big
    ( \rhoA\otimes\rhoB\| \states ).
    \end{align*}
A similar statement holds with $\leftrightarrow$ replaced by $\rightarrow$. \end{proposition}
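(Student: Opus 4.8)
The plan is to reduce the repeater rate to a supremum of distillable keys via \Cref{lem:R_D-K_D}, bound each such distillable key by the relative entropy of entanglement of the post-repeater bipartite state, and then transport this bound back to $\varrho\otimes\tilde\varrho$ by data processing --- the crucial point being to replace the discarding of Charlie's systems by a measurement, so that the $\channels$ relative entropy appearing on the right-hand side (and not its unrestricted counterpart) comes out.

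First I would use \Cref{lem:R_D-K_D} to write $R_D^{\Rcopies,\leftrightarrow}(\varrho,\tilde\varrho)=\sup_\Lambda\KD(\Lambda(\varrho\otimes\tilde\varrho))$, the supremum running over $\Lambda\in\locc^{\leftrightarrow}_{C\tilde C}(\quantum{A}{:}\quantum{B})$, so it suffices to bound $\KD(\Lambda(\varrho\otimes\tilde\varrho))$ for a fixed such $\Lambda$ by the claimed quantity. By definition $\Lambda=\Tr_{C\tilde C}\circ\Lambda^{\sharp}$ for some three-party operation $\Lambda^{\sharp}\in\locc(\quantum{A}{:}\quantum{C\tilde{C}}{:}\quantum{B})$, which outputs systems $\system{A}'$, $\system{B}'$ for Alice and Bob and some leftover system for Charlie, subsequently discarded. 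I would then pick any fixed measurement $\measurement$ on Charlie's leftover output and set $\Pi\coloneqq(\id_{\system{A}'}\otimes\measurement\otimes\id_{\system{B}'})\circ\Lambda^{\sharp}$; by construction $\Pi\in\locc^{\leftrightarrow}_{C\tilde C}(\quantum{A}{:}\classic{\charlieA}\classic{\charlieB}{:}\quantum{B})$, and $\Tr_{C\tilde C}\circ\Pi=\Lambda$, because measuring a register and then forgetting the outcome is the same channel as discarding it (the POVM elements sum to the identity).

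Next I would invoke the standard bound of the distillable key by the relative entropy of entanglement~\cite{HHHO}, namely $\KD(\omega)\leq D(\omega\|\sepstates(\ab))$ for every state $\omega$ on $\system{A}'\system{B}'$. Applied with $\omega=\Lambda(\varrho\otimes\tilde\varrho)$, using the hypothesis that $\Lambda(\sigma)\in\sepstates(\ab)$ for all $\sigma\in\states$, and then monotonicity of the relative entropy under the partial trace (recalling $\Lambda=\Tr_{C\tilde C}\circ\Pi$), this would give for every $\sigma\in\states$
\begin{align*}
\KD(\Lambda(\varrho\otimes\tilde\varrho))
&\leq D(\Lambda(\varrho\otimes\tilde\varrho)\|\sepstates(\ab))
\leq D(\Lambda(\varrho\otimes\tilde\varrho)\|\Lambda(\sigma))\\
&= D(\Tr_{C\tilde C}\Pi(\varrho\otimes\tilde\varrho)\|\Tr_{C\tilde C}\Pi(\sigma))
\leq D(\Pi(\varrho\otimes\tilde\varrho)\|\Pi(\sigma))\\
&\leq D_{\locc^{\leftrightarrow}_{C\tilde C}(\quantum{A}{:}\classic{\charlieA}\classic{\charlieB}{:}\quantum{B})}(\varrho\otimes\tilde\varrho\|\sigma).
\end{align*}
Taking the infimum over $\sigma\in\states$ turns the right-hand side into $D_{\locc^{\leftrightarrow}_{C\tilde C}(\quantum{A}{:}\classic{\charlieA}\classic{\charlieB}{:}\quantum{B})}(\varrho\otimes\tilde\varrho\|\states)$, and taking the supremum over $\Lambda$ closes the argument. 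The case with $\leftrightarrow$ replaced by $\rightarrow$ would go through identically, Charlie's measurement simply being the first step of the protocol rather than an appended one.

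The step I expect to be the main obstacle is the second paragraph: making sure that every $\Lambda$ coming from \Cref{lem:R_D-K_D} really does factor as $\Tr_{C\tilde C}\circ\Pi$ with $\Pi$ lying \emph{precisely} in the partial-measurement class $\locc^{\leftrightarrow}_{C\tilde C}(\quantum{A}{:}\classic{\charlieA}\classic{\charlieB}{:}\quantum{B})$ over which the right-hand side is defined, and that the assumption on $\states$ is used at exactly the one spot where $\Lambda(\sigma)$ must land in $\sepstates(\ab)$. The remaining ingredients --- \Cref{lem:R_D-K_D}, the bound $\KD\leq$ relative entropy of entanglement, and data processing for the relative entropy --- are routine; in particular no continuity estimate is used, so the fact that the LOCC protocols may enlarge the output dimensions causes no trouble here.
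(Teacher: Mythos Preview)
Your proof is correct and follows essentially the same route as the paper: both arguments combine \Cref{lem:R_D-K_D}, the bound $K_D\leq D(\cdot\|\sepstates)$ from~\cite{HHHO}, and the hypothesis on $\states$ via monotonicity of the relative entropy. The paper's proof runs the chain of inequalities in the opposite direction (starting from $D_\channels(\varrho\otimes\tilde\varrho\|\sigma)$ and ending at $R_D^1$), but the content is identical. Your explicit factorisation $\Lambda=\Tr_{C\tilde C}\circ\Pi$ with $\Pi$ in the partial-measurement class is in fact a slight improvement in precision over the paper, which in its proof silently works with the trace-out class $\locc_{C\tilde C}^\leftrightarrow(\quantum{A}{:}\quantum{B})$ rather than the partial-measurement class appearing in the statement, and only comments on the equivalence afterwards in the proof of \Cref{cor:repeatersep-singlecopy}.
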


\begin{proof}
    \allowdisplaybreaks
    By assumption, for any $\Lambda\in\locc_{C\tilde{C}}^\leftrightarrow(\quantum{A}{:}\quantum{B})$ and $\sigma\in\states$, we have $\Lambda(\sigma)\in\sepstates(\alice\scriptcopies{:}\bob\scriptcopies)$. 
    Therefore, for any such $\sigma$,
    \begin{align*}
    D&_{\locc_{C\tilde{C}}^\leftrightarrow(\quantum{A}{:}\quantum{B})}(\rhoA\tensorcopies\otimes\rhoB\tensorcopies\parallel \sigma ) \\
    & = \sup_{\Lambda\in\locc_{C\tilde{C}}^\leftrightarrow(\quantum{A}{:}\quantum{B})}
    D\quantity\big(\Lambda(\rhoA\tensorcopies\otimes\rhoB\tensorcopies)\parallel \Lambda(\sigma ))  \\
    &\geq \sup_{\Lambda\in\locc_{C\tilde{C}}^\leftrightarrow(\quantum{A}{:}\quantum{B})}
    D\quantity\big(\Lambda(\rhoA\tensorcopies\otimes\rhoB\tensorcopies)\parallel \sepstates (\alice\scriptcopies{:}\bob\scriptcopies)) \\
    & \geq \sup_{\Lambda\in\locc_{C\tilde{C}}^\leftrightarrow(\quantum{A}{:}\quantum{B})}
    \KD\big(\Lambda(\rhoA\tensorcopies\otimes\rhoB\tensorcopies)) \\
    & = \copies R_D^\Rcopies (\rhoA,\rhoB)),
    \end{align*}
    where the last equality is due to \Cref{lem:R_D-K_D} and the last inequality is due to $D(\cdot\|\sepstates)\geq K_D$, a known upper bound on the distillable key~\cite{HHHO}. Taking the infimum over $\sigma$ ends the proof.
\end{proof}

In order to facilitate the use of the proposition, we will record the following corollary.

\begin{corollary}
\label{cor:repeatersep-singlecopy}
For any states $\rho$ on $AC$ and $\tilde\rho$ on $\tilde{C}B$, we have
    \begin{align*}
    R_D^{\Rcopies, \leftrightarrow} (\rhoA,\rhoB) 
    \leq  D_{\locc(\quantum{A}\quantum{B}{:}\classic{\charlieA}\classic{\charlieB})}\scriptcopies \quantity\big
    ( \rhoA\otimes\rhoB\| {\cal S}(A{:}C\tilde{C}B)).
    \end{align*}
and 
    \begin{align*}
    R_D^{\Rcopies, \rightarrow} (\rhoA,\rhoB) 
    \leq  D_{\allmeasurements(C\tilde{C})}\scriptcopies \quantity\big
    ( \rhoA\otimes\rhoB\| {\cal S}(A{:}C\tilde{C}B)).
    \end{align*}
\end{corollary}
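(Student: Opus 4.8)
The plan is to derive the two inequalities from \Cref{lemma:repeatersep-singlecopy} by making the concrete choice $\states=\sepstates(A{:}C\tilde{C}B)$ (the states separable across the bipartition $A\,{:}\,C\tilde{C}B$) and then enlarging the channel set on the right-hand side of \Cref{lemma:repeatersep-singlecopy} to the more convenient one stated in the corollary, using nothing more than monotonicity of the relative entropy under channels.

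\emph{Step 1: the hypothesis of \Cref{lemma:repeatersep-singlecopy} holds for $\states=\sepstates(A{:}C\tilde{C}B)$.} I would check that $\Lambda(\sepstates(A{:}C\tilde{C}B))\subseteq\sepstates(\ab)$ for every $\Lambda\in\locc^{\leftrightarrow}_{C\tilde{C}}(\quantum{A}{:}\quantum{B})$, and likewise in the $\rightarrow$ case. In the two-way case such a $\Lambda$ is a tripartite $A$--$C\tilde{C}$--$B$ LOCC protocol followed by the partial trace over $C\tilde{C}$; grouping $C\tilde{C}$ together with $B$, a tripartite $A$--$C\tilde{C}$--$B$ LOCC protocol is in particular a bipartite LOCC protocol across the cut $A\,{:}\,C\tilde{C}B$, hence it maps $\sepstates(A{:}C\tilde{C}B)$ into itself, and tracing out $C\tilde{C}$ afterwards leaves a state separable across $A\,{:}\,B$. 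In the one-way case a protocol in $\locc^{\rightarrow}_{C\tilde{C}}$ first measures $C\tilde{C}$ and broadcasts the outcome, turning a state in $\sepstates(A{:}C\tilde{C}B)$ into a classically correlated mixture of states in $\sepstates(A{:}B)$, and hence into a state separable across $A\,{:}\,B$ once the two broadcast registers are assigned one to each side; the subsequent two-way LOCC between $A$ and $B$ then preserves separability. Consequently \Cref{lemma:repeatersep-singlecopy} gives $R_D^{\Rcopies, \leftrightarrow}(\rhoA,\rhoB)\le D_{\locc^{\leftrightarrow}_{C\tilde{C}}(\quantum{A}{:}\classic{C}\classic{\tilde{C}}{:}\quantum{B})}(\rhoA\otimes\rhoB\,\|\,\sepstates(A{:}C\tilde{C}B))$ and its analogue for $\rightarrow$.

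\emph{Step 2: enlarging the channel set.} I would next bound these channel-restricted relative entropies by the ones appearing in the corollary through data processing. In the two-way case, every channel in $\locc^{\leftrightarrow}_{C\tilde{C}}(\quantum{A}{:}\classic{C}\classic{\tilde{C}}{:}\quantum{B})$ can be written as a channel in $\locc(\quantum{A}\quantum{B}{:}\classic{C}\classic{\tilde{C}})$ followed by an outcome-erasing channel: merging Alice and Bob into a single party only enlarges the LOCC class, and the final partial trace over $C\tilde{C}$ equals a measurement of $C\tilde{C}$ followed by discarding the outcome; since the extra discarding step can only decrease the relative entropy, $D_{\locc^{\leftrightarrow}_{C\tilde{C}}(\quantum{A}{:}\classic{C}\classic{\tilde{C}}{:}\quantum{B})}(\cdot\|\cdot)\le D_{\locc(\quantum{A}\quantum{B}{:}\classic{C}\classic{\tilde{C}})}(\cdot\|\cdot)$. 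In the one-way case, every channel in $\locc^{\rightarrow}_{C\tilde{C}}(\quantum{A}{:}\classic{C}\classic{\tilde{C}}{:}\quantum{B})$ is a measurement of $C\tilde{C}$ (an element of $\allmeasurements(C\tilde{C})$ acting as the identity on $AB$) followed by a two-way LOCC between $A$ and $B$, and again the latter only decreases the relative entropy, so $D_{\locc^{\rightarrow}_{C\tilde{C}}(\cdots)}(\cdot\|\cdot)\le D_{\allmeasurements(C\tilde{C})}(\cdot\|\cdot)$. Chaining Steps~1 and~2, with $\states=\sepstates(A{:}C\tilde{C}B)$ kept fixed, yields the two displayed inequalities (and these pass from the $\|\sigma$ version to the $\|\states$ version by taking the infimum over $\sigma\in\states$).

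\emph{Main obstacle.} Since \Cref{lemma:repeatersep-singlecopy} already carries the operational content, there is no deep difficulty here; the point requiring care is the definitional bookkeeping in Step~2 --- verifying that ``merge Alice and Bob'' together with ``trace out $C\tilde{C}$ $=$ measure $C\tilde{C}$ then forget the outcome'' really exhibits each $\locc^{\leftrightarrow}_{C\tilde{C}}$ channel as a post-processing of an $\locc(\quantum{A}\quantum{B}{:}\classic{C}\classic{\tilde{C}})$ channel, and that the registers produced by Charlie's broadcast in Step~1 can be consistently apportioned to the $A$ and $B$ sides without spoiling separability across $A\,{:}\,B$. Both are routine once the definitions of \Cref{sec:bipartite} are unfolded, and neither uses anything beyond data processing of the relative entropy and the elementary closure properties of LOCC.
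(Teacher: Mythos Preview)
Your proposal is correct and follows essentially the same approach as the paper: verify that $\sepstates(A{:}C\tilde{C}B)$ satisfies the hypothesis of \Cref{lemma:repeatersep-singlecopy}, then enlarge the channel class via the inclusion $\locc(\quantum{A}{:}\classic{C}\classic{\tilde{C}}{:}\quantum{B})\subseteq\locc(\quantum{A}\quantum{B}{:}\classic{C}\classic{\tilde{C}})$ together with monotonicity of the relative entropy (the paper phrases the trace-out step as ``the bound remains unchanged if we omit to trace out the classical system at Charlie's side'', which is your ``measure then forget'' observation). Your write-up is more explicit about the bookkeeping than the paper's terse proof, but the logical content is the same.
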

\begin{proof}
Note that ${\cal S}(A{:}C\tilde{C}B)$ satisfies the conditions of the proposition for a set ${\cal K}$. Note further that the bound in the proposition remains unchanged if we omit to trace out the classical system at Charlie's side. Hence, we find
    \begin{align*}
    R_D^{\Rcopies, \leftrightarrow} (\rhoA,\rhoB) 
    \leq  D_{\locc(\quantum{A}{:}{C\tilde{C}}{:}\quantum{B})}\scriptcopies \quantity\big
    ( \rhoA\otimes\rhoB\| {\cal S}(A{:}C\tilde{C}B)).
    \end{align*}
Since $\locc(\quantum{A}{:}{C\tilde{C}}{:}\quantum{B}) \subseteq \locc(\quantum{A}\quantum{B}{:}\classic{\charlieA}\classic{\charlieB})$ the first claim follows.

In order to obtain the second bound, we again note that ${\cal S}(A{:}C\tilde{C}B)$ satisfies the conditions of the proposition. Note that any $\Lambda\in\locc_{C\tilde{C}}^\rightarrow(\quantum{A}{:}\quantum{B})$ is equivalently given by a local measurement on charlie, followed by some other global quantum channel and measurement. We remove that quantum channel by use of the monotonicity of the quantum relative entropy and obtain the final statement. 
\end{proof}

Thanks to these upper bounds we are now in position to establish --  partly based on a conjecture -- that two random private states have with high probability a small $\Rcopies$-bounded quantum key repeater rate. 

\renewcommand{\alice}{\system{A}}
\renewcommand{\bob}{\system{B}}

\begin{theorem}
\label{th:random-R1}
Let $|\alice'|=|\charlieA'|=d$ and $|\charlieB|=|\bob|=k$,
and let $\gamma$ be a random private state on ${\alice\alice'\charlieA\charlieA'}$
as defined by \Cref{eq:def-random-gamma}. 
Then,
for any state $\varrho$ on $\charlieB\bob$ it holds
\begin{equation*} 
\label{eq:random-R1} 
\probability(R_D^{1, \leftrightarrow}(\gamma,\varrho) 
\leq \upperconstant \frac{k}{\sqrt{d}}\log{d} )
\geq 1 - e^{-\probconstant d^3}\, ,
\end{equation*}
if Conjecture \ref{prop:D_L|K} is true ($\probconstant,\upperconstant>0$ are constants). Unconditionally, we have
\begin{equation*} 
\label{eq:random-R1} 
\probability(R_D^{1, \rightarrow}(\gamma,\varrho) 
\leq \upperconstant \frac{k}{\sqrt{d}}\log{d} )
\geq 1 - e^{-\probconstant d^3}\, .
\end{equation*}
\end{theorem}

\begin{proof}
We estimate
\begin{align} 
&  \norm{(\gamma-\hat{\gamma})\otimes\varrho}
_{\locc(\quantum{A}\quantum{A}'\quantum{B}{:}\classic{C}\classic{C}'\tilde{\classic{C}})} 
\nonumber
\\& 
\label{eq:loccACB-sepABC}
\leq \norm{(\gamma-\hat{\gamma})\otimes\varrho}
_{\sep(\classic{A}\classic{A}'\classic{B}{:}\classic{C}\classic{C}'\tilde{\classic{C}})} .
\end{align}

Now let $\Delta=\randomshieldplus-\randomshieldminus$, where $\randomshieldplusminus$ are the random orthogonal states on $\system{A}'\system{C}'$ appearing in the shield of $\gamma$. 
Observe that  by \Cref{prop:psi-M}
\begin{align*}
\norm{(\gamma-\hat{\gamma})\otimes\varrho}
&_{\sep(\classic{AA'B{:}CC'\tilde{C}})} 
\\& = \frac{1}{4} 
    \norm{(\psi^+-\psi^-)\otimes\Delta\otimes\varrho}
    _{\sep(\classic{AA'B{:}CC'\tilde{C}})}
\\& \leq \frac{1}{2}
    \norm{\psi^+\otimes \Delta\otimes\varrho }
    _{\sep(\classic{AA'B{:}CC'\tilde{C}})}
\\& \leq \frac{1}{2} 
    (2\robustness(\psi^+\otimes\varrho)+1)
    \norm{\Delta}
    _{\sep(\classic{A}'{:}\classic{C}')}
\\& \leq \frac{1}{2} (4k-1)  
    \norm{\Delta}
    _{\sep(\classic{A}'{:}\classic{C}')}
\\& \leq 2 k  
    \norm{\Delta}_{\sep(\classic{A}'{:}\classic{C}')}.
\end{align*}
Yet, we know from \cite[Section 6.1]{AL}, that $\norm{\Delta}_{\sep}\leq \upperconstant'/\sqrt{d}$ with probability greater than $1-e^{-\probconstant d^3}$. 
Therefore, joining with \Cref{eq:loccACB-sepABC}, we get
\begin{align} 
\label{eq:random-normpartial} 
&\probability( 
\norm{(\gamma-\hat{\gamma})\otimes\varrho}
_{\locc(\quantum{A}\quantum{A}'\quantum{B}{:}\classic{C}\classic{C}'\tilde{\classic{C}})} 
\leq \upperconstant\frac{k}{\sqrt{d}} ) \\
&\qquad \qquad  \geq 1 - e^{-\probconstant d^3}. \nonumber
\end{align}

{We can then apply Conjecture \ref{prop:D_L|K} 
and use the fact that $\hat{\gamma}\otimes\varrho\in\sepstates
(\alice\alice'{:}\charlieA\charlieA'\charlieB\bob)$  to find
\begin{align*}
D&_{\locc(\quantum{A}\quantum{B}{:}\classic{\charlieA}\classic{\charlieB})} 
\quantity(\gamma\otimes\varrho \middle\| 
\sepstates(\alice\alice'{:}\charlieA\charlieA'\charlieB\bob))
\\&=
\left|
D_{\locc(\quantum{A}\quantum{B}{:}\classic{\charlieA}\classic{\charlieB})} 
\quantity(\gamma\otimes\varrho \middle\| 
\sepstates(\alice\alice'{:}\charlieA\charlieA'\charlieB\bob))
\right.
\\&\quad-
\left.
D_{\locc(\quantum{A}\quantum{B}{:}\classic{\charlieA}\classic{\charlieB})} 
\quantity(\hat\gamma\otimes\varrho \middle\| 
\sepstates(\alice\alice'{:}\charlieA\charlieA'\charlieB\bob))
\right|
\\&\leq \kappa 
\norm{(\gamma-\hat{\gamma})\otimes\varrho}
_{\locc(\quantum{A}\quantum{B}{:}\classic{\charlieA}\classic{\charlieB})} \log d
\\&\quad+ g\quantity(\norm{(\gamma-\hat{\gamma})\otimes\varrho}
_{\locc(\quantum{A}\quantum{B}{:}\classic{\charlieA}\classic{\charlieB})})
\,.
\end{align*}
By Corollary \ref{cor:repeatersep-singlecopy} 
    \begin{align*}
    R_D^{\Rcopies, \leftrightarrow} (\gamma, \rho) 
    \leq  D_{\locc(\quantum{A}\quantum{B}{:}\classic{\charlieA}\classic{\charlieB})}\scriptcopies \quantity\big
    ( \gamma\otimes\rho\| {\cal S}(A{:}C\tilde{C}B)).
    \end{align*}

Combining \Cref{eq:random-normpartial} with these two last inequalities leads us to our first claim (with a suitable new constant $C$)
Similarly, using Proposition \ref{prop:D_L|K0} instead of \Cref{prop:D_L|K}} results in a proof of the second claim.
\end{proof}

\subsection[Relation to the PPT square conjecture]{Relation to the PPT\textsuperscript{2} conjecture} 

Motivated by the question whether the key repeater rate can be zero for states with non-zero key rate, the \emph{PPT\textsuperscript{2} conjecture} was introduced by the first author in~\cite{PPT^2}. It states that the sequential composition of any PPT channel is entanglement-breaking (see \Cref{sec:bipartite} for the difference between PPT channels and PPT operations). If true, it would imply that the key repeater rate of two PPT states (which are the only known examples for bound entangled states) is zero, including if the two states are bound entangled states with key. That is, if $\varrho, \tilde \varrho$ are PPT states, then $R_D(\varrho, \tilde \varrho)=0$ even if $K_D(\varrho),K_D(\tilde \varrho)>0$.

Whereas some progress has been made on the conjecture, the general case is still open.
It has subsequently been shown that the gap between key and repeated key can be made arbitrarily large for certain bound entangled states. In~\cite{BCHW} the examples were based on noisy private bit constructions. While in~\cite{CF} large gaps have been shown for noiseless private bits for key repeaters with one-way communication from the repeater station. Even though noiseless private states are NPT (non-positive under partial transposition), the upper bounds in concrete examples have mostly been based on the partial transposition, e.g.~bounding the log-negativity of the states (see \Cref{sec:remarks} for further comments on this point). 

In this work, we presented examples of private bits that have a large gap between key rate and $\Rcopies$-bounded key repeater rate. Our examples are carefully constructed so that arguments based on the partial transposition do not immediately apply (their PPT-restricted relative entropy distance from separable states is large). Our results thus give a complementary view on the PPT\textsuperscript{2} conjecture, by providing non-PPT channels, whose sequential composition has low key, a property shared by entanglement-breaking channels. Our work might thus be viewed as pointing to extensions of the PPT\textsuperscript{2} conjecture.

We would also like to mention an implication for the older \emph{NPT bound entanglement conjecture}, which postulates that there exist undistillable NPT states~\cite{BCDL,DSSTT}. Since the states that we have constructed have low LOCC-restricted relative entropy of entanglement, and since the regularised LOCC-restricted relative entropy of entanglement is an upper bound on the distillable entanglement, they are likely to have small distillable entanglement. We note that the constructed states have large log-negativity and are thus not close to being PPT.

\section{Miscellaneous remarks}
\label{sec:remarks}

We have shown that 
$D_{\sep(\classic{A}\classic{A}'{:}\classic{B}\classic{B}')}
\left(\gamma\|\sepstates(\system{A}\system{A}'{:}\system{B}\system{B}')\right)$
is small with high probability for our random private state $\gamma$ on
${\system{A}\system{A}'\system{B}\system{B}'}$. 
However, what we would ultimately like to show is 
that this remains true for the associated regularized quantity 
$D_{\sep(\classic{A}\classic{A}'{:}\classic{B}\classic{B}')}^{\infty}
\left(\gamma_{\system{A}\system{A}'\system{B}\system{B}'}\|\sepstates(\system{A}\system{A}'{:}\system{B}\system{B}')\right)$, 
whose definition we recall now. Given $\varrho$ a state on 
$\system{C}\system{D}$, first define for each $n\in\N$
\[ D_{\sep(\classic{C}{:}\classic{D})}^{n}\!
\left(\varrho\|\sepstates(\system{C}{:}\system{D})\right) 
\coloneqq  \frac{1}{n} 
D_{\sep(\classic{C}^n{:}\classic{D}^n)}\!
\left(\varrho^{\otimes n}\|\sepstates(\system{C}^n{:}\system{D}^n)\right) 
\]
and then taking the limit, define
\[ D_{\sep(\classic{C}{:}\classic{D})}^{\infty}
\left(\varrho\|\sepstates(\system{C}{:}\system{D})\right) 
\coloneqq  \lim_{n\rightarrow+\infty}  D_{\sep(\classic{C}{:}\classic{D})}^{n}
\left(\varrho\|\sepstates(\system{C}{:}\system{D})\right). \]
Indeed, we know from~\cite[Theorem 2]{Piani} 
that $D_{\sep}(\cdot\|\sepstates)$ is a super-additive quantity, 
which means that, for any $n\in\N$, 
$D_{\sep}\left(\gamma^{\otimes n}\|\sepstates\right)/n 
\geq D_{\sep}(\gamma\|\sepstates)$.

Let us briefly expand on why we claim that upper bounding 
$D_{\sep}^{\infty}\left(\gamma\|\sepstates\right)$ 
as tightly as possible would be of interest. 
First, we know from~\cite{BCHW} that the quantum key repeater rate of two copies of $\gamma$ (without the $\Rcopies$-bounded restriction) is upper bounded by 
$D_{\locc}^{\infty}\left(\gamma\|\sepstates\right)$ 
(an upper bound that we slightly improve in Appendix \ref{appendix:R_D}), 
hence upper bounded by $D_{\sep}^{\infty}\left(\gamma\|\sepstates\right)$. 
Second, we know from~\cite{LiW} that the distillable entanglement 
is upper bounded by $D_{\locc}^{\infty}\left(\gamma\|\sepstates\right)$, 
hence upper bounded by $D_{\sep}^{\infty}\left(\gamma\|\sepstates\right)$. 
Any upper bound on $D_{\sep}^{\infty}\left(\gamma\|\sepstates\right)$ is thus 
automatically an upper bound on these two important operational quantities. 
Unfortunately, we are not able to prove with our current techniques that 
$D_{\sep}^{\infty}\left(\gamma\|\sepstates\right)$ 
is with high probability small. 

Related to this comment, let us emphasize once more that the ``usual'' bounds on quantities such as the quantum key repeater rate or the distillable entanglement, based on the partial transposition, are not useful in the present case. For instance, the log negativity of our random private state $\gamma$, i.e.~$E_N(\gamma)\coloneqq \log\norm{\gamma^{\Gamma}}_1$, is with high probability high. Indeed, in matrix notation we have
\[ \gamma^{\Gamma} = \frac{1}{2} \begin{pmatrix} \openone/d^2 & 0 & 0 & 0 \\ 0 & 0 & {(\randomshieldplus-\randomshieldminus)}^{\Gamma}/2 & 0 \\ 0 & {(\randomshieldplus-\randomshieldminus)}^{\Gamma}/2 & 0 & 0 \\  0 & 0 & 0 & \openone/d^2 \end{pmatrix}, \]
and we thus easily see that 
$\norm{\gamma^{\Gamma}}_1= 1 + 
\norm{{(\randomshieldplus-\randomshieldminus)}^{\Gamma}}_1$. 
Now, the spectrum of the random matrix ${(\randomshieldplus-\randomshieldminus)}^{\Gamma}$ can be precisely studied (see e.g.~\cite[Section 3]{Montanaro}), but for our purposes it is in fact enough to simply know that there exists a universal constant $\lowerconstant>0$ such that $\norm{{(\randomshieldplus-\randomshieldminus)}^{\Gamma}}_1\geq \lowerconstant$ with high probability. And therefore, $E_N(\gamma)\geq \log(1+\lowerconstant)$ with high probability.
\smallskip

For the sake of clarity, we focused in \Cref{sec:p-bit} on one particular way of constructing random private states. However, the properties that we described would hold true for many other random private state models. For instance, one could think of picking as states $\randomshieldplusminus$, two independent uniformly distributed mixed states on $\system{A}'\system{B}'$, or mixtures of order $d^2$ independent uniformly distributed pure states on $\system{A}'\system{B}'$. These would be with high probability approximately orthogonal, so that the random state $\gamma$ on $\system{A}\system{A}'\system{B}\system{B}'$ formed out of them would be with high probability an approximate private state. Moreover, it would have with high probability all the previously observed features. It thus appears as a generic aspect of private states that their amount of distillable entanglement and their amount of data-hiding have to obey some trade-off. One important open question at this point would nonetheless be: what is the actual distribution of the random private states which are produced in ``usual'' quantum key distribution protocols? Indeed, however wide the range of models our results apply to, it would be interesting to know whether or not the outputs of error correction and privacy amplification procedures which are performed in practice fall into this general framework.

\section{Acknowledgements}

  We would like to thank Ludovico Lami and an anonymous referee for suggesting improvements to the presented results. This research was financially supported by the European Research Council (Grant agreements No.~337603 and No.~648913 and No.~818761), the Villum Centre of Excellence for the Mathematics of Quantum Theory, the Danish Council for Independent Research (Sapere Aude), the John Templeton Foundation (Grant No.~48322), the French National Centre for Scientific Research (ANR Project Stoq 14-CE25-0033) and the Bundesministerium f\"ur Bildung und Forschung (BMBF) through Grant 16KIS0857.

\appendices

\section{Local distinguishability of isotropic states}
\label{section:iso}

    \newcommand{\iso}{{iso}}
    \newcommand{\isostate}{\iota}
    \newcommand{\isooperator}{I}
    \newcommand{\local}{{\mathbf{LO}}}
    \newcommand{\localchannels}{\local}
    \newcommand{\localmeasurements}{\local}
    \newcommand{\isotwirl}{\mathcal{I}}

    Fix a product basis $\quantity{\ket{ij}}$ on $\system{C}\system{D}$ and define the following states and projectors:
    \begin{align*}
    P      &\coloneqq\frac{1}{d}\sum_{ij}\ketbra{ii}{jj} &
    P_\perp&\coloneqq\openone-P
    \\
    \psi      &\coloneqq\frac{P}{\Tr P}=P & \psi_\perp&\coloneqq\frac{P_\perp}{\Tr P_\perp}
    \end{align*}
    Let $p\in[0,1]$, a general isotropic state has the form
    \begin{align*}
    \isostate(p) &\coloneqq p \psi + (1-p) \psi_\perp \,,
    \end{align*}
    The action of the isotropic twirl $\isotwirl$ 
    on any Hermitian operator $X$ on $\system{A}\system{B}$ is
    \begin{align*}
    \isotwirl(X) &= (\Tr XP)\psi + (\Tr XP_\perp)\psi_\perp \;.
    \intertext
    {It is important to note that the isotropic twirl 
    can be implemented using one-way LOCC by sampling uniformly at random 
    a finite set of product unitaries~\cite{DCEL}, thus it can be used to construct $\locc$, $\sep$ and $\ppt$ operations, 
    as is the case of \Cref{lemma:pbit-M-norms-lower}.
    With some abuse of notation, let us denote ${U}\in\isotwirl$ unitaries sampled in the twirl, and with $|\isotwirl|$ the finite number of such unitaries in the twirl, then}
    \isotwirl(X) &=  \frac{1}{|\isotwirl|} \sum_{{U}\in\isotwirl}
    (\mathcal{U}\otimes\overline{\mathcal{U}})(X)\;.
    \end{align*}
    where $\mathcal{U}(\rho)= U \rho U^\dagger$ and $\overline{\mathcal{U}}(\rho)= \overline U \rho U^T$ are the conjugations by $U$ and $\overline U$, and $\overline U$ denotes the complex conjugate (which is basis dependent).

\begin{lemma}[\cite{LiW}]
\label{lemma:iso:distinguishability}
Let $\eta(x\|y)\coloneqq x(\log x - \log y)$,  $\sepstates\equiv\sepstates(\system{C}{:}\system{D})$, 
$\localmeasurements\equiv\localmeasurements(\classic{C}{:}\classic{D})$ and 
$\pptmeasurements  \equiv\pptmeasurements  (\classic{C}{:}\classic{D})$.
For any isotropic states
$\rho   = \isostate(p)$ and $\sigma = \isostate(q)$
on $\system{C}\system{D}$ we have
\begin{align*}
\norm{\rho-\sigma}_\localchannels&=
\norm{\rho-\sigma}_\pptchannels   =
    2\frac{d}{d+1}|p-q|
\\
D_\localmeasurements(\rho\|\sigma)   &=
D_\pptmeasurements  (\rho\|\sigma) \\&=
    \frac{d}{d+1} \quantity[ \vphantom{A_p^T} 
    \eta\quantity( p+\frac{1}{d}\middle\|q+\frac{1}{d})+
    \eta\quantity( 1-p\middle\|1-q) ]
\intertext{and consequently if $\rho$ is entangled ($p\geq1/d$)}
\norm{\rho-\sepstates}_\localchannels&=
\norm{\rho-\sepstates}_\pptchannels   =
    2\frac{d}{d+1}\quantity(p-\frac{1}{d})\;.
\\
D_\localmeasurements(\rho\|\sepstates)   &=
D_\pptmeasurements  (\rho\|\sepstates) \\&=
    \frac{d}{d+1} \quantity[ \vphantom{A_p^T} 
    \eta\quantity( p+\frac{1}{d}\middle\|\frac{2}{d})+
    \eta\quantity( 1-p\middle\|1-\frac{1}{d}) ]
\,.
\end{align*}
\end{lemma}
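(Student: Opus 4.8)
The plan is to follow~\cite{LiW} and exploit that the isotropic twirl $\isotwirl$ is a PPT (indeed LOCC) operation -- being a mixture of conjugations by product unitaries $U\otimes\overline U$ -- and that it fixes every isotropic state. Thus for $\measurements\in\{\localmeasurements,\pptmeasurements\}$ and any $\measurement\in\measurements$, monotonicity together with $\isotwirl(\isostate(p))=\isostate(p)$ gives $D(\measurement(\isostate(p))\|\measurement(\isostate(q)))=D(\measurement\circ\isotwirl(\isostate(p))\|\measurement\circ\isotwirl(\isostate(q)))$, and $\measurement\circ\isotwirl$ is again in $\measurements$ when $\measurements=\pptmeasurements$; so in the PPT case we may assume the measurement is ``twirl--symmetric''. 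The adjoint of $\isotwirl$ sends a POVM element $T_i$ to $\isotwirl^*(T_i)=x_iP+y_iP_\perp$ with $x_i=\Tr(T_i\psi),\,y_i=\Tr(T_i\psi_\perp)\in[0,1]$ and $\sum_ix_i=\sum_iy_i=1$, and such an operator is a valid PPT effect exactly when $x_i\leq(d+1)y_i$ (check the two eigenvalues of $(x_iP+y_iP_\perp)^\Gamma=y_i\openone+\tfrac{x_i-y_i}{d}\mathbb{F}$). Since $\Tr((x_iP+y_iP_\perp)\isostate(p))=x_ip+y_i(1-p)$, a twirl--symmetric PPT measurement induces, on the one--parameter family $p\mapsto\isostate(p)$, a binary--input classical channel with rows $(x_i)_i$ and $(y_i)_i$ subject only to $x_i\leq(d+1)y_i$.

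The crucial point is that \emph{every} such channel is a classical post-processing of the single channel $E$ whose output distribution on input $\isostate(p)$ is $\big(\tfrac{d}{d+1}(p+\tfrac1d),\ \tfrac{d}{d+1}(1-p)\big)$: the post-processing ``output $i$ with probability $x_i$ from the first outcome of $E$, with probability $\tfrac{(d+1)y_i-x_i}{d}$ from the second'' is a legitimate stochastic matrix precisely because $x_i\leq(d+1)y_i$. By monotonicity of the classical relative entropy and of the classical total variation distance under post-processing this yields $D_{\pptmeasurements}(\isostate(p)\|\isostate(q))\leq D\big(E(p)\|E(q)\big)$ and $\norm{\isostate(p)-\isostate(q)}_{\pptchannels}\leq\norm{E(p)-E(q)}_1$; a one--line evaluation of these two classical quantities gives $\tfrac{d}{d+1}\big[\eta(p+\tfrac1d\|q+\tfrac1d)+\eta(1-p\|1-q)\big]$ and $\tfrac{2d}{d+1}|p-q|$ respectively.

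For the matching lower bounds I would realise $E$ (up to relabelling of outcomes) by an explicit \emph{local} measurement: let each party measure its share in the fixed basis $\{\ket i\}$. A direct computation shows this $\localchannels$ measurement maps $\isostate(p)$ to the distribution giving mass $(p+\tfrac1d)/(d+1)$ to each of the $d$ correlated outcomes $(i,i)$ and mass $(1-p)/(d^2-1)$ to each of the remaining $d^2-d$ outcomes; substituting into the classical relative entropy, resp. classical $1$-norm, reproduces exactly the two formulas above. Together with $\localmeasurements\subseteq\sepmeasurements\subseteq\pptmeasurements$ this gives $\norm{\isostate(p)-\isostate(q)}_{\localchannels}=\norm{\isostate(p)-\isostate(q)}_{\pptchannels}$ and $D_{\localmeasurements}(\isostate(p)\|\isostate(q))=D_{\pptmeasurements}(\isostate(p)\|\isostate(q))$ with the stated values.

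Finally, for the distances from $\sepstates$ I would use that $\isostate(p)$ is separable iff $p\leq1/d$, so $\isostate(1/d)\in\sepstates$; evaluating the two formulas at $q=1/d$ gives the ``$\leq$'' directions of the last two displays. For ``$\geq$'', take any $\sigma\in\sepstates$ and twirl it: $\isotwirl(\sigma)=\isostate(q)$ with $q=\Tr(P\sigma)\leq1/d$. In the PPT case, monotonicity under $\isotwirl$ gives $D_{\pptmeasurements}(\isostate(p)\|\sigma)\geq D_{\pptmeasurements}(\isostate(p)\|\isostate(q))$, and the explicit expression is strictly decreasing in $q$ on $[0,p)$ (its $q$-derivative equals $-\tfrac{p+1/d}{q+1/d}+\tfrac{1-p}{1-q}<0$ for $q<p$), hence $\geq D_{\pptmeasurements}(\isostate(p)\|\isostate(1/d))$; in the LO case one instead averages the computational--basis measurement over conjugations $V\otimes\overline V$ and uses joint convexity of the relative entropy to reach $D_{\localmeasurements}(\isostate(p)\|\sigma)\ge D(\measurement_0(\isostate(p))\|\measurement_0(\isotwirl(\sigma)))$ and then the same monotonicity in $q$. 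The (easier) analogue handles the norm, and taking the infimum over $\sigma$ closes the proof. The main obstacle I expect is the bookkeeping that makes twirl--symmetrisation and the factoring--through--$E$ argument rigorous simultaneously for PPT and for genuinely local (no--communication) measurements -- in particular the convexity step needed for the distance from $\sepstates$ in the LO case, where plain monotonicity under the twirl is not available.
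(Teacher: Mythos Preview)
Your proposal is correct and follows essentially the same strategy as the paper: computational-basis measurement for the lower bounds, reduction to twirl-symmetric measurements plus identification of the extremal PPT effect for the upper bounds, and joint convexity to handle the twirl when computing the distance from $\sepstates$. Your ``every twirl-symmetric PPT POVM is a classical post-processing of the single binary channel $E$'' is exactly the paper's ``fine-grain into the two extremal PPT isotropic operators $P+\tfrac{1}{d+1}P_\perp$ and $P_\perp$, then recombine into a binary measurement'', just phrased more cleanly; and your use of direct monotonicity of $D_{\pptmeasurements}$ under the (PPT) twirl for the $\sepstates$ lower bound is a legitimate shortcut where the paper instead runs the convexity argument uniformly for both $\localmeasurements$ and $\pptmeasurements$.
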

\begin{proof}
The proof follows step by step the proof for the local relative entropy of the maximally entangled state found in~\cite[Proposition~4]{LiW}.
First we estimate the lower bounds using 
the (local) measurement in the computational basis. 
For the purpose, consider the following parametrization of the isotropic states:
\begin{align*}
\rho &= p\psi + (1-p)\psi_\perp = a\psi + (1-a)\frac{\openone}{d^2}
\\
\sigma &= q\psi + (1-q)\psi_\perp = b\psi + (1-b)\frac{\openone}{d^2}
\end{align*}
which gives
\begin{align*}
p&=a\frac{d^2-1}{d^2} + \frac{1}{d^2}
\\
q&=b\frac{d^2-1}{d^2} + \frac{1}{d^2}
\\
|p-q|&=\frac{d^2-1}{d^2}|a-b|
\;.
\end{align*}
For any $\channels$ norm we have
\begin{align*}
\norm{\rho-\sigma}_\channels
&= |p-q|\cdot\norm{\psi-\psi_\perp}_\channels \\
&= |a-b|\cdot\norm{\psi-\tfrac{\openone}{d^2}}_\channels
\;,
\end{align*}
meaning that the optimal measurement is independent of the isotropic states.
In particular we obtain
\[\frac{d^2-1}{d^2}\norm{\psi-\psi_\perp}_\channels
= \norm{\psi-\tfrac{\openone}{d^2}}_\channels\;.\]
The measurement in the computational basis then gives
\begin{align}
\norm{\psi-\frac{\openone}{d^2}}_\localchannels\geq 2\frac{d-1}{d}
\end{align}
and thus
\begin{align}
\label{eq:norm:iso:lower}
\norm{\rho-\sigma}_\localchannels
=    2\frac{d}{d+1}|p-q|\;.
\end{align}
For the relative entropy the measurement yields 
\begin{align}
D_\localmeasurements&(\rho\|\sigma)
    \nonumber
\\&\geq 
    \sum_{ij} \eta \left( 
    \frac{a}{d}\delta_{ij}+\frac{1-a}{d^2} \middle\|
    \frac{b}{d}\delta_{ij}+\frac{1-b}{d^2} \right)
    \nonumber
\\&\geq
    \frac{d-1}{d} \quantity( \vphantom{A_p^T} 
    \eta\quantity\big( a+\tfrac{1}{d-1}\big\|b+\tfrac{1}{d-1})+
    \eta\quantity( 1-a\middle\|1-b) )
    \nonumber
\\&=
    \frac{d}{d+1} \quantity( \vphantom{A_p^T} 
    \eta\quantity( p+\tfrac{1}{d}\middle\|q+\tfrac{1}{d})+
    \eta\quantity( 1-p\middle\|1-q) )
\label{eq:relent:isoiso:lower}
\end{align}
Notice that in both cases, the outcome of the local measurement is the same as the outcome of the binary projective measurement on the maximally correlated subspace and the remaining orthogonal subspace.

To upper bound $\norm{\rho-\sigma}_\pptchannels$, 
we use that any measurement acting on $\rho$ and $\sigma$ can be reduced to an isotropic measurement using $\Tr M\isostate(p)=\Tr \isotwirl(M)\isostate(p)$, where $M$ is any positive operator~\cite{HH}.
If $M$ is a PPT operator, then $\isotwirl(M)$ will be a PPT isotropic operator, all a which can be decomposed into a combination of the two extremal PPT isotropic operators $P+P_\perp/(d+1)$ and $P_\perp$ (see \Cref{figure:iso}).
We can thus fine grain the measurement into operators proportional to the two extremal ones and then join them into an isotropic binary measurement.
The same is true for the relative entropy using first joint convexity to fine grain the measurement into the extremal operators, and then using 
$\eta(ax\|ay)+\eta(bx,by) = \eta((a+b)\|(a+b)y)$ to join them into a binary measurement.
The result is, that we can restrict to binary measurements without loss of generality, and that the optimal measurement is the binary measurement with the two extremal PPT points as measurement operators.

Let $\alpha,\beta\in[0,1]$, a general isotropic measurement operator has the form
\begin{align*}
\isooperator^{\alpha,\beta} & = \alpha P + \beta P_\perp
\intertext{and thus the associated dual operator (see \Cref{sec:Mnorms}) of an isotropic binary measurement 
$\measurement=(\isooperator^{\alpha,\beta},\openone-\isooperator^{\alpha,\beta})$
has the form:}
    K_\isooperator^{\alpha,\beta} &= (2\alpha-1) P + (2\beta-1) P_\perp
    \;.
\end{align*}
{We thus find that:}
\begin{align}
\label{eq:norm:isoiso}
\norm{\measurement
\quantity\big(\psi-\psi_\perp)}_1
&=\Tr[K_\isooperator^{\alpha,\beta} (\psi-\psi_\perp)]
=2\cdot|\alpha -\beta|
\;.
\end{align}
The extremal operators are at
$\alpha=1$ and $\beta=\frac{1}{d+1}$, giving
\[\norm{\psi-\psi_\perp}_\pptchannels \leq 2\frac{d}{d+1}\]
and matching the lower bound of \Cref{eq:norm:iso:lower}. This proves
\begin{align*}
\norm{\rho-\sigma}_\localchannels =
\norm{\rho-\sigma}_\pptchannels   = 2\frac{d}{d+1}|p-q|\;.
\end{align*}
Similarly for the relative entropy we find that the extremal measurement achieves
\begin{align*}
D&_\pptmeasurements(\rho\|\sigma)  
\\&\leq
    \eta\quantity(p+\frac{1-p}{d+1}\middle\|q+\frac{1-q}{d+1})
   +\eta\quantity(\frac{(1-p)d}{d+1}\middle\|\frac{(1-q)d}{d+1})
\\&=\frac{1}{d+1}\eta\quantity(pd+1\middle\|qd+1)
   +\frac{d}{d+1}\eta\quantity(1-p\middle\|1-q)
\\&=\frac{d}{d+1} \quantity( 
    \eta\quantity( p+\frac{1}{d}\middle\|q+\frac{1}{d})+
    \eta\quantity( 1-p\middle\|1-q) )
\end{align*}
matching the lower bound in \Cref{eq:relent:isoiso:lower} and proving
\begin{align}
D&_\localmeasurements(\rho\|\sigma)
=D_\pptmeasurements  (\rho\|\sigma)
\nonumber
\\&=\frac{d}{d+1} \quantity( 
    \eta\quantity( p+\frac{1}{d}\middle\|q+\frac{1}{d})+
    \eta\quantity( 1-p\middle\|1-q) )\;.
\label{eq:relent:isoiso}
\end{align}

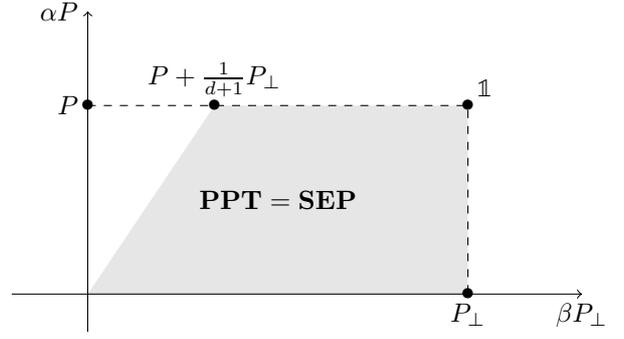
\begin{figure}
\centering
\begin{tikzpicture}[xscale=5,yscale=2.5]
    \draw [->] (-.2,0) -- (1.3,0) node [below] {$\beta P_\perp$};
    \draw [->] (0,-.2) -- (0,1.5) node [left ] {$\alpha P$};

    \draw (1,1) node {\textbullet} node [above right] {$\openone$};
    \draw (1/3,1) node {\textbullet} node [above] {$P+\frac{1}{d+1}P_\perp$};
    \draw (0,1) node {\textbullet} node [left ] {$P$};
    \draw (1,0) node {\textbullet} node [below] {$P_\perp$};
    
    \node at (.5,.5) {$\ppt=\sep$};

    \draw [dashed] (0,1) -- (1,1) -- (1,0);
    \fill [opacity=.1,black] (0,0) -- (1/3,1) -- (1,1) -- (1,0) -- cycle;
    
\end{tikzpicture}
\caption{\label{figure:iso}The space of isotropic measurement operators, the white area are the ``entangled'' measurement operators. Notice that any separable operator can be written as linear combination of $P_\perp$ and $P+\frac{1}{d+1}P_\perp$.}
\end{figure}

For the second part of the claim, we need to use the isotropic twirl to argue that it is enough to look at isotropic separable states to compute $\norm{\rho-\sepstates}_\localchannels$ and $D_\localchannels(\varrho\|\sepstates)$. 
However, as a channel, the isotropic twirl needs shared randomness/communication and thus is not in $\localchannels$,
therefore we need to use 
the convexity of the $\localmeasurements$ norm and 
the joint convexity of the $\localmeasurements$ relative entropy 
to de-randomize it.
Namely, let $\varsigma\in\sepstates$ be such that 
$\norm{\varrho-\sepstates}_\localchannels = \norm{\varrho - \varsigma}_\localchannels$, then
\begin{align*}
\norm{\rho -\isotwirl(\varsigma)}_\localchannels
  &=\norm{\isotwirl(\rho) -\isotwirl(\varsigma)}_\localchannels
\\&\leq \frac{1}{|\isotwirl|} 
    \sum_{{U}\in\isotwirl}
    \norm{(\mathcal{U}\otimes\overline{\mathcal{U}})(\rho-\varsigma)}_\localchannels
\intertext
{Since $\mathcal{U}\otimes\overline{\mathcal{U}}$ is a reversible local operation, we now use that $\norm{(\mathcal{U}\otimes\overline{\mathcal{U}})(X)}_\localchannels
=\norm{X}_\localchannels$, and get}
\norm{\rho -\isotwirl(\varsigma)}_\localchannels
  &= \frac{1}{|\isotwirl|} 
    \sum_{{U}\in\isotwirl}
    \norm{\rho-\varsigma}_\localchannels
\\&=\norm{\rho-\varsigma}_\localchannels
\\&=\norm{\rho-\sepstates}_\localchannels
\\&\leq
    \norm{\rho-\sepstates}_\pptchannels
\\&\leq
    \norm{\rho-\isostate(\tfrac{1}{d})}_\pptchannels
\;.
\end{align*}
Similarly, if  $\varsigma\in\sepstates$ is such that 
$D_\localchannels(\varrho\|\sepstates) = D_\localchannels(\varrho\|\varsigma)$,
then using the joint convexity of $D_\localchannels$ we get 
\begin{align*}
D_\localchannels(\rho\|\isotwirl(\varsigma))
  &\leq D_\localchannels(\rho\|\varsigma)
\\& =   D_\localchannels(\rho\|\sepstates)
\\&\leq D_\pptchannels(\rho\|\sepstates)
\\&\leq D_\pptchannels(\rho\|\isostate(\tfrac{1}{d}))
\;.
\end{align*}
However it is straightforward to check that over the separable isotropic states, \Cref{eq:norm:isoiso,eq:relent:isoiso} achieve the minimum for $\sigma=\isostate(\frac{1}{d})$ and therefore we find
\begin{align*}
\norm{\rho -\isostate(\tfrac{1}{d})}_\localchannels &\leq
\norm{\rho -\isotwirl(\varsigma)   }_\localchannels
\\
D_\localchannels(\rho\|\isostate(\tfrac{1}{d})) &\leq
D_\localchannels(\rho\|\isotwirl(\varsigma))
\end{align*}
concluding the proof.
\end{proof}

\section{Local distinguishability of random private quantum states from separable states}
\label{appendix:improvement}

\subsection{Operator ordering with respect to measurements}
\label{sec:data-hiding}

Contrary to the definitions of $\measurements$ norm and $\measurements$ relative entropies (see \Cref{sec:Mnorms}), the definition below, as far as we are aware of, has not been introduced in the literature before.

\begin{definition}[$\mathbf{M}$ (partial) ordering]
    \label{def:M-order}
    For any Hermitian operators $X,Y$ on $\system{H}$, we define the notion of ordering in restriction to $\mathbf{M}$ by:
    \[ X\leq_{\mathbf{M}}Y\ \ \text{if}\ \ \forall\ \mathcal{M}\in\mathbf{M},\ \mathcal{M}(X) \leq  \mathcal{M}(Y)
    \,.\]
\end{definition}

Note that the condition in \Cref{def:M-order} above can be rewritten as point-wise ordering, i.e.~writing $\mathcal{M}=(T_i)_{i\in I}$,
\[\mathcal{M}(X) \leq \mathcal{M}(Y) 
\ \ \text{if}\ \ \forall\ i\in I,\ \Tr(T_i X)\leq\Tr(T_i Y) \,.\]

We now explore how this notion of measurement ordering connects to that of measurement distance. We begin with the following easy observations in \Cref{lemma:order-relent,lemma:order-trace} below. \Cref{lemma:order-relent} will be used later in the paper, while \Cref{lemma:order-trace} is just stated here as an independent comment.

\begin{lemma}[Relating $\mathbf{M}$ ordering and $\mathbf{M}$ relative entropy distance]
    \label{lemma:order-relent}
    Let $\varrho,\sigma$ be states and $\measurements$ be a set of measurements on $\system{H}$.
    If \(\varrho \leq _{\measurements} (1+\epsilon)\sigma \) for some $\epsilon>0$, then \(D_{\measurements} (\varrho\parallel\sigma) \leq  \log (1 + \epsilon) \).
\end{lemma}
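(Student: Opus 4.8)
The statement should follow almost immediately by unfolding the definitions and reducing everything to the classical (commutative) setting. First I would recall that, by \Cref{def:M-D} (or rather its measurement analogue), $D_{\measurements}(\varrho\|\sigma) = \sup_{\measurement\in\measurements} D(\measurement(\varrho)\|\measurement(\sigma))$, so it suffices to fix an arbitrary $\measurement = (T_i)_{i\in I}\in\measurements$ and bound $D(\measurement(\varrho)\|\measurement(\sigma))$ by $\log(1+\epsilon)$ uniformly in $\measurement$.

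Next I would translate the hypothesis $\varrho\leq_{\measurements}(1+\epsilon)\sigma$ through this fixed $\measurement$. By \Cref{def:M-order} (and the point-wise reformulation noted right after it), this gives $\Tr(T_i\varrho)\leq (1+\epsilon)\Tr(T_i\sigma)$ for every $i\in I$. Writing $p_i \coloneqq \Tr(T_i\varrho)$ and $q_i\coloneqq \Tr(T_i\sigma)$, we have two probability distributions on $I$ (since $\sum_i T_i = \openone$ and both states have trace $1$), related by $p_i\leq (1+\epsilon)q_i$ for all $i$. In particular, whenever $q_i=0$ we also have $p_i=0$, so the classical relative entropy $D(\measurement(\varrho)\|\measurement(\sigma)) = \sum_i p_i\big(\log p_i - \log q_i\big)$ is well defined (with the usual convention $0\log 0 = 0$), and for the indices contributing to the sum we have $p_i/q_i \leq 1+\epsilon$, hence $\log p_i - \log q_i \leq \log(1+\epsilon)$.

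Finally I would combine these:
\[
D(\measurement(\varrho)\|\measurement(\sigma)) = \sum_i p_i\big(\log p_i - \log q_i\big) \leq \log(1+\epsilon)\sum_i p_i = \log(1+\epsilon),
\]
using $\sum_i p_i = 1$. Taking the supremum over $\measurement\in\measurements$ yields $D_{\measurements}(\varrho\|\sigma)\leq\log(1+\epsilon)$, as claimed. There is no real obstacle here; the only point requiring a word of care is the handling of zero-probability outcomes, which is immediately resolved by the inequality $p_i\leq(1+\epsilon)q_i$ itself. The whole argument is essentially the observation that $\measurements$-ordering collapses to a bank of scalar inequalities on the outcome statistics, on which the classical Kullback--Leibler divergence is trivially controlled.
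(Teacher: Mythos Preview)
Your proof is correct and follows essentially the same approach as the paper: reduce to the classical Kullback--Leibler divergence on the outcome distributions $p_i=\Tr(T_i\varrho)$, $q_i=\Tr(T_i\sigma)$, use $p_i\leq(1+\epsilon)q_i$ to bound each $\log(p_i/q_i)$ by $\log(1+\epsilon)$, and then take the supremum over $\measurement\in\measurements$. Your explicit handling of zero-probability outcomes is a small refinement over the paper's presentation, but the argument is otherwise identical.
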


\begin{proof}
    If $p,q$ are probability distributions satisfying $p\leq (1+\epsilon)q$ for some $\epsilon>0$, then clearly 
    \begin{align*}
    D(p\parallel q) 
      &= \sum_i p_i\log\left(\frac{p_i}{q_i}\right) 
    \\&\leq \sum_i p_i\log(1+\epsilon) 
    \\&= \log(1+\epsilon)
    \, .
    \end{align*}
    Now, for any \(\measurement\in \measurements\),
    \(\measurement (\varrho)\) and \(\measurement (\sigma)\) are classical probability distributions. So what we have shown is that, if \(\measurement(\varrho) \leq(1+\epsilon)\measurement(\sigma)\) for all \(\measurement \in \measurements\), then \(D\left(\measurement(\varrho)\|\measurement(\sigma)\right)\leq \log\left(1+\varepsilon\right)\) for all \(\measurement \in \measurements\). And this is exactly the statement in \Cref{lemma:order-relent}. 
\end{proof}

\begin{lemma}[Relating $\mathbf{M}$ ordering and $\mathbf{M}$ norm distance]
    \label{lemma:order-trace}
    Let $\varrho,\sigma$ be states and $\measurements$ be a set of measurements on $\system{H}$.
    If \(\varrho \leq _{\measurements} (1+\epsilon)\sigma \) and \(\sigma \leq _{\measurements} (1+\epsilon)\varrho \) for some $0<\epsilon<1$, then $\|\varrho-\sigma\|_{\measurements}\leq\epsilon/(1-\epsilon/2) \leq 2\epsilon$.
\end{lemma}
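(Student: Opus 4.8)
The plan is to reduce the statement to an elementary inequality about classical probability distributions and then estimate a total-variation distance carefully. By definition $\|\varrho-\sigma\|_{\measurements}=\sup_{\measurement\in\measurements}\|\measurement(\varrho)-\measurement(\sigma)\|_1$, so it suffices to fix a measurement $\measurement=(T_i)_{i\in I}\in\measurements$, write $p_i\coloneqq\Tr(T_i\varrho)$ and $q_i\coloneqq\Tr(T_i\sigma)$, and bound $\sum_{i\in I}|p_i-q_i|$. Using the pointwise form of the hypotheses $\varrho\leq_{\measurements}(1+\epsilon)\sigma$ and $\sigma\leq_{\measurements}(1+\epsilon)\varrho$ recorded right after \Cref{def:M-order}, the vectors $p=(p_i)_{i\in I}$ and $q=(q_i)_{i\in I}$ are probability distributions with $p_i\leq(1+\epsilon)q_i$ and $q_i\leq(1+\epsilon)p_i$ for every $i\in I$.

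Next I would split $I$ into $S\coloneqq\{i\in I: p_i>q_i\}$ and its complement $I\setminus S$, and set $T\coloneqq\sum_{i\in S}(p_i-q_i)$. Since $p$ and $q$ both have total mass $1$, also $\sum_{i\in I\setminus S}(q_i-p_i)=T$, and hence $\|\measurement(\varrho)-\measurement(\sigma)\|_1=\sum_{i\in I}|p_i-q_i|=2T$. On $S$, the bound $p_i-q_i\leq\epsilon q_i$ gives $T\leq\epsilon\sum_{i\in S}q_i$; on $I\setminus S$ (where $p_i\leq q_i$), the bound $q_i-p_i\leq\epsilon p_i$ gives $T\leq\epsilon\sum_{i\in I\setminus S}p_i$. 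Adding these two inequalities and using $\sum_{i\in S}q_i=\sum_{i\in S}p_i-T$ to rewrite $\sum_{i\in S}q_i+\sum_{i\in I\setminus S}p_i=\sum_{i\in I}p_i-T=1-T$, I obtain $2T\leq\epsilon(1-T)$, i.e. $T\leq\epsilon/(2+\epsilon)$. Therefore $\|\measurement(\varrho)-\measurement(\sigma)\|_1\leq 2\epsilon/(2+\epsilon)$, and taking the supremum over $\measurement\in\measurements$ yields $\|\varrho-\sigma\|_{\measurements}\leq 2\epsilon/(2+\epsilon)\leq\epsilon/(1-\epsilon/2)\leq 2\epsilon$, where the middle inequality holds because $2+\epsilon\geq 2-\epsilon$ and the last inequality uses $\epsilon<1$.

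There is essentially no serious obstacle here: the argument is a short combinatorial estimate. The only points requiring a little care are the mass-bookkeeping on $S$ and $I\setminus S$ (ensuring that the same quantity $T$ appears in both one-sided bounds so that they can be added) and noticing that the assumption $0<\epsilon<1$ enters only in the final comparison with $2\epsilon$. I would keep the statement as in \Cref{lemma:order-trace}, even though the argument in fact delivers the marginally stronger bound $2\epsilon/(2+\epsilon)$, since only the weaker form is used later.
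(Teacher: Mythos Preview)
Your argument is correct and follows the same overall strategy as the paper: reduce to a pair of classical probability vectors $p,q$ satisfying the pointwise bounds $p_i\leq(1+\epsilon)q_i$ and $q_i\leq(1+\epsilon)p_i$, and then do an elementary estimate on $\sum_i|p_i-q_i|$.

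The only difference is tactical. The paper bounds $|p_i-q_i|\leq\epsilon\max(p_i,q_i)$ and uses $\sum_i\max(p_i,q_i)=1+\tfrac12\sum_i|p_i-q_i|$, which leads to $\sum_i|p_i-q_i|\leq\epsilon/(1-\epsilon/2)$. You instead effectively use the sharper pointwise bound $|p_i-q_i|\leq\epsilon\min(p_i,q_i)$ (this is exactly what your split into $S$ and $I\setminus S$ encodes), together with $\sum_i\min(p_i,q_i)=1-\tfrac12\sum_i|p_i-q_i|$, which yields $\sum_i|p_i-q_i|\leq 2\epsilon/(2+\epsilon)$. Your bound is genuinely tighter and, as you note, implies the paper's; the paper's version has the advantage of matching the stated constant without an extra comparison step. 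Either way the lemma follows, and the assumption $0<\epsilon<1$ is used only for the final inequality $\epsilon/(1-\epsilon/2)\leq 2\epsilon$.
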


\begin{proof}
    If $p,q$ are probability distributions satisfying $p\leq (1+\epsilon)q$ and $q\leq(1+\epsilon)p$ for some $0<\epsilon<1$, then
    \[ \forall\ i,\ p_i-q_i\leq \epsilon q_i\ \text{and}\ q_i-p_i\leq \epsilon p_i. \]
    Hence as a consequence,
    \begin{align*} 
    \sum_i|p_i-q_i|
      &\leq\epsilon\sum_i\max(p_i,q_i)
    \\&= \epsilon\sum_i\frac{p_i+q_i+|p_i-q_i|}{2}
    \\&=\epsilon \left(1+\frac{1}{2}\sum_i|p_i-q_i|\right).
    \end{align*}
    Now, for any \(\measurement\in \measurements\),
    \(\measurement (\varrho)\) and \(\measurement (\sigma)\) are classical probability distributions. So what we have shown is that, if \(\measurement(\varrho) \leq(1+\epsilon)\measurement(\sigma)\) and \(\measurement(\sigma) \leq(1+\epsilon)\measurement(\varrho)\) for all \(\measurement \in \measurements\), then \(\|\measurement(\varrho)-\measurement(\sigma)\|_1\leq\epsilon/(1-\epsilon/2)\) for all \(\measurement \in \measurements\). And this is exactly the statement in \Cref{lemma:order-trace}. 
\end{proof}

\subsection{SEP operator ordering for random private quantum states}

We start with establishing a technical result about the maximum overlap with separable states for the difference of two random orthogonal states. It has some similarities with the SEP data hiding result of~\cite[Theorem 5]{AL}, but cannot be directly derived from it, which is why we re-do the whole argument.

\begin{proposition}
    \label{prop:data-hiding}
    Let $\randomshieldplusminus$ be random orthogonal states on ${\system{A}'\system{B}'}$ as defined by \Cref{con:random-states}. Then, there exist universal constants $\probconstant,\upperconstant>0$ such that
    \[ \probability(
    \sup_{\tau\in\sepstates(\system{A}'{:}\system{B}')} 
    \quantity|\Tr\quantity\big(\tau [\randomshieldplus-\randomshieldminus])|
    \leq \frac{\upperconstant}{d^{5/2}} ) 
    \geq 1-e^{-{\probconstant}d}\, . 
    \]
\end{proposition}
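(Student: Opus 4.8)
The plan is to reduce this support‑function estimate to a supremum of a centered sub‑Gaussian process indexed by product pure states, and then to control that supremum by chaining together with concentration of measure on the unitary group. Since $\sepstates(\system{A}'{:}\system{B}')$ is the convex hull of the product pure states and $\tau\mapsto\Tr(\tau X)$ is linear, the supremum is attained at an extreme point, so
\[\sup_{\tau\in\sepstates(\system{A}'{:}\system{B}')}\quantity|\Tr\quantity\big(\tau[\randomshieldplus-\randomshieldminus])|=\sup_{\ket\phi\in\system{A}',\ \ket\psi\in\system{B}'}\quantity|\bra{\phi\psi}(\randomshieldplus-\randomshieldminus)\ket{\phi\psi}|.\]
Using $\Tr P=\Tr P^{\perp}=d^{2}/2$ and $P+P^{\perp}=\openone$ we get $\fixedshieldplus-\fixedshieldminus=\tfrac{2}{d^{2}}(P-P^{\perp})$, hence $\randomshieldplus-\randomshieldminus=\tfrac{2}{d^{2}}\,U(P-P^{\perp})U^{\dagger}$, where $U$ is the Haar unitary of \Cref{con:random-states}. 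It therefore suffices to show that, with probability at least $1-e^{-\probconstant d}$,
\[\Phi(U)\coloneqq\sup_{\phi,\psi}\quantity|\bra{\phi\psi}\,U(P-P^{\perp})U^{\dagger}\,\ket{\phi\psi}|\ \leq\ \frac{\upperconstant'}{\sqrt d},\]
for then the quantity in the statement equals $\tfrac{2}{d^{2}}\Phi(U)\leq 2\upperconstant'/d^{5/2}$.

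Write $W\coloneqq P-P^{\perp}$, so $\|W\|_{\infty}=1$ and $\Tr W=0$, and for a product unit vector $v=\phi\otimes\psi$ set $X_{v}\coloneqq\bra v UWU^{\dagger}\ket v=\Tr\quantity\big(UWU^{\dagger}\projector v)$. Since $\E_{U}\,UWU^{\dagger}=(\Tr W/d^{2})\openone=0$, the process $(X_{v})_{v}$ is centered. For two product states $v,v'$ the increment $X_{v}-X_{v'}=\Tr\quantity\big(UWU^{\dagger}(\projector v-\projector{v'}))$, viewed as a function of $U$, is Lipschitz with constant at most $2\|W\|_{\infty}\norm{\projector v-\projector{v'}}_{1}=4\rho(v,v')$, where $\rho(v,v')\coloneqq\sqrt{1-|\langle v|v'\rangle|^{2}}$; being centered, it is therefore sub‑Gaussian in $U$ with variance proxy of order $\rho(v,v')^{2}/d^{2}$ by the Meckes--Meckes concentration inequality on $U(d^{2})$ \cite[Corollary~17]{MM}. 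Thus $(X_{v})_{v}$ is a sub‑Gaussian process on the set of product pure states, with canonical metric comparable to $\rho(\cdot,\cdot)/d$.

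By Dudley's entropy bound, $\E\,\Phi(U)\leq\E\sup_{v}X_{v}+\E\sup_{v}(-X_{v})$ is at most a constant times $\int_{0}^{\infty}\sqrt{\log N(u)}\,du$, where $N(u)$ is the covering number of the product states at scale $u$ in the canonical metric. Because $|\langle v|v'\rangle|^{2}=|\langle\phi|\phi'\rangle|^{2}|\langle\psi|\psi'\rangle|^{2}$ we have $\rho(v,v')\leq\rho(\phi,\phi')+\rho(\psi,\psi')$, so a product of $\delta$‑nets on the two copies of $\mathbb{CP}^{d-1}$ gives $N(\mathrm{Prod},\rho,\delta)\leq(\upperconstant_{0}/\delta)^{4d}$, and the product states have diameter $\leq 1/d$ in the canonical metric. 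Hence $\E\,\Phi(U)\lesssim\int_{0}^{1/d}\sqrt{4d\,\log(\upperconstant_{0}/(ud))}\,du=O(1/\sqrt d)$ after the substitution $w=ud$. Finally $U\mapsto\Phi(U)$ is itself $2$‑Lipschitz on $U(d^{2})$ (the same estimate with $\projector v-\projector{v'}$ replaced by $\projector v$ and $\|W\|_{\infty}=1$), so one more application of \cite[Corollary~17]{MM} yields $\probability(\Phi(U)>\E\,\Phi(U)+t)\leq e^{-\probconstant d^{2}t^{2}}$; choosing $t$ of the order of the $O(1/\sqrt d)$ bound on $\E\,\Phi(U)$ produces $\probability(\Phi(U)>\upperconstant'/\sqrt d)\leq e^{-\probconstant d}$, which is what is needed.

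The step I expect to be the main obstacle is the chaining bound on $\E\,\Phi(U)$. A single‑scale $\epsilon$‑net over the $\Theta(d)$‑real‑dimensional product‑state manifold, combined with the per‑point deviation (which is only of scale $\sim 1/d$) and the worst‑case approximation error $\sim\epsilon$ coming from $\|UWU^{\dagger}\|_{\infty}=1$, yields a uniform bound of order $\sqrt{\log(1/\epsilon)/d}+\epsilon$, which misses the clean $d^{-1/2}$ rate by a spurious $\sqrt{\log d}$. Summing the net contributions over all scales (Dudley) removes the logarithm, because at a fine scale the approximation step is controlled by the sub‑Gaussian \emph{increment} (of size $\sim\rho/d$) rather than by the Lipschitz worst case ($\sim\rho$). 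This rests on two points that must be checked carefully: that the increments of $(X_{v})_{v}$ are genuinely sub‑Gaussian in the metric $\rho/d$ — which uses concentration of the quadratic‑in‑$U$ functions $U\mapsto\Tr(UWU^{\dagger}M)$ on $U(d^{2})$, not merely a second‑moment computation — and that the metric entropy of the product states scales like $d\,\log(1/\delta)$ rather than $d^{2}\,\log(1/\delta)$, which is exactly what makes the entropy integral converge to something of order $d^{-1/2}$. This is the argument that runs parallel to, but is not literally implied by, the SEP data‑hiding estimate of \cite[Theorem~5]{AL}.
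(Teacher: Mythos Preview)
Your argument is correct and reaches the same $d^{-5/2}$ rate with failure probability $e^{-cd}$, but by a genuinely different route from the paper. The paper does not chain: it uses the unitary-invariant comparison lemma \cite[Lemma~6]{AL} to reduce $\E\|U\bar\Delta U^\dagger\|_{\sepstates^\circ}$ to $\E\|G\|_{\sepstates^\circ}/d^3$ for $G$ a traceless GUE matrix on $\C^d\otimes\C^d$, and then imports $\E\|G\|_{\sepstates^\circ}\sim\sqrt d$ from \cite{AS} (this is the mean-width estimate for the symmetrized separable body). The concentration step is then the same as yours---a Lipschitz estimate on $U\mapsto\|U\bar\Delta U^\dagger\|_{\sepstates^\circ}$ combined with \cite[Corollary~17]{MM}. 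Your approach bypasses both auxiliary inputs by running Dudley directly over the product-state manifold, using that its real dimension is $O(d)$ rather than $O(d^2)$; this is more self-contained and makes transparent exactly where the $\sqrt d$ saving over a generic set comes from, whereas the paper's version is more modular and shows how the result fits into the existing convex-geometry framework for $\sepstates$. The two are close cousins: the mean-width estimate in \cite{AS} is itself proved by an entropy argument of the same flavor, so you have essentially inlined that computation rather than cited it.
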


\begin{proof}
    With some abuse of notation, let us define the following function
    on the traceless Hermitian operators
    \[\norm{X}_{\sepstates^\circ} \coloneqq \sup_{\tau\in\sepstates(\system{A}'{:}\system{B}')} 
        \quantity\big|\Tr\quantity(\tau X)|\]
    and let us remark that it satisfies the triangle inequality.
    To be more precise, $\norm{X}_{\sepstates^\circ}$ is the support function of the symmetrization of the separable state $\Sigma = \conv\quantity{-\sepstates\cup\sepstates}$, and thus $\norm{X}_{\sepstates^\circ}=\norm{X}_{\Sigma^\circ}$, namely it is the gauge of the polar of $\Sigma$, see \cite{AS} for more details.
    Notice that $\norm{X}_{\sepstates^\circ}$ is not to be confused with $\norm{X}_{\sepstates_0}$ introduced in \Cref{lemma:robustness:random}.
    In particular the former is always smaller than the $\infty$-norm, while the latter is always larger than the trace norm. Still, we will follow the same proof structure of \Cref{lemma:robustness:random}.
    
    Recall that the random states are defined by 
    $\randomshieldplusminus = U \fixedshieldplusminus U^\dagger$, where $U$ is a Haar-distributed unitary on $\C^d\otimes\C^d$
    and $\fixedshieldplusminus$ some fixed orthogonal maximally mixed states  on $d^2/2$-dimensional subspaces of $\C^d\otimes\C^d$.
    Let
    \begin{align*}
    \bar\Delta
    &=\fixedshieldplus-\fixedshieldminus
    \\
        \Delta
    &=                \randomshieldplus-                \randomshieldminus
    =U\bar\Delta U^\dagger
    \;.
    \end{align*}
    The statement to prove therefore is
    \[ \probability(
    \norm{\Delta}_{\sepstates^\circ} 
    \leq \frac{\upperconstant}{d^{5/2}} ) 
    \geq 1-e^{-{\probconstant}d}. \]
    To prove the statement, we compute the expectation value $\E\norm{\Delta}_{\sepstates^\circ}=\E\norm{U\bar\Delta U^\dagger}_{\sepstates^\circ}$ over the random variable $U$, then we estimate the Lipschitz constant of $\norm{U\bar\Delta U^\dagger}_{\sepstates^\circ}$ as a function of $U$ and use this to argue that being close to the expected value happens with high probability. $\norm{X}_{\sepstates^\circ}$ is not unitary invariant, however the function $\E\norm{UXU^\dagger}_{\sepstates^\circ}$ is unitary invariant on $X$, while still being convex.
    
    As explained in \Cref{lemma:robustness:random} leading to \Cref{eq:traceless:convex:twice}, we know from~\cite[Lemma~6]{AL} that for any unitary-invariant convex function $g$ of any traceless Hermitian operators $X$ and $Y$ on $\C^d\otimes\C^d$, we have
    \begin{equation*}
    \frac{1}{2d^2} \frac{\norm{X}_1}{\norm{Y}_\infty}
    \leq \frac{g(X)}{g(Y)}
    \leq 2d^2 \frac{\norm{X}_\infty}{\norm{Y}_1}
    \;.
    \end{equation*}
    Now, we let $Y=\bar\Delta$ for which $\norm{\bar\Delta}_1=2$ and $\norm{\bar\Delta}_\infty = 2/d^2$:
    \begin{equation*}
    \frac{1}{4} {\norm{X}_1} 
    \leq \frac{g(X)}{g(\bar\Delta)}
    \leq d^2 {\norm{X}_\infty}
    \;.
    \end{equation*}
    Then we let $g(X)=\E\norm{UXU^\dagger}_{\sepstates^\circ}$, which is unitary invariant by construction and convex by the convexity of $\norm{X}_{\sepstates^\circ}$:
    \begin{equation*}
    \frac{1}{4} {\norm{X}_1} 
    \leq \frac{\E\norm{UXU^\dagger}_{\sepstates^\circ}}{\E\norm{U\bar\Delta U^\dagger}_{\sepstates^\circ}}
    \leq d^2 {\norm{X}_\infty}
    \;.
    \end{equation*}
    We now let $X$ be again a Gaussian vector $\GUE$ on the traceless Hermitian operators (Gaussian unitary ensemble) on $\C^d\otimes\C^d$.
    This makes 
    $\E\norm{U{\GUE}U^\dagger}_{\sepstates^\circ}
    =\E\norm{\GUE}_{\sepstates^\circ}$.
    Like in \Cref{lemma:robustness:random}, taking expectation values over the inequalities gives
    \begin{equation*}
    \frac{1}{4} \E{\norm{\GUE}_1} 
    \leq \frac{\E\norm{\GUE}_{\sepstates^\circ}}
              {\E\norm{\Delta}_{\sepstates^\circ}}
    \leq d^2 \E{\norm{\GUE}_\infty}
    \;,
    \end{equation*}
    and using that $\E\norm{\GUE}_1\sim d^3$ and $\E\norm{\GUE}_{\infty}\sim d$ further gives:
    \begin{equation*}
    \E\norm{\Delta}_{\sepstates^\circ} \sim
     \frac{1}{d^3} \E\norm{G}_{\sepstates^\circ} 
    \;.
    \end{equation*}
    We now know from~\cite[Equation~7]{AS}
    \footnote{As remarked in~\cite[Section~2.1]{AS}, for a convex set $K$ of $\R^n$ we have
    $\E\norm{G}_{K^\circ} = \gamma \,w(K) \geq \gamma \vrad(K)$, where $\gamma=\E\norm{G}_2$, $w$ is the mean width and $\vrad$ is the volume radius. Furthermore, \cite{AS} shows that in particular $w(\Sigma)\sim\vrad(\Sigma)$. What~\cite[Equation~7]{AS} and the remarks below show, is that we have $\vrad(\Sigma)\sim \sqrt d / \gamma$, which gives us $\E\norm{G}_{\sepstates^\circ}\sim\sqrt{d}$.}
    that $\E\norm{G}_{\sepstates^\circ}$ is at most of order $\sqrt{d}$ and,
    therefore there exists a universal constant $\upperconstant>0$ such that 
    \begin{equation}
    \label{eq:data-hiding-expectation}
    \E\norm{\Delta}_{\sepstates^\circ} \leq \frac{\upperconstant}{d^{5/2}}.
    \end{equation}
    
    Now we have to show that this average behaviour is generic for large $d$, because the function $f(U)\coloneqq\norm{U\bar\Delta U^\dagger}_{\sepstates^\circ}$ is regular enough in the Euclidean norm: we claim that it is a $8/d^2$-Lipschitz function. Indeed by the triangle inequality for $\norm{\,\cdot\,}_{\sepstates^\circ}$, for any unitaries $U$ and $V$ on $\C^d\otimes\C^d$ we have 
    \begin{align*}
   |{f}(U)&-{f}(V)|
    \\&=
    \quantity\Big
    |\norm{U\bar\Delta U^\dagger}_{\sepstates^\circ} -
     \norm{V\bar\Delta V^\dagger}_{\sepstates^\circ} |
    \\&\leq 
        \norm{U\bar\Delta U^\dagger - V\bar\Delta V^\dagger}_{\sepstates^\circ}
    \\&\leq
        \norm{U \fixedshieldplus  U^\dagger-V \fixedshieldplus V^\dagger}_{\sepstates^\circ} +
        \norm{U \fixedshieldminus U^\dagger-V \fixedshieldminus V^\dagger}_{\sepstates^\circ}
    \,.
    \end{align*}
    {We can then use that $\sepstates\subset B_1$, together with duality 
    of the $1$-norm and the $\infty$-norm, to get}
    \begin{align*}
    |{f}(U)&-{f}(V)|
    \\&\leq
    \|U \fixedshieldplus U^\dagger-V \fixedshieldplus V^\dagger\|_{B_1^\circ} + \|U \fixedshieldminus U^\dagger-V \fixedshieldminus V^\dagger\|_{B_1^\circ} 
    \\&= 
    \|U \fixedshieldplus U^\dagger-V \fixedshieldplus V^\dagger\|_{\infty} + \|U \fixedshieldminus U^\dagger-V \fixedshieldminus V^\dagger\|_{\infty} 
    \\&\leq 
    \|U \fixedshieldplus (U^\dagger-V^\dagger)\|_\infty +
    \|(U-V) \fixedshieldplus V^\dagger\|_{\infty} 
    \\& \quad+
    \|U \fixedshieldminus (U^\dagger-V^\dagger)\|_\infty +
    \|(U-V) \fixedshieldminus V^\dagger\|_{\infty} 
    \\ & =
    2\|(U-V) \fixedshieldplus\|_{\infty} +
    2\|(U-V) \fixedshieldminus\|_{\infty} 
    \\ & \leq
    2 \|\fixedshieldplus\|_\infty \|U-V\|_\infty +
    2 \|\fixedshieldminus\|_\infty \|U-V\|_\infty 
    \\
    & = \frac{8}{d^2} \norm{U-V}_{\infty}
    \leq \frac{8}{d^2}\norm{U-V}_{2}. 
    \end{align*}
    which shows that $f$ is $8/d^2$-Lipschitz.
    
    Now, we know from~\cite[Corollary 17]{MM} that any $L$-Lipschitz function $g$ on the unitaries on $\C^D$ (equipped with the Euclidean metric) satisfies the concentration estimate: if $U$ is a Haar-distributed unitary on $\C^D$, then for all $\epsilon>0$, $\probability( g(U) > \E g + \epsilon ) \leq e^{-{\probconstant}D \epsilon^2/L^2}$, where $\probconstant>0$ is a universal constant. Combining the above estimate on the Lipschitz constant of $\norm{U\bar\Delta U^\dagger}_{\sepstates^\circ}$ with the estimate on its expected value from \Cref{eq:data-hiding-expectation}, we thus get for all $\epsilon>0$
    \begin{align*}
    \probability(\norm{\Delta}_{\sepstates^\circ} > \frac{\upperconstant}{d^{5/2}}+\epsilon) 
      &\leq
    \probability\big(\norm{\Delta}_{\sepstates^\circ} > \E\norm{\Delta}_{\sepstates^\circ}+\epsilon) 
    \\&\leq 
    e^{-\probconstant d^6\epsilon^2/64}.
    \end{align*}
    The advertised result follows from choosing $\epsilon={\upperconstant}/d^{5/2}$ (and suitably relabelling the constants).
\end{proof}

Thanks to \Cref{prop:data-hiding}, we can now show that our random private states and their key-attacked versions are with high probability SEP ordered with a constant close to $1$.

\begin{proposition}
    \label{prop:gamma-tau-SEP-order}
    Let $\gamma$ and $\hat{\gamma}$ on ${\system{A}\system{A}'\system{B}\system{B}'}$ 
    be a random private state and its key-attacked state as defined by \Cref{eq:def-random-gamma}. Then, there exist universal constants $\probconstant,\upperconstant>0$ such that
    \[ \probability(\gamma \leq_{\sep(\classic{A}\classic{A}'{:}\classic{B}\classic{B}')} \left( 1+\frac{\upperconstant}{\sqrt{d}} \right)\hat{\gamma} ) \geq 1-e^{-{\probconstant}d}\, . \]
\end{proposition}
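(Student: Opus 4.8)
\emph{Plan.} The strategy is to reduce the claimed $\sep$-ordering to an ``off-diagonal'' version of \Cref{prop:data-hiding}: a bound on the overlap of the random traceless operator $\Delta\coloneqq\randomshieldplus-\randomshieldminus$ on $\system{A}'\system{B}'$ with \emph{pairs} of product vectors, which is then obtained by repeating almost verbatim the concentration argument of \Cref{prop:data-hiding}.

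First I would observe that $\gamma\leq_{\sep(\classic{A}\classic{A}'{:}\classic{B}\classic{B}')}(1+\epsilon)\hat\gamma$ holds if and only if
\[ \bra{ab}\gamma\ket{ab}\leq(1+\epsilon)\bra{ab}\hat\gamma\ket{ab} \]
for every pair of unit vectors $\ket a$ on $\system{A}\system{A}'$ and $\ket b$ on $\system{B}\system{B}'$. For the ``if'' direction, any $\sep$ POVM can be fine-grained to one all of whose elements are proportional to rank-one product operators $\ketbra{ab}{ab}=\ketbra aa\otimes\ketbra bb$, and the ordering for all such refined POVMs implies it for all $\sep$ POVMs; for the ``only if'' direction, for any fixed $\ket a,\ket b$ the pair $\big(\tfrac1{d^2}\ketbra{ab}{ab},\ \openone-\tfrac1{d^2}\ketbra{ab}{ab}\big)$ is a valid $\sep$ POVM, since $\openone_{\system{A}\system{A}'}\otimes\openone_{\system{B}\system{B}'}$ splits into a sum of positive product operators built from $\ketbra aa,\openone-\ketbra aa$ and $\ketbra bb,\openone-\ketbra bb$, so $\openone-\tfrac1{d^2}\ketbra{ab}{ab}\in\R^+\sepstates$.

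Next, writing $\ket a=\ket0_{\system A}\ket{a_0}_{\system{A}'}+\ket1_{\system A}\ket{a_1}_{\system{A}'}$ and $\ket b=\ket0_{\system B}\ket{b_0}_{\system{B}'}+\ket1_{\system B}\ket{b_1}_{\system{B}'}$, and using that for the random construction $\randomshieldplus+\randomshieldminus=2\openone/d^2$ (so $\hat\gamma=\tfrac1{2d^2}(\ketbra{00}{00}+\ketbra{11}{11})_{\system{AB}}\otimes\openone$) together with $\gamma-\hat\gamma=\tfrac14(\ketbra{00}{11}+\ketbra{11}{00})_{\system{AB}}\otimes\Delta_{\system{A}'\system{B}'}$, I would compute $\bra{ab}\hat\gamma\ket{ab}=\tfrac1{2d^2}\big(\norm{\ket{a_0}}^2\norm{\ket{b_0}}^2+\norm{\ket{a_1}}^2\norm{\ket{b_1}}^2\big)$ and $\bra{ab}\gamma\ket{ab}=\bra{ab}\hat\gamma\ket{ab}+\tfrac12\mathrm{Re}\,\bra{a_0b_0}\Delta\ket{a_1b_1}$, where $\ket{a_ib_j}\coloneqq\ket{a_i}_{\system{A}'}\otimes\ket{b_j}_{\system{B}'}$. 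The target inequality then becomes
\[ \mathrm{Re}\,\bra{a_0b_0}\Delta\ket{a_1b_1}\leq\frac{\epsilon}{d^2}\Big(\norm{\ket{a_0}}^2\norm{\ket{b_0}}^2+\norm{\ket{a_1}}^2\norm{\ket{b_1}}^2\Big), \]
which, by the arithmetic--geometric mean inequality, is implied by $\big|\bra{uv}\Delta\ket{u'v'}\big|\leq 2\epsilon/d^2$ for all unit vectors $\ket u,\ket{u'}$ on $\system{A}'$ and $\ket v,\ket{v'}$ on $\system{B}'$.

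It therefore remains to show that $N(\Delta)\coloneqq\sup\{\,|\bra{uv}\Delta\ket{u'v'}|:\ket u,\ket{u'},\ket v,\ket{v'}\text{ unit}\,\}\leq\upperconstant/d^{5/2}$ with probability at least $1-e^{-\probconstant d}$; then $\epsilon=\upperconstant/(2\sqrt d)$ finishes the proof. This is a non-symmetric strengthening of \Cref{prop:data-hiding} (which controls only the diagonal $\sup_{\tau\in\sepstates}|\Tr(\tau\Delta)|\le N(\Delta)$), and the diagonal bound alone is \emph{not} enough: triangle-inequality / Cauchy--Schwarz reductions of $N(\Delta)$ to the diagonal only give $N(\Delta)\le 2/d^2$, i.e.\ the trivial $\gamma\leq_\sep 2\hat\gamma$. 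To get the $d^{-5/2}$ scaling I would rerun the proof of \Cref{prop:data-hiding}: $N$ is a seminorm, so $X\mapsto\E_U N(U\bar\Delta U^\dagger)$ (with $\bar\Delta=\fixedshieldplus-\fixedshieldminus$) is unitary-invariant and convex, hence by \cite[Lemma~6]{AL} it is sandwiched between fixed multiples of the Gaussian value $\E N(\GUE)$; combined with $\E\norm{\GUE}_1\sim d^3$ this gives $\E N(\Delta)\sim\E N(\GUE)/d^3$. The only genuinely new ingredient is the mean-width estimate $\E N(\GUE)\lesssim\sqrt d$ (the Gaussian ``injective norm'' of a $d{\times}d{\times}d{\times}d$ tensor), which I would get from a Dudley entropy bound, the set of product bra/ket pairs having logarithmic covering numbers $O(d\log(1/\delta))$ at scale $\delta$, or via $2\,\mathrm{Re}\,\bra{uv}\GUE\ket{u'v'}=\bra{\psi}\GUE\ket{\psi}-\bra{uv}\GUE\ket{uv}-\bra{u'v'}\GUE\ket{u'v'}$ with $\ket\psi=\ket{uv}+\ket{u'v'}$ of Schmidt rank $\le2$, reducing to the mean widths of separable and of Schmidt-rank-$\le2$ states (both of order $\sqrt d$, as in \cite{AS,ASY}). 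Finally, exactly as in \Cref{prop:data-hiding}, $U\mapsto N(U\bar\Delta U^\dagger)$ is $8/d^2$-Lipschitz in the Euclidean norm (bound $N$ by $\norm{\cdot}_\infty$, split $\bar\Delta$, and use $\norm{\fixedshieldplus}_\infty=\norm{\fixedshieldminus}_\infty=2/d^2$), so the Meckes--Meckes concentration inequality \cite[Corollary~17]{MM} on the unitaries of $\C^{d^2}$ yields deviation probability $e^{-\probconstant\, d^2\,(d^{-5/2})^2/(d^{-2})^2}=e^{-\probconstant d}$.

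\emph{Main obstacle.} Everything except the mean-width bound $\E N(\GUE)\lesssim\sqrt d$ is a routine transcription of \Cref{prop:data-hiding}; the crux is checking that replacing the symmetric gauge $\norm{\cdot}_{\sepstates^\circ}$ by the supremum over distinct product bra/ket vectors leaves the Gaussian mean width of order $\sqrt d$, and that the Lipschitz estimate still goes through when the two shield states are handled by the triangle inequality.
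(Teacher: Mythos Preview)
Your argument is correct, but it takes a different route from the paper. You reduce to rank-one product POVM elements and then need the genuinely stronger ``off-diagonal'' bound $N(\Delta)=\sup_{u,u',v,v'}|\bra{uv}\Delta\ket{u'v'}|\leq\upperconstant/d^{5/2}$, which you obtain by rerunning the whole concentration-of-measure machinery of \Cref{prop:data-hiding} with a new mean-width estimate (your Dudley/polarization sketch for $\E N(\GUE)\lesssim\sqrt d$ is fine, and the Lipschitz bound via $N\leq\|\cdot\|_\infty$ goes through unchanged). The paper instead works with a \emph{general} product POVM element $M\otimes N$, writes it in $2\times 2$ block form on the key qubits, and observes that the relevant off-diagonal block $\hat S=\tfrac12(\hat M\otimes\hat N+\hat M^\dagger\otimes\hat N^\dagger)$ can be decomposed as $\hat S_p-\hat S_n$ with each $\hat S_p,\hat S_n\in\R^+\sepstates$ and $\Tr\hat S_p,\Tr\hat S_n\leq\Tr S$ (by splitting the Hermitian and anti-Hermitian parts of $\hat M,\hat N$ into their positive and negative parts). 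This lets the paper invoke \Cref{prop:data-hiding} \emph{as a black box}: $|\Tr(\hat S\Delta)|\leq 2\Tr S\cdot\sup_{\tau\in\sepstates}|\Tr(\tau\Delta)|$, so no new random-matrix estimate is needed. Your approach is conceptually cleaner (rank-one reduction plus AM--GM) and yields a slightly stronger probabilistic statement about $\Delta$, at the cost of reproving the Gaussian step; the paper's approach is more economical, trading the extra concentration argument for the algebraic trick of expressing the off-diagonal block as a difference of separable operators dominated in trace by the diagonal.
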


\begin{proof}
    By \Cref{def:M-order}, to prove \Cref{prop:gamma-tau-SEP-order} it suffices to show that, with probability greater than $1-e^{-{\probconstant}d}$,
    forall positive operators $0\leq M,N\leq \openone$
    \begin{equation} 
    \label{eq:M-N} 
    \Tr\quantity(M\otimes N \gamma)
    \leq \quantity(1+\frac{\upperconstant}{\sqrt{d}}) 
    \Tr\quantity(M\otimes N \hat{\gamma}).
    \end{equation}
    Now, let $\Delta = \randomshieldplus-\randomshieldminus$ and observe that, for any $\epsilon>0$, we have 
    \[ (1+\epsilon)\hat{\gamma}-\gamma = \frac{1}{2} 
    \begin{pmatrix}
    \displaystyle\epsilon\,\frac{\openone}{d^2} & 0 & 0 & \displaystyle -\frac{\Delta}{2} 
    \\ 0 & 0 & 0 & 0 
    \\ 0 & 0 & 0 & 0 
    \\ \displaystyle-\frac{\Delta}{2} 
    & 0 & 0 & \displaystyle \epsilon\,\frac{\openone}{d^2} \end{pmatrix}.\]
    
    Given $0\leq M\leq \openone$ on $\system{A}\system{A}'$ and $0\leq N\leq \openone$ on $\system{B}\system{B}'$, we write them in the block-form
    \[ M=\begin{pmatrix} M_1 & \hat{M} \\ \hat{M}^{\dagger} & M_2 \end{pmatrix} \ \text{and}\ N=\begin{pmatrix} N_1 & \hat{N} \\ \hat{N}^{\dagger} & N_2 \end{pmatrix}, \]
    where $M_1,M_2,\hat{M}$ are operators on $\system{A}'$ and  $N_1,N_2,\hat{N}$ are operators on $\system{B}'$, with $0\leq M_1,M_2\leq \openone$ and $0\leq N_1,N_2\leq \openone$. 
    A straightforward calculation shows that
    \begin{equation}
    \label{eq:delta-prime}
    \Tr\left(M\otimes N\quantity\big[\left(1+\epsilon\right)\hat{\gamma}-\gamma]\right)
    = \epsilon\frac{\Tr S}{d^2} -\frac{1}{2}\Tr(\hat{S}{\Delta})
    \end{equation}
    where ${S}=\quantity\big(M_1\otimes N_1+M_2\otimes N_2)/2 \geq 0$ and $\hat{S}=\hat{S}^\dagger 
    = \quantity\big(\hat{M}\otimes\hat{N}+\hat{M}^{\dagger}\otimes \hat{N}^{\dagger})/2$,
    satisfying $\pm\hat{S} \leq {S}$.
    
    We now expand $\hat{S}$ into the difference 
    of the positive components of $\hat{M}$ and $\hat{N}$. 
    Namely, let us denote by $X_r$ and $X_i$ the Hermitian 
    and anti-Hermitian parts of an operator $X$, so that $X = X_r + i X_i$.
    Then we find that $\hat{S} = \hat{M}_r \otimes \hat{N}_r - \hat{M}_i \otimes \hat{N}_i$. 
    Let us further denote by $X^\pm$ the positive and negative part of an Hermitian operator $X$, so that $X=X^+-X^-$ and $|X|=X^+ + X^-$.
    Then we find that 
    $\hat{S} = \hat{S}_p-\hat{S}_n$,
    {where}
    \begin{align*}
    \hat{S}_p &
    = \hat{M}_r^+ \otimes \hat{N}_r^+ 
    + \hat{M}_r^- \otimes \hat{N}_r^- 
    + \hat{M}_i^+ \otimes \hat{N}_i^- 
    + \hat{M}_i^- \otimes \hat{N}_i^+
    \\
    \hat{S}_n &
    = \hat{M}_r^+ \otimes \hat{N}_r^- 
    + \hat{M}_r^- \otimes \hat{N}_r^+ 
    + \hat{M}_i^+ \otimes \hat{N}_i^+ 
    + \hat{M}_i^- \otimes \hat{N}_i^-
    \,.
    \end{align*}
    (but in general $|\hat{S}|\neq\hat{S}_p + \hat{S}_n$).
    By construction we have $\hat{S}_p,\hat{S}_n\in\R^+\sepstates$ and $\hat{S}_p,\hat{S}_n\leq S$, therefore
    \begin{align*}
    \quantity|\Tr(\hat{S}_p\Delta)| 
      &=    \Tr\hat{S}_p 
        \quantity|\Tr(\frac{\hat{S}_p}{\Tr\hat{S}_p}\Delta)| 
    \\&\leq \Tr\hat{S}_p \, 
        \sup_{\tau\in\sepstates} \quantity|\Tr(\tau\Delta)|
    \end{align*}
    and similarly for $\hat{S}_n$. Then
    \begin{align*}
    \Tr(\hat{S}\Delta) 
      &\leq
    \quantity|\Tr(\hat{S}_p\Delta)| + \quantity|\Tr(\hat{S}_n\Delta)|.
    \\&\leq \Tr(\hat{S}_p + \hat{S}_n) 
    \sup_{\tau\in\sepstates} \left|\Tr(\tau\Delta)\right|
    \\&\leq 2\Tr S \, \sup_{\tau\in\sepstates} \left|\Tr(\tau\Delta)\right|.
    \end{align*}
    {Now we apply \Cref{prop:data-hiding} and get that there exist constants $C,c_0>0$ such that with probability greater than $1-e^{-{\probconstant}d}$}
    \begin{align*}
    \Tr(\hat{S}\Delta) 
    &\leq 2\Tr S \, \frac{C}{\sqrt{d}} \frac{1}{d^2}.
    \end{align*}
    Inserting the above in \Cref{eq:delta-prime}
    we obtain that with probability greater than $1-e^{-{\probconstant}d}$
    we have for all $0\leq M,N\leq \openone$
    \[
    \Tr(M\otimes N\quantity\big[\left(1+\epsilon\right)\hat{\gamma}-\gamma]) 
    \geq\quantity(\epsilon-\frac{\upperconstant}{\sqrt{d}})\frac{\Tr S}{d^2}.\]
    The right-hand-side is positive as soon as $\epsilon\geq \upperconstant/\sqrt{d}$, 
    in which case \Cref{eq:M-N} indeed holds completing the proof.
\end{proof}

Notice that before this bound can be used for \Cref{th:random-R1}, the bound on the $\Rcopies$-bounded key repeater rate, it needs to be generalized to general partial measurements.

\subsection{Upper bound on the SEP relative entropy of entanglement of random private quantum states}

Using \Cref{prop:gamma-tau-SEP-order}, we are now able to prove an upper bound on $D_{\sep(\system{A}\system{A}'{:}\system{B}\system{B}')}(\gamma_{\system{A}\system{A}'\system{B}\system{B}'}
\|\sepstates(\system{A}\system{A}'{:}\system{B}\system{B}'))$ which is better than the one appearing in \Cref{th:D_SEP-D_PPT}.

\begin{theorem}
    Let $\gamma$ on ${\system{A}\system{A}'\system{B}\system{B}'}$ be a random private state as defined by \Cref{eq:def-random-gamma}. 
    Then, there exist universal constants $\probconstant,\upperconstant>0$ such that
    \begin{align*} 
    & \probability(D_{\sep(\classic{A}\classic{A}'{:}\classic{B}\classic{B}')}\left(\gamma
    \|\sepstates(\system{A}\system{A}'{:}\system{B}\system{B}')\right)\leq\frac{\upperconstant}{\sqrt{d}}) \geq 1 - e^{-{\probconstant}d} \, .
    \end{align*}
\end{theorem}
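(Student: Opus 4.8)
The plan is to combine the SEP operator-ordering bound of \Cref{prop:gamma-tau-SEP-order} with the general principle relating $\measurements$ ordering to $\measurements$ relative entropy distance from \Cref{lemma:order-relent}. Essentially all of the analytic work has already been done in the preceding propositions, so what remains is a short chaining argument together with the elementary bound $\log(1+x)\le x$.

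First I would recall that for the random private state $\gamma$ of \Cref{eq:def-random-gamma}, the key-attacked state $\hat\gamma$ is \emph{separable}: as observed immediately after \Cref{eq:def-random-gamma}, since $P+P^{\perp}=\openone$ we have $(\randomshieldplus+\randomshieldminus)/2=\openone/d^2$, hence $\hat\gamma=\tfrac12(\psi^++\psi^-)\otimes \openone/d^2\in\sepstates(\system{A}\system{A}'{:}\system{B}\system{B}')$. Both $\gamma$ and $\hat\gamma$ are density operators, so \Cref{lemma:order-relent} is applicable to them.

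Next, apply \Cref{prop:gamma-tau-SEP-order}: there are universal constants $\probconstant,\upperconstant>0$ such that, with probability at least $1-e^{-\probconstant d}$,
\[ \gamma \leq_{\sep(\classic{A}\classic{A}'{:}\classic{B}\classic{B}')} \Big(1+\frac{\upperconstant}{\sqrt d}\Big)\hat\gamma . \]
On this event, \Cref{lemma:order-relent} (with $\varrho=\gamma$, $\sigma=\hat\gamma$, $\epsilon=\upperconstant/\sqrt d$, and $\measurements=\sep(\classic{A}\classic{A}'{:}\classic{B}\classic{B}')$) gives
\[ D_{\sep(\classic{A}\classic{A}'{:}\classic{B}\classic{B}')}(\gamma\|\hat\gamma) \leq \log\Big(1+\frac{\upperconstant}{\sqrt d}\Big) \leq \frac{\upperconstant}{\sqrt d}. \]
Finally, since $\hat\gamma$ is one particular separable state, $D_{\sep}(\gamma\|\sepstates)=\inf_{\varsigma\in\sepstates}D_{\sep}(\gamma\|\varsigma)\leq D_{\sep}(\gamma\|\hat\gamma)\leq \upperconstant/\sqrt d$ on the same event, which after relabelling the constant is exactly the claimed bound.

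There is no real obstacle in this last step; the difficulty of the theorem is entirely front-loaded into \Cref{prop:data-hiding} (the $d^{-5/2}$ concentration estimate for $\sup_{\tau\in\sepstates}|\Tr(\tau[\randomshieldplus-\randomshieldminus])|$) and its consequence \Cref{prop:gamma-tau-SEP-order}. One could alternatively derive an upper bound via the asymptotic continuity of the measured relative entropy, as in the proof of \Cref{th:D_SEP-D_PPT}, but that route loses a $\log d$ factor — which is precisely the improvement that the operator-ordering approach is designed to achieve.
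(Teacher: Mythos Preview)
Your proposal is correct and follows essentially the same approach as the paper: apply \Cref{prop:gamma-tau-SEP-order} to obtain the SEP ordering $\gamma\leq_{\sep}(1+\upperconstant/\sqrt d)\hat\gamma$ with high probability, invoke \Cref{lemma:order-relent} to convert this into $D_{\sep}(\gamma\|\hat\gamma)\leq\log(1+\upperconstant/\sqrt d)\leq\upperconstant/\sqrt d$, and then use separability of $\hat\gamma$ to pass to $D_{\sep}(\gamma\|\sepstates)$. Your remark contrasting this with the continuity-based argument of \Cref{th:D_SEP-D_PPT} and the resulting $\log d$ saving is also in line with the paper's discussion.
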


\begin{proof}
    We know from \Cref{prop:gamma-tau-SEP-order} that, with probability greater than $1-e^{-{\probconstant}d}$, \(\gamma \leq _{\sep} (1+\upperconstant/\sqrt{d}) \hat\gamma\).
    Because \(\hat\gamma\) is a separable state, we have by \Cref{lemma:order-relent} that, with probability greater than $1-e^{-{\probconstant}d}$,
    \[D_{\sep}(\gamma\parallel\sepstates) \leq 
    D_{\sep}\left(\gamma\parallel\hat{\gamma}\right)
    \leq \log\left(1+\frac{\upperconstant}{\sqrt{d}}\right) \leq \frac{\upperconstant}{\sqrt{d}}. \]
    This concludes the proof.
\end{proof}

\section{Upper bounds on the quantum key repeater rate of private quantum states}
\label{appendix:R_D}

In this appendix, we come back to the more usual quantum key repeater setting, where Charlie can act jointly on arbitrary many copies of the input states, and not just one as we were imposing in \Cref{sec:repeater}. We first establish an upper bound on the corresponding quantum key repeater rate, which improves on previously known upper bounds. We then turn to looking at how to interpret this quantity in the case of private states.

\subsection{Upper bound on the quantum key repeater rate} 

Let us first recall the definition of the highest rate for the single node quantum key repeater.

\begin{definition}[Quantum key repeater rate~\cite{BCHW}] For any states $\rhoAsystems$ and $\rhoBsystems$, we define
    \label{def:repeaterrate}
    \begin{align*}
    R_D (\rhoA,\rhoB) \coloneqq &
    \lim _{\varepsilon\to0} \lim _{n\to\infty}
    \sup _{\Lambda}
    \quantity{R: \Lambda\quantity(\rhoA^{\otimes n}\otimes\rhoB^{\otimes n})
        \approx_\varepsilon \gamma^{Rn}}
    \end{align*}
    where $\Lambda\in\locc
    (\quantum{A}^n{:}\classic{\charlieA}^n\classic{\charlieB}^n{:}\quantum{B}^n)$.
\end{definition}

The following upper bound on $R_D(\rhoA,\rhoB)$, in terms of a regularized LOCC-restricted relative entropy distance
to quadri-separable states, was derived in~\cite[Theorem 4]{BCHW}. 
First, given $\varrho$ a state on $\alice\otimes\charlieA\otimes\bob\otimes\charlieB$,  define for each $n\in\N$
\begin{align*} 
D&_{\locc
  (\classic{\alice}\classic{\bob}
{:}\classic{\charlieA}\classic{\charlieB})}^n 
\left( \varrho
\middle\| \sepstates(\alice{:}\charlieA{:}\charlieB{:}\bob) \right) 
\\&\coloneqq \frac{1}{n} D_{\locc(\classic{\alice}^n\classic{\bob}^n{:}\classic{\charlieA}^n\classic{\charlieB}^n)} 
\left( \varrho^{\otimes n} 
\middle\| \sepstates(\alice^n{:}\charlieA^n{:}\charlieB^n{:}\bob^n) \right)\, ,
\intertext{and then standard regularization, define}
D&_{\locc
  (\classic{\alice}\classic{\bob}
{:}\classic{\charlieA}\classic{\charlieB})}^{\infty} 
\left( \varrho
\middle\| \sepstates(\alice{:}\charlieA{:}\charlieB{:}\bob) \right) 
\\&\coloneqq \lim_{n\to\infty} D_{\locc
  (\classic{\alice}\classic{\bob}
{:}\classic{\charlieA}\classic{\charlieB})}^n 
\left( \varrho
\middle\| \sepstates(\alice{:}\charlieA{:}\charlieB{:}\bob) \right)\, .
\end{align*}
We can now state the theorem.

\begin{theorem}[Upper bound on $R_D$~\cite{BCHW}]
    \label{lemma:repeatersep}
    For any states $\rhoAsystems$ and $\rhoBsystems$, we have
    \begin{equation} \label{eq:originalbound} R_D(\rhoA,\rhoB) 
    \leq D_{\locc(\classic{\alice}\bob{:}\classic{\charlieA}\classic{\charlieB})}^\infty 
    \left( \rhoA\otimes\rhoB
    \middle\| \sepstates(\alice{:}\charlieA{:}\charlieB{:}\bob) \right)\, . 
    \end{equation}
\end{theorem}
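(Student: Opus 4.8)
The plan is to run, in the fully asymptotic regime, exactly the mechanism that already underlies \Cref{lemma:repeatersep-singlecopy} and \Cref{cor:repeatersep-singlecopy}: any repeater LOCC sends a four-party separable reference state to a \emph{bipartite} separable state, while the distilled output is a private state and is therefore far from separable. The only genuinely new ingredient needed here, as opposed to the single-copy case, is a concrete measurement that witnesses the $\lfloor Rn\rfloor$ bits of distilled key.

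First I would unfold the definition of $R_D$: if $R<R_D(\rhoA,\rhoB)$, then for every $\varepsilon>0$ there are an $n$ and a repeater operation $\Lambda\in\locc(\quantum{\alice}^{n}{:}\classic{\charlieA}^{n}\classic{\charlieB}^{n}{:}\quantum{\bob}^{n})$ whose output $\sigma_n\coloneqq\Lambda(\rhoA^{\otimes n}\otimes\rhoB^{\otimes n})$, on the Alice--Bob output systems, is $\varepsilon$-close in trace norm to some private state $\gamma^{Rn}$ with $\lfloor Rn\rfloor$ qubits of key. I would then call on the \emph{twisted key projector} $\Pi$ of $\gamma^{Rn}$ (see~\cite{HHHO}): it obeys $\Tr(\Pi\gamma^{Rn})=1$, hence $\Tr(\Pi\sigma_n)\geq1-\varepsilon$, while $\Tr(\Pi\varsigma)\leq2^{-\lfloor Rn\rfloor}$ for every state $\varsigma$ separable across the Alice$\,$:$\,$Bob cut of the output. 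The crucial structural observation is that, for any $\sigma\in\sepstates(\alice^{n}{:}\charlieA^{n}{:}\charlieB^{n}{:}\bob^{n})$, the state $\Lambda(\sigma)$ (after discarding Charlie's classical register) is separable across Alice$\,$:$\,$Bob, because $\Lambda$ is LOCC for the \emph{finer} cut $\alice^{n}{:}\charlieA^{n}\charlieB^{n}{:}\bob^{n}$; and simultaneously the measurement $\mathcal M$ which runs $\Lambda$, performs the binary POVM $(\Pi,\openone-\Pi)$ on the grouped Alice--Bob output, and discards Charlie, belongs to $\locc(\classic{\alice}^{n}\classic{\bob}^{n}{:}\classic{\charlieA}^{n}\classic{\charlieB}^{n})$ — this is precisely the point at which it matters that the honest parties are grouped on one side of that restriction.

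Writing $\mathbf{M}_n$ for $\locc(\classic{\alice}^{n}\classic{\bob}^{n}{:}\classic{\charlieA}^{n}\classic{\charlieB}^{n})$ and $\sepstates^{(n)}$ for $\sepstates(\alice^{n}{:}\charlieA^{n}{:}\charlieB^{n}{:}\bob^{n})$, feeding $\mathcal M$ into the measured relative entropy distance and using monotonicity of the classical relative entropy in its second argument gives, for every $\sigma\in\sepstates^{(n)}$, that $D_{\mathbf{M}_n}(\rhoA^{\otimes n}\otimes\rhoB^{\otimes n}\,\|\,\sigma)$ is at least the binary relative entropy between $(\Tr\Pi\sigma_n,\,1-\Tr\Pi\sigma_n)$ and $(\Tr\Pi\Lambda(\sigma),\,1-\Tr\Pi\Lambda(\sigma))$, which — since $\Tr\Pi\Lambda(\sigma)\leq2^{-\lfloor Rn\rfloor}\leq\Tr\Pi\sigma_n$ and $\Tr\Pi\sigma_n\geq1-\varepsilon$ — is itself at least $(1-\varepsilon)Rn-2$. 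Taking the infimum over $\sigma$ and dividing by $n$,
\[
\tfrac1n\,D_{\mathbf{M}_n}\!\big(\rhoA^{\otimes n}\otimes\rhoB^{\otimes n}\,\big\|\,\sepstates^{(n)}\big)\ \geq\ (1-\varepsilon)R-\tfrac2n\,.
\]
Since the measured relative entropy distance is super-additive (cf.~\cite[Theorem~2]{Piani}), Fekete's lemma shows that the limit defining its regularisation exists and equals the supremum of these normalised terms, so the right-hand side of \Cref{eq:originalbound} is $\geq(1-\varepsilon)R-2/n$. Letting $\varepsilon\to0$, whence the blocklength $n(\varepsilon)\to\infty$, yields that it is $\geq R$, and as $R<R_D(\rhoA,\rhoB)$ was arbitrary the bound follows.

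The step I expect to be the real obstacle — and the reason one cannot just quote $\KD\leq D(\cdot\|\sepstates)$ together with monotonicity of the restricted relative entropy — is that one is trying to \emph{lower} bound a \emph{measurement}-restricted relative entropy, whereas performing a measurement can only \emph{decrease} the relative entropy. The fix is to keep the comparison fully quantum until the very last moment and discharge everything onto the single explicit measurement $(\Pi,\openone-\Pi)$, checking carefully that $\Pi$ (a global operation on the honest parties' joint system) is still admissible in $\locc(\classic{\alice}\classic{\bob}{:}\classic{\charlieA}\classic{\charlieB})$, together with the standard bookkeeping of which of Charlie's systems get measured versus traced. If one instead reads that restriction as a partial-measurement set in which $\alice\bob$ stays quantum, the argument collapses to a verbatim copy of the proof of \Cref{lemma:repeatersep-singlecopy}, with $\Lambda$ itself playing the role of the channel, $\KD\leq D(\cdot\|\sepstates)$ applied directly to $\Lambda(\rhoA^{\otimes n}\otimes\rhoB^{\otimes n})$, and $R_D$ expressed as $\lim_n\tfrac1n\sup_{\Lambda}\KD(\Lambda(\rhoA^{\otimes n}\otimes\rhoB^{\otimes n}))$ in the spirit of \Cref{lem:R_D-K_D}.
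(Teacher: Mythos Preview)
The paper does not actually prove \Cref{lemma:repeatersep}: it is quoted from~\cite{BCHW}, and the paper only extracts from that reference the ``crucial passage of the original proof'' --- namely, that any repeater LOCC sends a state in $\sepstates(\alice{:}\charlieA{:}\charlieB{:}\bob)$ to a state in $\sepstates(\alice{:}\bob)$ --- in order to motivate the improvement \Cref{th:R_D}. Your argument is correct and is precisely this mechanism made explicit: you feed the repeater protocol with a four-party separable reference, observe that the output is bipartite separable, and then detect the gap with the privacy-squashing projector of~\cite{HHHO}, which is admissible because Alice and Bob sit on the same side of the $\classic{\alice\bob}{:}\classic{\charlieA\charlieB}$ cut. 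This is the standard~\cite{BCHW} line of reasoning, and the alternative you sketch in your last paragraph is exactly the single-copy argument of \Cref{lemma:repeatersep-singlecopy} run blockwise.

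One small caveat: your appeal to super-additivity via~\cite[Theorem~2]{Piani} is stated there for the bipartite measured relative entropy of entanglement, and its direct applicability to the four-party set $\sepstates(\alice{:}\charlieA{:}\charlieB{:}\bob)$ with measurements restricted to $\locc(\classic{\alice\bob}{:}\classic{\charlieA\charlieB})$ would need a word of justification. It is not essential, however: your chain of inequalities already lower-bounds $\tfrac{1}{n}D_{\mathbf{M}_n}(\cdot\|\sepstates^{(n)})$ for arbitrarily large $n$, which suffices to lower-bound the limit (or $\limsup$) defining the regularisation, without invoking Fekete's lemma at all.
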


We can improve this upper bound on the quantum key repeater rate with a simple observation. 
A crucial passage of the original proof is that 
no repeater protocol will ever output entanglement between Alice and Bob,
if the inputs are substituted with separable states, i.e.
\[ \sigma \in \sepstates(\alice{:}\charlieA{:}\charlieB{:}\bob)\ \Longrightarrow \Tr_{\charlieA\charlieB} \Lambda(\sigma) \in \sepstates(\alice{:}\bob)\, . \]
However, only separability of the output is used, which holds already if we substitute a single input, instead of both,
with a separable state, i.e.
\[  \sigma \in \sepstates(\alice{:}\charlieA)\ \Longrightarrow \Tr_{\charlieA\charlieB} \Lambda(\sigma\otimes\rhoB ) \in \sepstates(\alice{:}\bob)\] \text{and}\[ \sigma \in \sepstates(\charlieB{:}\bob)\ \Longrightarrow \Tr_{\charlieA\charlieB} \Lambda(\rhoA\otimes\sigma) \in \sepstates(\alice{:}\bob)\, . \]

This works more generally for any set of states that gets mapped into the separable states $\sepstates(\alice{:}\bob)$ after tracing Charlie, as formalized by the following bound.

\begin{theorem} \label{th:R_D}
    Let $\states\in\allstates(\system{A\charlieA\charlieB B})$
    be a set of states such that%
    $\Lambda\in\locc(\classic{A}{:}\classic{\charlieA}\classic{\charlieB}{:}\classic{B})$ implies $\Lambda(\states)\subseteq\sepstates(\ab)$.
    Then, for any states $\rhoAsystems$ and $\rhoBsystems$, we have
    \begin{align*}
    R_D (\rhoA,\rhoB) 
    \leq  D^\infty_{\locc(\classic{A}{:}\classic{\charlieA}\classic{\charlieB}{:}\classic{B})} \quantity\big
    ( \rhoA\otimes\rhoB\| \states ).
    \end{align*}
\end{theorem}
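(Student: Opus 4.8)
The plan is to run the argument of \Cref{lemma:repeatersep-singlecopy} on $n$ copies and then regularise, in exactly the way \Cref{lemma:repeatersep} of~\cite{BCHW} generalises its single-copy predecessor. The only modification needed is to isolate the one place where full quadri-separability of the reference state was used, namely the implication that $\sigma\in\sepstates(\alice{:}\charlieA{:}\charlieB{:}\bob)$ forces $\Tr_{\charlieA\charlieB}\Lambda(\sigma)\in\sepstates(\alice{:}\bob)$ for every LOCC repeater operation $\Lambda$; this is precisely what the theorem postulates about $\states$, so no other feature of the quadri-separable set is used.

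Concretely, write $\mu\coloneqq\rhoA\otimes\rhoB$, let $\mathbf{L}_n$ stand for (the $n$-copy version of) the family $\locc(\classic{A}{:}\classic{\charlieA}\classic{\charlieB}{:}\classic{B})$, i.e.\ the LOCC repeater operations among Alice, Charlie and Bob in which Charlie's systems are measured and traced out, and let $\states_n$ be the $n$-copy reference set over which the regularised quantity is defined ($\states_1=\states$). Fixing $\sigma_n\in\states_n$, the definition of the restricted relative entropy (\Cref{def:M-D}), the hypothesis $\Lambda(\sigma_n)\in\sepstates(\alice^n{:}\bob^n)$, and the bound $D(\cdot\|\sepstates)\geq K_D$ of~\cite{HHHO} give
\begin{align*}
D_{\mathbf{L}_n}\!\left(\mu^{\otimes n}\,\middle\|\,\sigma_n\right)
&= \sup_{\Lambda} D\!\left(\Lambda(\mu^{\otimes n})\,\middle\|\,\Lambda(\sigma_n)\right)\\
&\geq \sup_{\Lambda} D\!\left(\Lambda(\mu^{\otimes n})\,\middle\|\,\sepstates(\alice^n{:}\bob^n)\right)\\
&\geq \sup_{\Lambda} K_D\!\left(\Lambda(\mu^{\otimes n})\right),
\end{align*}
where $\Lambda$ ranges over $\mathbf{L}_n$.

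It then remains to recognise the right-hand side as $R_D$, exactly as in the proof of \Cref{lem:R_D-K_D}: by \Cref{def:repeaterrate}, any rate $R<R_D(\rhoA,\rhoB)$ is achievable, so for all small $\varepsilon>0$ and all large $n$ there is $\Lambda\in\mathbf{L}_n$ with $\Lambda(\mu^{\otimes n})\approx_\varepsilon\gamma^{Rn}$, whence $K_D(\Lambda(\mu^{\otimes n}))\geq (1-4\varepsilon)Rn-2h(\varepsilon)$ by the continuity estimate for the distillable key (\cite{HHHOb,Winter,Shirokov}, used as in \Cref{lem:R_D-K_D}). Dividing the display by $n$, taking the infimum over $\sigma_n\in\states_n$, then sending $n\to\infty$ and $\varepsilon\to0$, then taking the supremum over achievable $R$, yields $R_D(\rhoA,\rhoB)\leq D^\infty_{\locc(\classic{A}{:}\classic{\charlieA}\classic{\charlieB}{:}\classic{B})}(\rhoA\otimes\rhoB\|\states)$; existence of the regularised limit is inherited from super-additivity of the restricted relative entropy, as in~\cite{BCHW,Piani}.

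I do not expect a serious obstacle: the argument is a direct transcription of the single-copy \Cref{lemma:repeatersep-singlecopy}. The one point that genuinely needs care is the choice of the $n$-copy reference set $\states_n$, because the mapping property does not tensorise on its own -- a repeater protocol acting jointly on $n$ copies can, via an entanglement-swapping measurement at Charlie, create Alice--Bob entanglement out of $\sigma^{\otimes n}$ even when $\sigma\in\states$ (this cannot happen for the quadri-separable set, where there is no entanglement at all to swap). The clean fix is to take $\states_n$ to be the set of all $n$-party states with the analogous mapping property, so that the hypothesis holds by construction at every level; alternatively one restricts to reference sets $\states$ separable across one of the cuts $\alice{:}\charlieA\charlieB\bob$ or $\alice\charlieA\charlieB{:}\bob$ (and their convex combinations), whose tensor powers do retain the property. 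With either choice the displayed chain is valid for all $n$ and the bound follows.
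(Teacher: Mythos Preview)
Your proposal is correct and matches the paper's own treatment: the paper does not spell out a full proof of this theorem but presents it as a direct consequence of the observation that in the argument of~\cite{BCHW} only the implication $\Lambda(\sigma)\in\sepstates(\alice{:}\bob)$ is actually used, which is exactly the step you isolate. You also identified the same subtlety the paper flags immediately after the statement, namely that the regularisation requires specifying $\states_n$ at every level because the mapping property need not tensorise; your two proposed fixes (take $\states_n$ to be all states with the $n$-copy mapping property, or restrict to sets separable across $\alice{:}\charlieA\charlieB\bob$ or $\alice\charlieA\charlieB{:}\bob$) are consistent with the paper's remark and with the concrete instance~\eqref{eq:betterbound}.
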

    More precisely, the regularization requires to define the set of states $\states_n\in\allstates
    (\system{A}^n\system{\charlieA}^n\system{\charlieB}^n\system{B}^n)$ 
    for every $n$ (something that is implicit in $\sepstates(\alice^n{:}\bob^n)$ for separable states, and similarly for PPT states), so that 
    \begin{align*} 
    D^\infty_{\locc} \quantity( \rhoA\otimes\rhoB\| \states )
    & = \lim_{n\to\infty} \frac{1}{n}   D_{\locc_n}
    ^n \quantity( \rhoA\otimes\rhoB\| \states_n )
    \end{align*}
    where we omitted $\locc\equiv\locc
    (\classic{A}{:}\classic{\charlieA}\classic{\charlieB}{:}\classic{B})$
    and 
    $\locc_n\equiv\locc_n
    (\classic{A}^n{:}\classic{\charlieA}^n\classic{\charlieB}^n{:}\classic{B}^n)$

    In particular%
    \begin{equation} 
    \label{eq:betterbound} 
    R_D(\rhoA,\rhoB) \leq  D_{\locc+\rhoB}^\infty ( \rhoA\| \sepstates ), D_{\locc+\rhoA}^\infty ( \rhoB\| \sepstates ) 
    \end{equation}
    where $\sepstates\equiv\sepstates(\alice{:}\bob)$ and we defined 
    \begin{align*} 
    & D_{\locc+\rhoB}^\infty ( \rhoA\| \sepstates ) \coloneqq  
    \inf_{\mathllap\sigma \in \sepstates(\alice{:}\charlieA)} 
    D_{\locc(\classic{\alice}\classic{\bob}
          {:}\classic{\charlieA}\classic{\charlieB})}^\infty 
    \big( \rhoA\otimes\rhoB \big\| \sigma\otimes\rhoB \big), \\ 
    & D_{\locc+\rhoA}^\infty ( \rhoB\| \sepstates ) \coloneqq  
    \inf_{\mathllap\sigma \in \sepstates(\bob{:}\charlieB)} 
    D_{\locc(\classic{\alice}\classic{\bob}
          {:}\classic{\charlieA}\classic{\charlieB})}^\infty 
    \big( \rhoA\otimes\rhoB \big\| \rhoA\otimes\sigma \big).
    \end{align*} 

These bounds solve the ``factor of $2$'' issue about tightness of the original upper bound in some obvious simple cases.
For example, suppose that \(\rhoA = \rhoB = \psi^m\), where $\psi^m$ is the maximally entangled state on $\system{A}^{\otimes m}\system{B}^{\otimes m}$ (i.e.~equivalently, $\psi^m=\psi^{\otimes m}$ for $\psi$ the maximally entangled state on $\system{A}\system{B}$).
Then it is clear that there is equality in \Cref{eq:betterbound}, both sides being equal to $m$. However, the right hand side of \Cref{eq:originalbound} yields $2m$. Intuitively, while the original bound measures the distinguishability of both states
from separable, the new bounds measure the distinguishability of a single state and
consider the other one as an assisting resource to the measurement.

\subsection{Distinguishability of private quantum states from their key-attacked versions}
\label{app:pbit-vs-keyattacked}

We now compute upper bounds on the distinguishability of any private state of the form given by \Cref{eq:def-gamma} (not necessarily random) from its key-attacked version. Our measures of distinguishability are $\mathbf{M}$ relative entropy distances. These bounds are obtained by comparing this discrimination task to the one where the maximally entangled state in the key has been corrected and provided as a resource.

\begin{theorem}
    \label{prop:corrected-key}
    Let $\gamma$ be a private state on  ${\system{A}\system{A}'\system{B}\system{B}'}$ as defined by \Cref{eq:def-gamma}. Let $\shieldplusminus$ be the corresponding shield states on ${\system{A}'\system{B}'}$ and define $\shield \coloneqq (\shieldplus + \shieldminus)/2$. 
    Then, for any set of measurements $\measurements$ on $\system{A}\system{A}'\system{B}\system{B}'$, we have
    \begin{align}
    \label{lemma:pbithiding:singlecopy}
    D _\measurements (\gamma \parallel \hat \gamma) 
    &\leq  \frac{1}{2}  D _{\measurements+\psi}( \shieldplus \parallel \shield) 
    +\frac{1}{2}  D _{\measurements+\psi}( \shieldminus \parallel \shield) \, ,
    \\
    \label{lemma:pbithiding:regularized}
    D ^\infty _\measurements (\gamma \parallel \hat \gamma) 
    &\leq      \frac{1}{2}
    D ^\infty _{\measurements+\psi^2} \left( \shieldplus\otimes\shieldminus \parallel \shield\otimes\shield \right)\, . 
    \end{align}
    where
    \begin{align*}
    D        _{\measurements+\psi}( \alpha \parallel \beta) &\coloneqq 
    D        _\measurements (\psi \otimes \alpha \parallel \psi \otimes \beta)
    \\
    D^\infty _{\measurements+\psi^n}( \alpha \parallel \beta) &\coloneqq 
    D^\infty _\measurements (\psi^{\otimes n} \otimes \alpha \parallel \psi^{\otimes n} \otimes \beta)
    \;.
    \end{align*}  
\end{theorem}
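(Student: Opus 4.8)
\smallskip
The plan is to exploit the explicit convex structure of $\gamma$ and $\hat\gamma$ together with joint convexity of the $\measurements$ relative entropy (valid since $D(\cdot\|\cdot)$ is jointly convex; see the remark after \Cref{def:M-D}). The picture is the one indicated in the text: after ``revealing'' which shield is present and correspondingly ``correcting'' the phase-flipped Bell state, discriminating $\gamma$ from $\hat\gamma$ reduces to discriminating $\shieldpm$ from $\shield$ with a maximally entangled key handed over as a free resource.

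\smallskip
For the first inequality I would start from the decompositions read off \Cref{eq:def-gamma} and \Cref{eq:gammahat}: writing $\hat\psi:=(\psi^++\psi^-)/2$, one has $\gamma=\tfrac12\psi^+\otimes\shieldplus+\tfrac12\psi^-\otimes\shieldminus$, and since $\hat\gamma=\hat\psi\otimes\shield$ also $\hat\gamma=\tfrac12\psi^+\otimes\shield+\tfrac12\psi^-\otimes\shield$. Joint convexity of $D_\measurements$ applied to these two binary mixtures gives
\[ D_\measurements(\gamma\|\hat\gamma)\le\tfrac12 D_\measurements(\psi^+\otimes\shieldplus\|\psi^+\otimes\shield)+\tfrac12 D_\measurements(\psi^-\otimes\shieldminus\|\psi^-\otimes\shield). \]
The first term equals $\tfrac12 D_{\measurements+\psi}(\shieldplus\|\shield)$ by definition; for the second I use $\psi^-=(Z\otimes\openone)\psi^+(Z\otimes\openone)$ with $Z$ a phase flip on one key qubit (acting trivially on $\system{A}'\system{B}'$), so conjugating both arguments by this local key-unitary replaces $\psi^-$ by $\psi^+$ and fixes $\shieldminus,\shield$, turning it into $\tfrac12 D_{\measurements+\psi}(\shieldminus\|\shield)$. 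The single delicate point is that this identity needs $D_\measurements$ invariant under conjugating both states by a fixed local key-unitary, i.e.\ $\measurements$ closed under pre-composition with it --- automatic for $\allmeasurements,\locc,\sep,\ppt$, or else one reads ``$\psi$'' as an arbitrary maximally entangled state. I expect this to be the main (mild) snag in the single-copy case.

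\smallskip
For the second inequality I would run the same argument on $n=2m$ copies and use binomial concentration. Expanding, $\gamma^{\otimes 2m}=4^{-m}\sum_{b\in\{+,-\}^{2m}}\rho_b$ with $\rho_b=\bigotimes_j(\psi^{b_j}\otimes\shield^{b_j})$, and $\hat\gamma^{\otimes 2m}=4^{-m}\sum_b\sigma_b$ with $\sigma_b=\bigotimes_j(\psi^{b_j}\otimes\shield)$; joint convexity gives $D_\measurements(\gamma^{\otimes 2m}\|\hat\gamma^{\otimes 2m})\le 4^{-m}\sum_b D_\measurements(\rho_b\|\sigma_b)$. Reordering copies and correcting each ``$-$'' key (local key operations) identifies $D_\measurements(\rho_b\|\sigma_b)$ with $E(k_+(b),k_-(b))$, where $E(a,b):=D_\measurements\big((\psi^+\otimes\shieldplus)^{\otimes a}\otimes(\psi^+\otimes\shieldminus)^{\otimes b}\,\big\|\,(\psi^+\otimes\shield)^{\otimes(a+b)}\big)$. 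Appending further pairs can only increase $D_\measurements$ (append a trivial measurement), so $E(a,b)\le E(K,K)$ with $K=\max(a,b)$, while $D_\measurements\le D$ and $D(\shield^\pm\|\shield)\le\log 2$ give $E(a,b)\le a+b$. Since $k_+(b)\sim\mathrm{Bin}(2m,\tfrac12)$ under uniform $b$, $K=m(1+o(1))$ with overwhelming probability, on which event $\tfrac1{2m}E(K,K)=\tfrac{K}{m}\cdot\tfrac1{2K}E(K,K)\to\tfrac12\lim_p\tfrac1p E(p,p)$, while the rare complementary patterns contribute at most $\tfrac1{2m}(k_++k_-)=1$ times an exponentially small probability. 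Finally $\lim_p\tfrac1p E(p,p)=D^\infty_\measurements(\psi^{\otimes 2}\otimes\shieldplus\otimes\shieldminus\|\psi^{\otimes 2}\otimes\shield\otimes\shield)=D^\infty_{\measurements+\psi^2}(\shieldplus\otimes\shieldminus\|\shield\otimes\shield)$, after reordering $(\psi^{\otimes 2}\otimes\shieldplus\otimes\shieldminus)^{\otimes p}\cong(\psi^+\otimes\shieldplus)^{\otimes p}\otimes(\psi^+\otimes\shieldminus)^{\otimes p}$; and since the full regularization exists (super-additivity of $D_\measurements$ under tensoring, as for $D_\sep(\cdot\|\sepstates)$) the even subsequence has limit $D^\infty_\measurements(\gamma\|\hat\gamma)$, yielding the claim.

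\smallskip
The main obstacle is getting the factor $\tfrac12$ and the single term $\shieldplus\otimes\shieldminus$ right in the regularized bound: this is what forces working with an even number $2m$ of copies and seeing that, after dividing by $2m$, the average over the $2^{2m}$ sign patterns is dominated by the balanced ones, which are precisely those equal (up to reordering and key-corrections) to $(\psi^{\otimes 2}\otimes\shieldplus\otimes\shieldminus)^{\otimes m}$ against $(\psi^{\otimes 2}\otimes\shield\otimes\shield)^{\otimes m}$. Everything else --- joint convexity, monotonicity under appending subsystems, permutation- and local-key-unitary-invariance, existence of the regularizations --- is routine and constitutes the structural hypotheses one should place on the family $\measurements$ (all met by $\allmeasurements,\locc,\sep,\ppt$).
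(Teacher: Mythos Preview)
Your proposal is correct and follows essentially the same route as the paper: joint convexity plus a local phase-flip correction for the single-copy bound, then expansion of $\gamma^{\otimes n}$ into sign patterns, joint convexity again, and a typicality/binomial-concentration argument to reduce to the balanced pattern $\shieldplus\otimes\shieldminus$ for the regularized bound. Your write-up is in fact more explicit than the paper's own proof, which stops after restricting to typical sequences and leaves the final identification with $D^\infty_{\measurements+\psi^2}(\shieldplus\otimes\shieldminus\|\shield\otimes\shield)$ implicit; your monotonicity step $E(a,b)\le E(K,K)$ and the handling of the even subsequence spell this out cleanly.
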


\begin{proof}
    \Cref{lemma:pbithiding:singlecopy} follows straightforwardly from joint convexity of the relative entropy.
    \begin{align*} 
    D_\measurements(\gamma\parallel\hat\gamma) 
    & \leq \frac{1}{2} D_\measurements \left(\psi^+ \otimes \shieldplus  \parallel \psi^+ \otimes  \shield \right) 
    \\&\quad+
    \frac{1}{2} D_\measurements \left(\psi^- \otimes \shieldminus \parallel \psi^- \otimes  \shield \right) \\
    & = \frac{1}{2} D_\measurements \left(\psi   \otimes \shieldplus  \parallel \psi \otimes  \shield \right) 
    \\&\quad+
    \frac{1}{2} D_\measurements \left(\psi   \otimes \shieldminus \parallel \psi \otimes  \shield \right) \, .
    \end{align*}
    Indeed, the first inequality is by joint convexity of the restricted relative entropy. The last equality is by correcting the phase flips via a reversible local unitary, which can be done because the maximally entangled states are not in a mixture any more. 

    For all $n\in\N$ we have
    \begin{align*} 
    D&_\measurements^n(\gamma\parallel\hat\gamma) =
    \frac{1}{n} 
    D_\measurements(\gamma^{\otimes n} \| \hat \gamma ^{\otimes n})
    \\& \leq 
    \frac{1}{n} \frac{1}{2^{n}} \sum_{x_1,\mathrlap{\dots ,x_n=\pm}}
    D_\measurements(
    \psi^{n} \otimes \shield^{x_1} \otimes \dots \otimes \shield^{x_n} \| 
    \psi^{n} \otimes \shield^{\otimes n}) 
    \,,
    \end{align*}
    which we rewrite as 
    \begin{align*} 
    D_\measurements^n(\gamma\parallel\hat\gamma) 
    \leq 
    \frac{1}{n} \frac{1}{2^{n}} \sum_{\mathbf{x}\in\{\pm\}^n}
    D_\measurements(
    \psi^{\otimes n}\otimes{\shield}^{\otimes \mathbf{x}} \|
    \psi^{\otimes n}\otimes{\shield}^{\otimes n})
    \,.
    \end{align*}
    Notice that because $\shieldplusminus$ are orthogonal, we have $D(\shieldplusminus\|\shield)=1$, and thus for any $\mathbf{x}$ we have 
    \begin{align*} 
    D_\measurements(&
    \psi^{\otimes n}\otimes{\shield}^{\otimes \mathbf{x}} \|
    \psi^{\otimes n}\otimes{\shield}^{\otimes n}) 
    \\&\leq D(
    \psi^{\otimes n}\otimes{\shield}^{\otimes \mathbf{x}} \|
    \psi^{\otimes n}\otimes{\shield}^{\otimes n}) 
    \\&= D(
                           {\shield}^{\otimes \mathbf{x}} \|
                           {\shield}^{\otimes n}) 
    \\&= \sum\limits_{k=1}^n D({\shield}^{{x}_k} \| {\shield})
    \\&= n
    \,.
    \end{align*}
    \newcommand {\typical}{         {T}(n,\epsilon)}
    \newcommand{\atypical}{\overline{T}(n,\epsilon)}
    If we now denote with $\typical$ the set of $\epsilon$-typical sequences and with $\atypical$ its complement, we thus find that 
    \begin{align*}
    \frac{1}{n} \frac{1}{2^{n}} &\sum_{\mathbf{x}\in \mathrlap\atypical}
        D_\measurements(
        \psi^{\otimes n}\otimes{\shield}^{\otimes \mathbf{x}} \|
        \psi^{\otimes n}\otimes{\shield}^{\otimes n})
    \\&\leq
        \sum_{\mathbf{x}\in \mathrlap\atypical} 2^{-n} 
        = \probability(\atypical)
    \end{align*}
    Fixed $\epsilon>0$, for $n$ is large enough we have $\probability(\atypical)\leq \epsilon$, and thus
    \begin{align*} 
    D_\measurements^n(\gamma\parallel\hat\gamma) 
      &\leq 
    \frac{1}{n} \frac{1}{2^{n}} \sum_{\mathbf{x}\in \mathrlap\typical}
    D_\measurements(
    \psi^{\otimes n}\otimes{\shield}^{\otimes \mathbf{x}} \|
    \psi^{\otimes n}\otimes{\shield}^{\otimes n})
    \\& + 
    \frac{1}{n} \frac{1}{2^{n}} \sum_{\mathbf{x}\in \mathrlap\atypical}
    D_\measurements(
    \psi^{\otimes n}\otimes{\shield}^{\otimes \mathbf{x}} \|
    \psi^{\otimes n}\otimes{\shield}^{\otimes n})
    \\&\leq
    \frac{1}{n} \max_{\mathbf{x}\in \typical}
    D_\measurements(
    \psi^{\otimes n}\otimes{\shield}^{\otimes \mathbf{x}} \|
    \psi^{\otimes n}\otimes{\shield}^{\otimes n}) + \epsilon
    \,.
    \end{align*}
    
    Thus we can take $\epsilon =1/n$ and in the limit for large $n$ we find
    \begin{align*} 
    D_\measurements^\infty(\gamma\parallel\hat\gamma) 
      &=\lim_{n\to\infty} D_\measurements^n(\gamma\parallel\hat\gamma) 
    \\&\leq \lim_{n\to\infty}
    \frac{1}{n} \frac{1}{2^{n}} \sum_{\mathbf{x}\in \mathrlap{{T}(n,1/n)}}
    D_\measurements(
    \psi^{\otimes n}\otimes{\shield}^{\otimes \mathbf{x}} \|
    \psi^{\otimes n}\otimes{\shield}^{\otimes n})
    \,.
    \end{align*}

\end{proof}

The interpretation of the quantities appearing on the right hand side of \Cref{lemma:pbithiding:singlecopy} and \Cref{lemma:pbithiding:regularized} goes as follows. In the single-copy version, we have one bit of pure entanglement as a resource, that we can use to distinguish $\shieldplus$ from $\shieldminus$. To write this concisely we introduced the notation \(\measurements+\psi\), which stands for maps in \(\measurements\) assisted by one bit of entanglement for each state. In the regularized version, we are allowing two bits of entanglement for each copy of $\shieldplus\otimes\shieldminus$. Note that we can argue that this upper bound is always finite and at most $1$, because it is itself upper bounded by the global relative entropy. In the other direction, it can be shown to be always at least $2/\log d$, because with two bits of entanglement, Alice and Bob can perform global measurements on a $2/\log d$ share of the copies of $\shieldplus$ and $\shieldminus$. Hence summing up,
\[ \frac{2}{\log d} \leq D^\infty_{\measurements+\psi^2} 
\left(\shieldplus\otimes\shieldminus 
\middle\| \shield\otimes\shield \right) \leq 2\, . \]

\end{document}